\definecolor{USPNcobalt}{HTML}{293358}
\definecolor{USPNocre}{HTML}{8b7d6d}
\definecolor{USPNblanc}{HTML}{ffffff}
\definecolor{USPNceruleen}{HTML}{354878}
\definecolor{USPNsable}{HTML}{ad947e}
\newenvironment{inlineenumeration}
{\begin{inparaenum}[(\itshape i\upshape)]}
	{\end{inparaenum}}
\newcommand{\cellHeader}[0]{\cellcolor{USPNsable!40}\bfseries}
\newcommand{\rowHeader}{\rowcolor{USPNsable!40}\bfseries}
\newcommand{\cellYesOld}{\cellcolor{green!20}{$\surd$}}
\newcommand{\cellNoOld}{\cellcolor{red!20}{$\times$}}
\newcommand{\cellYesNew}{\cellcolor{green!35}\textbf{$\surd$}\bfseries}
\newcommand{\cellNoNew}{\cellcolor{red!35}\textbf{$\times$}\bfseries}
\newcommand{\cellOpen}{\cellcolor{yellow!20}\textbf{$?$}}
\newcommand{\cellCref}[1]{{\scriptsize (\cref{#1})}}
\tikzstyle{pta}=[auto, ->, >=stealth']
\tikzstyle{every node}=[initial text=]
\tikzstyle{location}=[rectangle, rounded corners, minimum size=12pt, draw=black, fill=blue!10, inner sep=2pt]
\tikzstyle{invariant}=[draw=black, dotted, inner sep=1pt, node distance=0] %
\tikzstyle{final}=[double, fill=blue!50]
\tikzstyle{private}=[fill=red,thick]
\definecolor{coloract}{rgb}{0.50, 0.70, 0.30}
\definecolor{colorclock}{rgb}{0.4, 0.4, 1}
\definecolor{colordisc}{rgb}{1, 0, 1}
\definecolor{colorloc}{rgb}{0.4, 0.4, 0.65}
\definecolor{colorparam}{rgb}{1, 0.6, 0.0}
\newcommand{\styleact}[1]{\ensuremath{\textcolor{coloract}{{\mathit{#1}}}}}
\newcommand{\styleclock}[1]{\ensuremath{\textcolor{colorclock}{{#1}}}}
\newcommand{\styleparam}[1]{\ensuremath{\textcolor{colorparam}{{#1}}}}
\newcommand{\textstyleact}[1]{\ensuremath{\mathit{#1}}}
\newcommand{\textstyleclock}[1]{\ensuremath{\mathit{#1}}}
\newcommand{\textstyleloc}[1]{\ensuremath{\mathrm{#1}}}
\newcommand{\textstyleparam}[1]{\ensuremath{{#1}}}
\newcommand{\defProblem}[3]
{%
	\noindent\fcolorbox{black}{USPNsable!20}{
	\begin{minipage}{.95\columnwidth}
		\textbf{#1:}\\
		\textsc{Input}: #2\\
		\textsc{Problem}: #3
	\end{minipage}
}

	\smallskip

}
\newcommand{\assign}{\leftarrow}
\newcommand{\checkUseMacro}[1]{#1}
\newcommand{\stylecode}[1]{\textcolor{colorloc}{\texttt{#1}}}
\newcommand{\set}[1]{\ensuremath{\left\{#1\right\}}}
\newcommand{\setN}{\ensuremath{\mathbb{N}}}\nomenclature[M]{\setN}{Set of integers}
\newcommand{\setQ}{\ensuremath{\mathbb{Q}}}
\newcommand{\setQgeqzero}{\ensuremath{\setQ_{\geq 0}}}\nomenclature[M]{\setQgeqzero}{Set of non-negative rationals}
\newcommand{\setR}{\ensuremath{\mathbb{R}}}
\newcommand{\setRgeqzero}{\ensuremath{\setR_{\geq 0}}}\nomenclature[M]{\setRgeqzero}{Set of non-negative reals}
\newcommand{\setRplusinf}{\ensuremath{\setR^{\infty}_{\geq 0}}}\nomenclature[M]{\setRplusinf}{$\setRgeqzero\cup\set{+\infty}$}
\newcommand{\setZ}{\ensuremath{\mathbb{Z}}}\nomenclature[M]{\setZ}{Set of (positive and negative) integers}
\newcommand{\compOp}{\bowtie}
\newcommand{\intpart}[1]{\ensuremath{\lfloor#1\rfloor}}\nomenclature[M]{\ensuremath{\intpart{\bullet}}}{Integral part}
\newcommand{\fract}[1]{\ensuremath{\text{fr}(#1)}}\nomenclature[M]{\ensuremath{\fract{\bullet}}}{Fractionnal part}
\newcommand{\deriv}[1]{\ensuremath{\dot{#1}}}\nomenclature[M]{\ensuremath{\deriv{\bullet}}}{Derivative}
\newcommand{\wordOmega}[1]{\ensuremath{{#1}^\omega}}\nomenclature[M]{\ensuremath{\wordOmega{\bullet}}}{Repetition of zero or more occurrences in a word}
\newcommand{\init}{\ensuremath{0}}
\newcommand{\priv}{\ensuremath{{\mathit{priv}}}}
\newcommand{\pub}{\ensuremath{{\mathit{pub}}}}
\newcommand{\final}{\ensuremath{f}}
\newcommand{\valuate}[2]{\ensuremath{#2(#1)}}
\newcommand{\styleAutomaton}[1]{\ensuremath{\mathcal{#1}}}
\newcommand{\clock}{\ensuremath{\textstyleclock{x}}}\newcommand{\sclock}{\styleclock{\clock}}\nomenclature[C]{\clock}{A clock} %
\newcommand{\clocki}[1]{\ensuremath{\textstyleclock{\clock_{#1}}}}
\newcommand{\ClockCard}{H} %
\newcommand{\clockval}{\ensuremath{\mu}}\nomenclature[C]{\clockval}{Clock valuation} %
\newcommand{\ClockSet}{\ensuremath{\mathbb{X}}} %
\newcommand{\ClocksZero}{\ensuremath{\vec{0}}}\nomenclature[C]{\ClocksZero}{Clock valuation assigning $0$ to all clocks}
\newcommand{\resets}{\ensuremath{R}}
\newcommand{\reset}[2]{\ensuremath{[#1]_{#2}}}
\newcommand{\param}{\ensuremath{\textstyleparam{p}}}\newcommand{\sparam}{\styleparam{\param}}\nomenclature[C]{\param}{A parameter} %
\newcommand{\parami}[1]{\ensuremath{\textstyleparam{\param_{#1}}}} %
\newcommand{\paramiLower}[1]{\ensuremath{\textstyleparam{\parami{#1}^l}}}
\newcommand{\paramiUpper}[1]{\ensuremath{\textstyleparam{\parami{#1}^u}}}
\newcommand{\ParamCard}{\ensuremath{M}} %
\newcommand{\pval}{\ensuremath{v}} %
\newcommand{\PVal}{\ensuremath{V}}
\newcommand{\ParamSet}{\ensuremath{\mathbb{P}}} %
\newcommand{\TA}{\ensuremath{\checkUseMacro{\styleAutomaton{A}}}}\nomenclature[P]{\TA}{\TAtextdef}
\newcommand{\TAprivextend}{\ensuremath{\left(\ActionSet, \LocSet, \locinit, \locpriv, \locfinal, \ClockSet, \invariant, \EdgeSet\right)}}
\newcommand{\action}{\ensuremath{\textstyleact{a}}}\nomenclature[P]{\action}{An action}
\newcommand{\ActionSet}{\ensuremath{\Sigma}}\nomenclature[P]{\ActionSet}{Set of actions}
\newcommand{\constraint}{\ensuremath{C}}
\newcommand{\edge}{\ensuremath{\checkUseMacro{e}}}\nomenclature[P]{\edge}{An edge}
\newcommand{\edgei}[1]{\ensuremath{\checkUseMacro{\edge_{#1}}}}
\newcommand{\EdgeSet}{\ensuremath{E}}\nomenclature[P]{\EdgeSet}{Set of edges}
\newcommand{\guard}{\ensuremath{g}}
\newcommand{\invariant}{\ensuremath{I}} %
\newcommand{\loc}{\ensuremath{\textstyleloc{\ell}}}\nomenclature[P]{\loc}{A location} %
\newcommand{\loci}[1]{\ensuremath{\textstyleloc{\loc_{#1}}}}
\newcommand{\locinit}{\ensuremath{\textstyleloc{\loc_\init}}}
\newcommand{\locfinal}{\ensuremath{\textstyleloc{\loc_\final}}}
\newcommand{\locpriv}{\ensuremath{\textstyleloc{\loc_\priv}}}
\newcommand{\locpub}{\ensuremath{\textstyleloc{\loc_\pub}}}
\newcommand{\LocSet}{\ensuremath{L}}\nomenclature[P]{\LocSet}{Set of locations} %
\newcommand{\longuefleche}[1]{\stackrel{#1}{\longrightarrow}}
\newcommand{\run}{\checkUseMacro{\rho}} %
\newcommand{\duration}{\ensuremath{\mathit{dur}}} %
\newcommand{\runduration}[1]{\ensuremath{\checkUseMacro{\duration}(#1)}}
\newcommand{\statesDuration}[3]{\ensuremath{\checkUseMacro{\duration}_{#1}(#2,#3)}}
\newcommand{\locationsDuration}[3]{\ensuremath{\checkUseMacro{\duration}_{#1}(#2,#3)}}
\newcommand{\LUPTA}{\ensuremath{\styleAutomaton{P_{LU}}}}\nomenclature[P]{\LUPTA}{\LUPTAtextdef{}}
\newcommand{\PTA}{\ensuremath{\checkUseMacro{\styleAutomaton{P}}}}\nomenclature[P]{\PTAtext{}}{\PTAtextdef{}}
\newcommand{\PTAprivextend}{\ensuremath{\left(\ActionSet, \LocSet, \locinit,\locpriv, \locfinal, \ClockSet, \ParamSet, \invariant, \EdgeSet\right)}}
\newcommand{\styleSymbStatesSet}[1]{\ensuremath{\mathbf{#1}}}
\newcommand{\symbstate}{\ensuremath{\checkUseMacro{\styleSymbStatesSet{s}}}}\nomenclature[P]{\symbstate}{A symbolic state}
\newcommand{\SymbStateSet}{\ensuremath{\styleSymbStatesSet{S}}}\nomenclature[P]{\SymbStateSet}{Set of symbolic states}
\newcommand{\semantics}[1]{\ensuremath{\mathfrak{T}_{#1}}}
\newcommand{\semanticsextend}{\ensuremath{\left(\StateSet, \concstateinit, \ActionSet \cup \setRgeqzero, \transition\right)}}
\newcommand{\transition}{{\ensuremath{\rightarrow}}}
\newcommand{\transitionWith}[1]{\stackrel{#1}{\mapsto}}
\newcommand{\StateSet}{\ensuremath{\mathfrak{S}}}
\newcommand{\concstate}{\ensuremath{\mathfrak{s}}}
\newcommand{\concstatei}[1]{\ensuremath{\concstate_{#1}}}
\newcommand{\concstateinit}{\ensuremath{\concstate_\init}}
\newcommand{\bflag}{\ensuremath{b}}
\newcommand{\paramd}{\textstyleparam{\ensuremath{d}}}
\newcommand{\expiringBound}{\checkUseMacro{\ensuremath{\Delta}}}\nomenclature[P]{\expiringBound}{Expiring bound considered in \vref{sec:opacity:ICECCS}}
\newcommand{\setBound}{\ensuremath{\mathcal{D}}}\nomenclature[P]{\setBound}{Set of expiring bounds} %
\newcommand{\PrivVisit}[1]{\ensuremath{\mathit{Visit}^{\mathit{priv}}(#1)}}
\newcommand{\PubVisit}[1]{\ensuremath{\overline{\mathit{Visit}}^{\mathit{priv}}(#1)}}
\newcommand{\PrivDurVisit}[1]{\ensuremath{\mathit{DVisit}^\mathit{priv}(#1)}}
\newcommand{\PubDurVisit}[1]{\ensuremath{D\overline{\mathit{Visit}}^{\mathit{priv}}(#1)}}
\newcommand{\TempSupPrivVisit}[2]{\ensuremath{\mathit{Visit}^{\mathit{priv}}_{>#2}(#1)}}
\newcommand{\TempInfPrivVisit}[2]{\ensuremath{\mathit{Visit}^{\mathit{priv}}_{\leq#2}(#1)}}
\newcommand{\TempSupPrivDurVisit}[2]{\ensuremath{\mathit{DVisit}^\mathit{priv}_{>#2}(#1)}}
\newcommand{\TempInfPrivDurVisit}[2]{\ensuremath{\mathit{DVisit}^\mathit{priv}_{\leq#2}(#1)}}
\newcommand{\execTimeText}{\checkUseMacro{execution time}}
\newcommand{\execTimesText}{\checkUseMacro{execution times}}
\newcommand{\execTimes}{\ensuremath{D}}\nomenclature[P]{\execTimes}{Set of \execTimesText{}}
\newcommand{\imitator}{\textsf{IMITATOR}}
\newcommand{\textof}[1]{\ac{#1}} 					%
\newcommand{\textofPlural}[1]{\acp{#1}} 			%
\newcommand{\Textof}[1]{\Ac{#1}} 					%
\newcommand{\TextofPlural}[1]{\Acp{#1}} 			%
\newcommand{\titleSecTextOfPlural}[1]{\aclp*{#1}} 	%
\newcommand{\titleFigTextOf}[1]{\acs*{#1}} 			%
\newcommand{\titleFigTextOfPlural}[1]{\acsp*{#1}} 	%
\newcommand{\titleDefTextOf}[1]{\Acl*{#1}} 		%
\newcommand{\fullTextOf}[1]{\acf*{#1}} 			%
\DeclareAcronym{pta}{
	short=PTA,
	long=parametric timed automaton,
	short-plural=s,
	long-plural-form=parametric timed automata,
	cite=AHV93,
	extra={\vref{def:PTA}},
	tag={models}
}
\newcommand{\PTAtext}{\textof{pta}}
\newcommand{\PTAstext}{\textofPlural{pta}}
\newcommand{\PTAstextForSec}{\titleSecTextOfPlural{pta}}
\newcommand{\PTAtextForFig}{\acs*{pta}}
\newcommand{\PTAtextdef}{\titleDefTextOf{pta}}
\DeclareAcronym{lupta}{
	short=L/U-PTA,
	long=lower/upper parametric timed automaton,
	short-plural=s,
	long-plural-form=lower/upper parametric timed automata,
	short-indefinite={an},
	long-indefinite={a},
	cite=HRSV02,
	extra={\vref{def:LUPTA}},
	tag={models}
}
\newcommand{\LUPTAtext}{\iac{lupta}}
\newcommand{\LUPTAstext}{\textofPlural{lupta}}
\newcommand{\LUPTAstextForSec}{\titleSecTextOfPlural{lupta}}
\newcommand{\LUPTAstextForFig}{\titleFigTextOfPlural{lupta}}
\newcommand{\LUPTAtextdef}{\titleDefTextOf{lupta}}
\DeclareAcronym{upta}{
	short=U-PTA,
	long=upper-bound parametric timed automaton,
	short-plural=s,
	long-plural-form=upper-bound parametric timed automata,
	short-indefinite={a},
	long-indefinite={an},
	cite=BlT09,
	tag={models}
}
\newcommand{\UPTAtext}{\iac{upta}}
\newcommand{\UPTAstext}{\acp{upta}}
\DeclareAcronym{ta}{
	short=TA,
	long=timed automaton,
	short-plural=s,
	long-plural-form=timed automata,
	cite=AD94,
	extra={\vref{def:TA}},
	tag={models}
}
\newcommand{\TAtext}{\textof{ta}}
\newcommand{\TAstext}{\textofPlural{ta}}
\newcommand{\TAstextUpper}{\TextofPlural{ta}}
\newcommand{\TAtextForFig}{\titleFigTextOf{ta}}
\newcommand{\TAtextdef}{\titleDefTextOf{ta}}
\DeclareAcronym{ppta}{
	short=(P)TA,
	long=(possibly parametric) \acs*{ta},
	short-plural=s,
	long-plural-form=(possibly parametric) \acsp*{ta},
}
\DeclareAcronym{ipta}{
	short=IPTA,
	long=\imitator{} \ac*{pta},
	short-plural=s,
	long-plural-form=\imitator{} \acp*{pta},
	cite=Andre21,
	extra={\vref{sec:library:pta}},
	tag={models}
}
\DeclareAcronym{tga}{
	short=TGA,
	long=timed game automaton,
	short-plural=s,
	long-plural-form=timed game automata,
	cite=MPS95,
	tag={models}
}
\DeclareAcronym{pga}{
	short=PGA,
	long=parametric timed game automaton,
	short-plural=s,
	long-plural-form=parametric timed game automata,
	cite=JLR19,
	tag={models}
}
\DeclareAcronym{probTA}{
	short=PrTA,
	long=probabilistic timed automaton,
	short-plural=s,
	long-plural-form=probabilistic timed automata,
	cite=Beauquier03,
	tag={models}
}
\DeclareAcronym{lts}{
	short=LTS,
	long=labeled transition system,
	short-plural=s,
	long-plural-form=labeled transition systems,
	short-indefinite={an},
	long-indefinite={a},
	cite=Keller76,
	extra={\vref{def:LTS}},
	tag={models}
}
\DeclareAcronym{dfa}{
	short=DFA,
	long=deterministic finite-state automaton,
	short-plural=s,
	long-plural-form=deterministric finite-state automata,
	extra={\vref{def:DFA}},
	tag={models}
}
\DeclareAcronym{tts}{
	short=TTS,
	long=timed transition system,
	short-plural=s,
	long-plural-form=timed transition systems,
	cite=HMP91,
	extra={\vref{def:TTS}},
	tag={models}
}
\newcommand{\TTStext}{\textof{tts}}
\DeclareAcronym{era}{
	short=ERA,
	long=event-recording automaton,
	short-plural=s,
	long-plural-form=event-recording automata,
	cite=AFH99,
	tag={models}
}
\newcommand{\ERAstext}{\textofPlural{era}}
\DeclareAcronym{rta}{
	short=RTA,
	long=real-time automaton,
	short-plural=s,
	long-plural-form=real-time automata,
	tag={models}
}
\DeclareAcronym{pn}{
	short=PN,
	long=Petri net,
	short-plural=s,
	long-plural-form=Petri nets,
	cite=Pet62,
	tag={models}
}
\DeclareAcronym{ha}{
	short=HA,
	long=hybrid automaton,
	short-plural=s,
	long-plural-form=hybrid automata,
	cite=Henzinger96,
	tag={models}
}
\DeclareAcronym{mra}{
	short=MRA,
	long=multi-rate automaton,
	short-plural=s,
	long-plural-form=multi-rate automata,
	cite=ACHHHNOSY95,
	tag={models}
}
\DeclareAcronym{pzg}{
	short=PZG,
	long=parametric zone graph,
	extra={\vref{def:PTA:symbolic}},
	tag={misc}
}
\DeclareAcronym{opacity}{
	short=ET-opacity,
	long=execution-time opacity,
	tag={notion},
	extra={\vref{def:opacity:TOSEM:ET-opacity}},
	post={\acuse{opaque}}
}
\DeclareAcronym{opaque}{
	short=ET-opaque,
	long=execution-time opaque,
	tag={notion}, %
	post={\acuse{opacity}}
}
\newcommand{\opaqueText}{\textof{opaque}}
\newcommand{\opacityText}{\textof{opacity}}
\newcommand{\opaqueTextdef}{\fullTextOf{opaque}}
\newcommand{\opacityTextForSec}{\textof{opacity}}
\newcommand{\existentialOpaqueText}{\checkUseMacro{\ensuremath{\exists}-\acs*{opaque}}}
\newcommand{\existentialOpacityText}{\checkUseMacro{\ensuremath{\exists}-\acs*{opacity}}}
\newcommand{\weakOpaqueText}{\checkUseMacro{weakly \acs*{opaque}}}
\newcommand{\weakOpacityText}{\checkUseMacro{weak \acs*{opacity}}}
\newcommand{\fullOpaqueText}{\checkUseMacro{fully \acs*{opaque}}}
\newcommand{\fullOpacityText}{\checkUseMacro{full \acs*{opacity}}}
\newcommand{\WeakOpacityText}{\checkUseMacro{Weak \acs*{opacity}}}
\newcommand{\FullOpacityText}{\checkUseMacro{Full \acs*{opacity}}}
\newcommand{\existentialOpacityTextForSec}{\checkUseMacro{\ensuremath{\exists}-\acs*{opacity}}}
\newcommand{\weakOpacityTextForSec}{\checkUseMacro{weak \acs*{opacity}}}
\newcommand{\fullOpacityTextForSec}{\checkUseMacro{full \acs*{opacity}}}
\DeclareAcronym{tempopacity}{
	short=exp-\acs*{opacity},
	long=expiring \acl*{opacity},
	tag={notion},
	extra={\vref{def:opacity:ICECCS:temporary-timed-opacity}},
	post={\acuse{tempopaque}}
}
\DeclareAcronym{tempopaque}{
	short=exp-\acs*{opaque},
	long=expiring \acl*{opaque},
	tag={notion}, %
	post={\acuse{tempopacity}}
}
\newcommand{\tempOpaqueText}{\checkUseMacro{\textof{tempopaque}}}
\newcommand{\tempOpacityText}{\checkUseMacro{\textof{tempopacity}}}
\newcommand{\TempOpacityTextForSec}{\Textof{tempopacity}}
\newcommand{\tempOpaqueTextVal}[1]{\checkUseMacro{(\ensuremath{\leq}~\ensuremath{#1})-\acs*{opaque}}}
\newcommand{\tempOpacityTextVal}[1]{\checkUseMacro{(\ensuremath{\leq}~\ensuremath{#1})-\acs*{opacity}}}
\newcommand{\existentialTempOpaqueText}{\checkUseMacro{\ensuremath{\exists}-\acs*{tempopaque}}}
\newcommand{\existentialTempOpacityText}{\checkUseMacro{\ensuremath{\exists}-\acs*{tempopacity}}}
\newcommand{\weakTempOpaqueText}{\checkUseMacro{weakly \acs*{tempopaque}}}
\newcommand{\weakTempOpacityText}{\checkUseMacro{weak \acs*{tempopacity}}}
\newcommand{\fullTempOpaqueText}{\checkUseMacro{fully \acs*{tempopaque}}}
\newcommand{\fullTempOpacityText}{\checkUseMacro{full \acs*{tempopacity}}}
\newcommand{\weakfullTempOpaqueText}{\checkUseMacro{fully (resp.\ weakly) \acs*{tempopaque}}}
\newcommand{\weakfullTempOpacityText}{\checkUseMacro{full (resp.\ weak) \acs*{tempopacity}}}
\newcommand{\WeakfullTempOpacityText}{\checkUseMacro{Full (resp.\ weak) \acs*{tempopacity}}}
\newcommand{\existentialTempOpaqueTextVal}[1]{\checkUseMacro{\ensuremath{\exists}-\tempOpaqueTextVal{#1}}}
\newcommand{\weakTempOpaqueTextVal}[1]{\checkUseMacro{weakly \tempOpaqueTextVal{#1}}}
\newcommand{\weakTempOpacityTextVal}[1]{\checkUseMacro{weak \tempOpacityTextVal{#1}}}
\newcommand{\fullTempOpaqueTextVal}[1]{\checkUseMacro{fully \tempOpaqueTextVal{#1}}}
\newcommand{\fullTempOpacityTextVal}[1]{\checkUseMacro{full \tempOpacityTextVal{#1}}}
\newcommand{\weakfullTempOpaqueTextVal}[1]{\checkUseMacro{fully (resp.\ weakly) \tempOpaqueTextVal{#1}}}
\newcommand{\decisionProblem}[1]{#1 decision problem} %
\newcommand{\fullOpacityDecisionProblem}{\checkUseMacro{\decisionProblem{\fullOpacityText{}}}} %
\newcommand{\existentialOpacityDecisionProblem}{\checkUseMacro{\decisionProblem{\existentialOpacityText{}}}} %
\newcommand{\weakOpacityDecisionProblem}{\checkUseMacro{\decisionProblem{\weakOpacityText{}}}}
\newcommand{\FullOpacityDecisionProblem}{\checkUseMacro{\decisionProblem{\FullOpacityText{}}}} %
\newcommand{\WeakOpacityDecisionProblem}{\checkUseMacro{\decisionProblem{\WeakOpacityText{}}}}
\newcommand{\computationProblem}[1]{#1 t-computation problem} %
\newcommand{\opacityExecutionTimesComputationProblem}{\checkUseMacro{\computationProblem{\acs*{opacity}}}} %
\newcommand{\synthesisProblem}[2]{#1 #2-synthesis problem} %
\newcommand{\existentialOpacityParamSynthesisProblem}{\checkUseMacro{\synthesisProblem{\existentialOpacityText{}}{p}}} %
\newcommand{\fullOpacityParamSynthesisProblem}{\checkUseMacro{\synthesisProblem{\fullOpacityText{}}{p}}} %
\newcommand{\weakOpacityParamSynthesisProblem}{\checkUseMacro{\synthesisProblem{\weakOpacityText{}}{p}}} %
\newcommand{\FullOpacityParamSynthesisProblem}{\checkUseMacro{\synthesisProblem{\FullOpacityText{}}{p}}}
\newcommand{\WeakOpacityParamSynthesisProblem}{\checkUseMacro{\synthesisProblem{\WeakOpacityText{}}{p}}}
\newcommand{\emptinessProblemOneParam}[1]{#1-emptiness problem}
\newcommand{\emptinessProblem}[2]{\emptinessProblemOneParam{#1 #2}}
\newcommand{\existentialOpacityParamEmptinessProblem}{\checkUseMacro{\emptinessProblem{\existentialOpacityText{}}{p}}}
\newcommand{\fullOpacityParamEmptinessProblem}{\checkUseMacro{\emptinessProblem{\fullOpacityText{}}{p}}}
\newcommand{\weakOpacityParamEmptinessProblem}{\checkUseMacro{\emptinessProblem{\weakOpacityText{}}{p}}}
\newcommand{\FullOpacityParamEmptinessProblem}{\checkUseMacro{\emptinessProblem{\FullOpacityText{}}{p}}}
\newcommand{\WeakOpacityParamEmptinessProblem}{\checkUseMacro{\emptinessProblem{\WeakOpacityText{}}{p}}}
\newcommand{\weakTempDecisionProblem}{\decisionProblem{\weakTempOpacityText{}}}
\newcommand{\fullTempDecisionProblem}{\decisionProblem{\fullTempOpacityText{}}}
\newcommand{\weakfullTempDecisionProblem}{\decisionProblem{\weakfullTempOpacityText{}}}
\newcommand{\existentialTempDecisionProblem}{\decisionProblem{\existentialTempOpacityText{}}}
\newcommand{\WeakfullTempDecisionProblem}{\decisionProblem{\WeakfullTempOpacityText{}}}
\newcommand{\weakTempBoundEmptinessProblem}{\emptinessProblem{\weakTempOpacityText{}}{\expiringBound}}
\newcommand{\fullTempBoundEmptinessProblem}{\emptinessProblem{\fullTempOpacityText{}}{\expiringBound}}
\newcommand{\WeakfullTempBoundEmptinessProblem}{\emptinessProblem{\WeakfullTempOpacityText{}}{\expiringBound}}
\newcommand{\deltaAndP}{\ensuremath{\expiringBound}-p}
\newcommand{\weakTempBoundParametersEmptinessProblem}{\emptinessProblem{\weakTempOpacityText{}}{\deltaAndP}}
\newcommand{\weakfullTempBoundParametersEmptinessProblem}{\emptinessProblem{\weakfullTempOpacityText{}}{\deltaAndP}}
\newcommand{\WeakfullTempBoundParametersEmptinessProblem}{\emptinessProblem{\WeakfullTempOpacityText{}}{\deltaAndP}}
\newcommand{\fullTempBoundParametersSynthesisProblem}{\synthesisProblem{\fullTempOpacityText{}}{\deltaAndP}}
\newcommand{\weakfullTempBoundParametersSynthesisProblem}{\synthesisProblem{\weakfullTempOpacityText{}}{\deltaAndP}}
\newcommand{\WeakfullTempBoundParametersSynthesisProblem}{\synthesisProblem{\WeakfullTempOpacityText{}}{\deltaAndP}}
\newcommand{\boundComputationProblem}[1]{#1 \expiringBound-computation problem} %
\newcommand{\weakTempBoundComputationProblem}{\boundComputationProblem{\weakTempOpacityText{}}}
\newcommand{\fullTempBoundComputationProblem}{\boundComputationProblem{\fullTempOpacityText{}}}
\newcommand{\WeakfullTempBoundComputationProblem}{\boundComputationProblem{\WeakfullTempOpacityText{}}}
\newcommand{\expirationDateText}{expiration date}
\newcommand{\expirationDatesText}{expiration dates}
\newcommand{\ComplexityFont}[1]{{\sffamily\upshape #1}}
\newcommand{\FiveEXPTIME}{\ComplexityFont{5EXPTIME}\xspace}
\newcommand{\NEXPTIME}{\ComplexityFont{NEXPTIME}\xspace}
\newcommand{\PSPACE}{\ComplexityFont{PSPACE}\xspace}
\newcommand{\NLOGSPACE}{\ComplexityFont{NLOGSPACE}\xspace}
\definecolor{colorok}{RGB}{0,0,0}
\newcommand{\eg}{e.g.,\xspace}
\newcommand{\etal}{\emph{et al.}}
\newcommand{\ie}{i.e.,\xspace}
\newcommand{\st}{s.t.}
\newcommand{\wrt}{w.r.t.}
\theoremstyle{plain}
\newtheorem{proposition}{Proposition}
\newtheorem{theorem}{Theorem}
\newtheorem{corollary}{Corollary}
\theoremstyle{definition}
\newtheorem{definition}{Definition}
\newtheorem{example}{Example}
\theoremstyle{remark}
\newtheorem{remark}{Remark}
\title{Configuring Timing Parameters to Ensure Execution-Time Opacity in Timed Automata\thanks{This work is partially supported by the ANR-NRF French-Singaporean research program \href{https://www.loria.science/ProMiS/}{ProMiS} (ANR-19-CE25-0015 / 2019 ANR NRF 0092) and by ANR BisoUS (ANR-22-CE48-0012).}}
\author{Étienne André
\institute{Université Sorbonne Paris Nord, LIPN, CNRS UMR 7030\\F-93430 Villetaneuse, France}
\and
Engel Lefaucheux
\institute{Université de Lorraine, CNRS, Inria, LORIA\\
  F-54000 Nancy, France}
\and
Didier Lime
\institute{Nantes Université, École Centrale Nantes, CNRS, LS2N, UMR 6004\\F-44000 Nantes, France}
\and
Dylan Marinho
\institute{Université de Lorraine, CNRS, Inria, LORIA\\
  F-54000 Nancy, France}
\and
Jun Sun
\institute{School of Computing and Information Systems\\Singapore Management University}
}
\newcommand{\titlerunning}{Configuring Timing Parameters to Ensure Opacity in Timed Automata}
\newcommand{\authorrunning}{É.\ André, E.\ Lefaucheux, D.\ Lime, D.\ Marinho \& J.\ Sun}
\begin{document}
\maketitle

\begin{abstract}
Timing information leakage occurs whenever an attacker successfully deduces confidential internal information by observing some timed information such as events with timestamps.
Timed automata are an extension of finite-state automata with a set of clocks evolving linearly and that can be tested or reset, making this formalism able to reason on systems involving concurrency and timing constraints.
In this paper, we summarize a recent line of works using timed automata as the input formalism, in which we assume that the attacker has access (only) to the system execution time.
First, we address the following execution-time opacity problem: given a timed system modeled by a timed automaton, given a secret location and a final location, synthesize the execution times from the initial location to the final location for which one cannot deduce whether the secret location was visited.
This means that for any such execution time, the system is opaque: either the final location is not reachable, or it is reachable with that execution time for both a run visiting and a run not visiting the secret location.
We also address the full execution-time opacity problem, asking whether the system is opaque for all execution times; we also study a weak counterpart.
Second, we add timing parameters, which are a way to configure a system: we identify a subclass of parametric timed automata with some decidability results. In addition, we devise a semi-algorithm for synthesizing timing parameter valuations guaranteeing that the resulting system is opaque.
Third, we report on problems when the secret has itself an expiration date, thus defining expiring execution-time opacity problems.
We finally show that our method can also apply to program analysis with configurable internal timings.
\end{abstract}

\section{Introduction}\label{section:introduction}
Complex timed systems often combine hard real-time constraints with concurrency. %
Information leakage, notably through side channels (see, \eg{} \cite{DKMR15,PBPMB17}), can have dramatic consequences on the security of such systems.
Among harmful information leaks, the \emph{timing information leakage} (see, \eg{} \cite{DXS18,GWW18,WGSW18,TCT19,VNNK19}) is the ability for an attacker to deduce internal information depending on observable timing information.
In this paper, we focus on timing leakage through the total execution time, \ie{} when a system works as an almost black-box and the ability of the attacker is limited to know the model and observe the total execution time.
We consider here the formalism of \TAstext{}, which is a popular extension of finite-state automata with clocks measuring time, \ie{} variables evolving linearly at the same rate.
Such clocks can be tested against integer constants in locations (``invariants'') or along transitions (``guards''), and can be reset to~$0$ when taking transitions.

\paragraph{Context and related works}
Franck~Cassez proposed in~\cite{Cassez09} a first definition of \emph{timed} opacity for \TAstext{}: the system is opaque if an attacker can never deduce whether some sequence of actions (possibly with timestamps) was performed, by only observing a given set of observable actions together with their timestamp.
It is then proved in~\cite{Cassez09} that it is undecidable whether a \TAtext{} is opaque, even for the restricted class of event-recording automata~\cite{AFH99} (a subclass of~\TAstext{}).
This notably relates to the undecidability of timed language inclusion for \TAstext{}~\cite{AD94}.
Security problems for \TAstext{} are surveyed in~\cite{AA23survey}.

The aforementioned negative result leaves hope only if the definition or the setting is changed, which was done in three main lines of works.
The different studied options were to reduce the expressiveness of the formalism~\cite{WZ18,WZA18}, to constrain the system to evolve in a time-bounded setting~\cite{AEYM21} or to consider a weaker attacker, who has access only to the \emph{execution time}~\cite{ALMS22,ALM23}, rather than to all observable actions with their timestamps.
We present here a summary of our recent works in this latter setting~\cite{ALMS22,ALM23}.

\paragraph{Contributions}
In the setting of \TAstext{}, we denote by \emph{execution time} the time from the system start to the time a given (final) location is entered.
Therefore, given a secret location, a \TAtext{} is \opaqueText{} for an execution time~$d$ if there exist at least two runs of duration~$d$ from the initial location to a final location: one visiting the secret location, and another one \emph{not} visiting the secret location.
In other words, if an attacker measures such an execution time from the initial location to the target location~$\locfinal$, then this attacker is not able to deduce whether the system visited~$\locpriv$.
Deciding whether at least one such~$d$ exists can be seen as an \emph{existential} version of \opacityText{} (called \existentialOpacityText{}).

Then, a \TAtext{} is \emph{\fullOpaqueText{}} if it is \opaqueText{} \emph{for all execution times}: that is, for each possible execution time~$d$, either the final location is unreachable, or the final location is reachable for at least two runs, one visiting the secret location, and another one not visiting it.
We define a \emph{weak} version of \opacityText{} by only requiring that runs visiting the secret location on the way to the final location have a counterpart of the same duration not visiting the secret location on the way to the final location, but not necessarily the opposite:
the \TAtext{} is \emph{weakly \opaqueText{}} if for each run visiting the secret location, there exists a run not visiting it with the same duration; the dual does not necessarily hold.

We also consider an \emph{expiring version} of \opacityText{}, where the secret is subject to an expiration date~$\expiringBound$.
That is, we consider that an attack is successful only when the attacker can decide that the secret location was visited less than $\expiringBound$ time units before the system completion.
Conversely, if the attacker exhibits an execution time~$d$ for which it is certain that the secret location was visited, but this location was visited strictly more than~$\expiringBound$ time units prior to the system completion, then this attack is useless, and can be seen as a failed attack.
The system is therefore \emph{fully expiring \opaqueText{}} %
 if the set of execution times for which the private location was visited within $\expiringBound$ time units prior to system completion (referred as ``secret times'') is exactly equal to the set of execution times for which the private location was either not visited or visited more than $\expiringBound$ time units prior to system completion (referred as ``non-secret times'').
Moreover, it is \emph{weakly expiring \opaqueText{}}
 when the inclusion of the secret times into the non-secret ones is verified---and not necessarily the dual.

Finally, we study the aforementioned problems for a \emph{parametric} extension of \TAstext{}, \ie{} \PTAstext{}, where integer constants compared to clocks can be made (rational-valued) timing parameters, \ie{} unknown constants.
Interesting problems include emptiness problems, \ie{} the emptiness of the parameter valuations set such that (expiring) \opacityText{} holds, and synthesis, \ie{} the synthesis of all parameter valuations such that (expiring) \opacityText{} holds.

\paragraph{About this manuscript}
This manuscript mainly summarizes results from two recent works, providing unified notations and concept names for the sake of consistency:
\begin{enumerate}
	\item defining and studying \opacityText{} problems~\cite{ALMS22} in \TAstext{} (\cref{section:opacity:TA}) and \PTAstext{} (\cref{section:opacity:PTA}); these notions from~\cite{ALMS22} are presented differently (including the problem names) in this paper for sake of consistency; and
	\item defining and studying \tempOpacityText{} problems~\cite{ALM23} in both \TAstext{} and \PTAstext{} (\cref{section:opacity:expiring}).
\end{enumerate}
In addition, we prove a few original results on \weakOpacityText{} (that were not addressed in~\cite{ALMS22} because we had not yet defined the concept of \weakOpacityText{} when writing~\cite{ALMS22}) and on \tempOpacityText{}.
These original results are \cref{prop:opacity:TOSEM:E-ET-decision,prop:opacity:TOSEM:W-ET-decision,prop:opacity:TOSEM:W-ET-pemptiness-PTA,prop:opacity:TOSEM:W-ET-pemptiness-LUPTA,thm:opacity:exist:TAtempop}.

In \cref{tab:resultsRecallTOSEM,tab:resultsRecallICECCS}, we summarize the decidability results recalled in this paper for \opacityText{} and \tempOpacityText{}.
We denote a problem with a green check if it is decidable, with a red cross if it is undecidable, and with a yellow question mark if it is open (or not considered in the aforementioned papers~\cite{ALMS22,ALM23}).
We emphasize using a bold font the original results of this paper.
The p-emptiness (resp.\ p-synthesis) problem asks for the synthesis (resp.\ for the non-existence) of a parameter valuation for which \opacityText{} is enforced.
The \deltaAndP{}-synthesis (resp.\ emptiness) problem asks for the synthesis (resp.\ for the non-existence) of a parameter valuation and an expiring bound $\expiringBound$ for which the \tempOpacityText{} is enforced.
L/U-PTA denote the lower-bound/upper-bound parametric timed automata~\cite{HRSV02} subclass of \PTAstext{}.
These notions will be formally defined in the paper.

\begin{table}[tb]
	\centering
	\caption{Summary of the results for \opacityText{}~\cite{ALMS22}}
	\begin{tabular}{|p{.2\textwidth} c | p{.2\textwidth} | p{.2\textwidth}| p{.2\textwidth}|}
		\hline
		\multicolumn{2}{|c|}{\cellHeader{}}                       & \cellHeader{\existentialOpaqueText{}}                        & \cellHeader{\weakOpaqueText{}} & \cellHeader{\fullOpaqueText{}}                               \\ \hline
		Decision                            & {\tiny \TAtext{}}   & \cellYesOld\cellCref{prop:opacity:TOSEM:E-ET-decision}           & \cellYesNew\cellCref{prop:opacity:TOSEM:W-ET-decision} & \cellYesOld \cellCref{prop:opacity:TOSEM:F-ET-decision}         \\ \hline
		\multirow{2}{*}{p-emptiness} & {\tiny \acs{lupta}} & \cellYesOld  \cellCref{prop:opacity:TOSEM:E-ET-pemptiness-LUPTA} & \cellNoNew  \cellCref{prop:opacity:TOSEM:W-ET-pemptiness-LUPTA} & \cellNoOld  \cellCref{prop:opacity:TOSEM:F-ET-pemptiness-LUPTA} \\
		& {\tiny \PTAtext{}}  & \cellNoOld \cellCref{prop:opacity:TOSEM:E-ET-pemptiness-PTA}     & \cellNoNew \cellCref{prop:opacity:TOSEM:W-ET-pemptiness-PTA} & \cellNoOld \cellCref{prop:opacity:TOSEM:F-ET-pemptiness-PTA}    \\ \hline
		\multirow{2}{*}{p-synthesis} & {\tiny \acs{lupta}} & \cellNoOld \cellCref{prop:opacity:TOSEM:E-ET-psynthesis-LUPTA}   & \cellNoNew\cellCref{prop:opacity:TOSEM:W-ET-psynthesis-LUPTA} & \cellNoOld\cellCref{thm:opacity:TOSEM:synthesisLUPTAfull}                                                      \\
		& {\tiny \PTAtext{}}  & \cellNoOld \cellCref{thm:opacity:TOSEM:synthesisPTA}                                                       & \cellNoNew\cellCref{prop:opacity:TOSEM:W-ET-psynthesis-PTA} & \cellNoOld\cellCref{thm:opacity:TOSEM:synthesisPTAfull}                                                      \\ \hline
	\end{tabular}
	\label{tab:resultsRecallTOSEM}
\end{table}
\begin{table}[tb]
	\centering
	\caption{Summary of the results for \tempOpacityText{}~\cite{ALM23}}
	\newcommand{\lengthFirstColumn}{.2\textwidth}
	\begin{tabular}{|p{\lengthFirstColumn} c | p{.2\textwidth} | p{.2\textwidth} | p{.2\textwidth} | }
		\hline
		\multicolumn{2}{|c|}{\cellHeader{}}                                                          & \cellHeader{\existentialTempOpaqueText{}} & \cellHeader{\weakTempOpaqueText{}}                      & \cellHeader{\fullTempOpaqueText{}}                         \\
\hline
		Decision                            & {\tiny \TAtext{}}   & \cellYesNew\cellCref{thm:opacity:exist:TAtempop}	           & \cellYesOld\cellCref{thm:opacity:ICECCS:TAtempop}	 & \cellYesOld \cellCref{thm:opacity:ICECCS:TAtempop}	         \\
		\hline
		$\expiringBound$-emptiness                                       & \multirow{2}{*}{\tiny TA} & \cellOpen                                 & \cellYesOld \cellCref{thm:opacity:ICECCS:weakEmptiness} & \cellYesOld \cellCref{thm:opacity:ICECCS:opacityEmptiness} \\
		$\expiringBound$-computation                                     &                           & \cellOpen                                 & \cellYesOld \cellCref{thm:opacity:ICECCS:weakcomp}      & \cellOpen                                                  \\ \hline
		\multirow{2}{\lengthFirstColumn}{\deltaAndP{}-emptiness} & {\tiny L/U-PTA}           & \cellOpen                                 & \cellNoOld \cellCref{thm:opacity:ICECCS:EmptinessLU}    & \cellNoOld \cellCref{thm:opacity:ICECCS:EmptinessLU}       \\
		                                                                 & {\tiny PTA}               & \cellOpen                                 & \cellNoOld \cellCref{thm:opacity:ICECCS:EmptinessPTA}   & \cellNoOld \cellCref{thm:opacity:ICECCS:EmptinessPTA}      \\ \hline
		\multirow{2}{\lengthFirstColumn}{\deltaAndP{}-synthesis} & {\tiny L/U-PTA}           & \cellOpen                                 & \cellNoOld  \cellCref{thm:opacity:ICECCS:synthesisLU}   & \cellNoOld \cellCref{thm:opacity:ICECCS:EmptinessLU}       \\
		                                                                 & {\tiny PTA}               & \cellOpen                                 & \cellNoOld \cellCref{thm:opacity:ICECCS:synthesisPTA}   & \cellNoOld \cellCref{thm:opacity:ICECCS:synthesisPTA}      \\ \hline
	\end{tabular}

	\label{tab:resultsRecallICECCS}
\end{table}

\paragraph{Outline}
\cref{section:preliminaries} recalls the necessary preliminaries, notably (parametric) timed automata.
\cref{section:opacity:TA} defines and reviews execution-time opacity problems in timed automata.
\cref{section:opacity:PTA} defines and reviews execution-time opacity problems in timed automata.
\cref{section:opacity:expiring} defines and reviews \emph{expiring} execution-time opacity problems in (parametric) timed automata.
\cref{section:implementation} briefly reports on our existing implementation of some of the problems using the parametric timed model checker \imitator{}~\cite{Andre21}.
\cref{section:conclusion} concludes the paper and reports on perspectives.

\section{Preliminaries}\label{section:preliminaries}

We denote by $\setN, \setZ, \setQgeqzero, \setRgeqzero$ the sets of non-negative integers, integers, non-negative rationals and non-negative reals, respectively.

\subsection{Clocks, parameters and constraints}

\emph{Clocks} are real-valued variables that all evolve over time at the same rate.
Throughout this paper, we assume a set~$\ClockSet = \{ \clocki{1}, \dots, \clocki{\ClockCard} \} $ of \emph{clocks}.
A \emph{clock valuation} is a function
$\clockval : \ClockSet \rightarrow \setRgeqzero$, assigning a non-negative value to each clock.
We write $\ClocksZero$ for the clock valuation assigning $0$ to all clocks.
Given a constant $d \in \setRgeqzero$, $\clockval + d$ denotes the valuation \st\ $(\clockval + d)(\clock) = \clockval(\clock) + d$, for all $\clock \in \ClockSet$.

A \emph{(timing) parameter} is an unknown rational-valued constant of a model.
Throughout this paper, we assume a set~$\ParamSet = \{ \parami{1}, \dots, \parami{\ParamCard} \} $ of \emph{parameters}.
A \emph{parameter valuation} $\pval$ is a function $\pval : \ParamSet \rightarrow \setQgeqzero$.

As often, we choose \emph{real}-valued clocks and \emph{rational}-valued parameters, because irrational constants render reachability undecidable in~\TAstext{}~\cite{Miller00} (see~\cite{Andre19STTT} for a survey on the impact of these domains in (P)TAs).

We assume ${\compOp} \in \{<, \leq, =, \geq, >\}$.
A constraint~$\constraint$ is a conjunction of inequalities over $\ClockSet \cup \ParamSet$ of the form
$\clock \compOp \sum_{1 \leq i \leq \ParamCard} \alpha_i \parami{i} + d$, with
$\parami{i} \in \ParamSet$,
and
$\alpha_i, d \in \setZ$.
Given~$\constraint$, we write~$\clockval\models\pval(\constraint)$ if %
the expression obtained by replacing each~$\clock$ with~$\clockval(\clock)$ and each~$\param$ with~$\pval(\param)$ in~$\constraint$ evaluates to true.

\subsection{Timed automata}

A \TAtext{} is a finite-state automaton extended with a finite set of real-valued clocks. %
We also add to the standard definition of \TAstext{} a special private location, which will be used to define our subsequent opacity concepts.

\begin{definition}[\TAtextdef{}~\cite{AD94}]
	A \TAtext{} $\TA$ is a tuple \mbox{$\TA = \TAprivextend$}, where:
	\begin{enumerate}
		\item $\ActionSet$ is a finite set of actions,
		\item $\LocSet$ is a finite set of locations,
		\item $\locinit \in \LocSet$ is the initial location,
		\item $\locpriv \in \LocSet$ is a special private location,
		\item $\locfinal \in \LocSet$ is the final location,
		\item $\ClockSet$ is a finite set of clocks,
		\item $\invariant$ is the invariant, assigning to every $\loc\in \LocSet$ a constraint $\invariant(\loc)$ over $\ClockSet$ (called \emph{invariant}),
		\item $\EdgeSet$ is a finite set of edges  $\edge = (\loc,\guard,\action,\resets,\loc')$
		where~$\loc,\loc'\in \LocSet$ are the source and target locations, $\action \in \ActionSet$,
		$\resets\subseteq \ClockSet$ is a set of clocks to be reset, and $\guard$ is a constraint over $\ClockSet$ (called \emph{guard}).
	\end{enumerate}
	\label{def:TA}
\end{definition}
\begin{figure}[tb]

	\centering

	\begin{tikzpicture}[pta, scale=2, xscale=1, yscale=.5]

		\node[location, initial] at (0, -1) (s0) {$\loci{0}$};

		\node[location, private] at (1.5, 0) (s2) {$\loci{2}$};

		\node[location, final] at (3, -1) (s1) {$\loci{1}$};

		\node[invariant, below=of s0] {$\styleclock{\clock} \leq 3$};
		\node[invariant, above=of s2] {$\styleclock{\clock} \leq 2$};

		\path (s0) edge[bend left] node[align=center]{$\styleact{a}$\\$\styleclock{\clock} \geq 1$} (s2);
		\path (s0) edge[] node[align=center]{$\styleact{c}$} (s1);
		\path (s2) edge[bend left] node[align=center]{$\styleact{b}$} (s1);

	\end{tikzpicture}
	\caption{A \TAtextForFig{} example}
	\label{figure:example-TA}

\end{figure}
\begin{example}
	In \cref{figure:example-TA}, we give an example of a \TAtext{} with three locations $\loci{0}$, $\loci{1}$ and $\loci{2}$,
	three edges, three actions $\set{a, b, c}$,
	and one clock $\clock$.
	$\loci{0}$ is the initial location, $\loci{2}$ is the private location, while~$\loci{1}$ is the final location.
	$\loci{0}$ has an invariant $\styleclock{\clock} \leq 3$ and the edge from $\loci{0}$ to $\loci{2}$ has a guard $\styleclock{\clock} \geq 1$.
\end{example}
\paragraph{Concrete semantics of timed automata}

We recall the concrete semantics of a \TAtext{} using a \TTStext{}.

\begin{definition}[Semantics of a \TAtextForFig{}]
	Given a \TAtext{} $\TA = \TAprivextend$,
	the semantics of $\TA$ is given by the \TTStext{} $\semantics{\TA} = \semanticsextend$, with
	\begin{enumerate}
		\item $\StateSet = \{ (\loc, \clockval) \in \LocSet \times \setRgeqzero^\ClockCard \mid \clockval \models \invariant(\loc){\pval} \}$,
		\item $\concstateinit = (\locinit, \ClocksZero) $,
		\item  $\transition$ consists of the discrete and (continuous) delay transition relations:
		\begin{enumerate}
			\item discrete transitions: $(\loc,\clockval) \transitionWith{\edge} (\loc',\clockval')$,
			if $(\loc, \clockval) , (\loc',\clockval') \in \StateSet$, and there exists $\edge = (\loc,\guard,\action,\resets,\loc') \in \EdgeSet$, such that $\clockval'= \reset{\clockval}{\resets}$, and $\clockval\models\pval(\guard$).
			\item delay transitions: $(\loc,\clockval) \transitionWith{d} (\loc, \clockval+d)$, with $d \in \setRgeqzero$, if $\forall d' \in [0, d], (\loc, \clockval+d') \in \StateSet$.
		\end{enumerate}
	\end{enumerate}
\end{definition}

Moreover we write $(\loc, \clockval)\longuefleche{(d, \edge)} (\loc',\clockval')$ for a combination of a delay and discrete transition if
$\exists  \clockval'' :  (\loc,\clockval) \transitionWith{d} (\loc,\clockval'') \transitionWith{\edge} (\loc',\clockval')$.

Given a TA~$\TA$ with concrete semantics $\semanticsextend$, we refer to the states of~$\StateSet$ as the \emph{concrete states} of~$\TA$.
A \emph{run} of~$\TA$ is an alternating sequence of concrete states of~$\TA$ and pairs of edges and delays starting from the initial state $\concstateinit$ of the form
$(\loci{0}, \clockval_{0}), (d_0, \edge_0), (\loci{1}, \clockval_{1}), \cdots$
with
$i = 0, 1, \dots$, $\edge_i \in \EdgeSet$, $d_i \in \setRgeqzero$ and
$(\loci{i}, \clockval_{i}) \longuefleche{(d_i, \edge_i)} (\loci{i+1}, \clockval_{i+1})$.

\begin{definition}[Duration of a run]\label{definition:duration}
	Given a finite run $\run : (\loci{0}, \clockval_{0}), (d_0, \edgei{0}), (\loci{1}, \clockval_{1}), \cdots, (d_{i-1}, \edgei{i-1}), (\loci{n}, \clockval_{n})$, the \emph{duration} of $\run$ is $\runduration{\run} = \sum_{0 \leq i \leq n-1} d_i$.
	We also say that $\loci{n}$ is reachable in time~$\runduration{\run}$.
\end{definition}
\begin{example}
	Consider again the \TAtext{} \TA{} in \cref{figure:example-TA}.
	Consider the following run~$\run$ of $\TA$:
	$(\loci{0}, \clock = 0) , (1.4, a) , (\loci{2}, \clock = 1.4) , (0.4, b) , (\loci{1}, \clock = 1.8)$ %
	Note that we write ``$\clock = 1.4$'' instead of ``$\clockval$ such that $\clockval(\clock) = 1.4$''.
	We have $\runduration{\run} = 1.4 + 0.4 = 1.8$.
\end{example}
\subsection{Parametric timed automata}
A \PTAtext{} is a \TAtext{} extended with a finite set of timing parameters allowing to model unknown constants.

\begin{definition}[\PTAtextdef{}~\cite{AHV93}]
	A \PTAtext{} $\PTA$ is a tuple \mbox{$\PTA = \PTAprivextend$}, where:
	\begin{enumerate}
		\item $\ActionSet$ is a finite set of actions;
		\item $\LocSet$ is a finite set of locations;
		\item $\locinit \in \LocSet$ is the initial location;
		\item $\locpriv \in \LocSet$ is a special private location,
		\item $\locfinal \in \LocSet$ is the final location;
		\item $\ClockSet$ is a finite set of clocks;
		\item $\ParamSet$ is a finite set of parameters;
		\item $\invariant$ is the invariant, assigning to every $\loc\in \LocSet$ a constraint $\invariant(\loc)$ over $\ClockSet \cup \ParamSet$ (called \emph{invariant});
		\item $\EdgeSet$ is a finite set of edges  $\edge = (\loc,\guard,\action,\resets,\loc')$
		where~$\loc,\loc'\in \LocSet$ are the source and target locations, $\action \in \ActionSet$,
		$\resets\subseteq \ClockSet$ is a set of clocks to be reset, and $\guard$ is a constraint over $\ClockSet \cup \ParamSet$ (called \emph{guard}).
	\end{enumerate}
	\label{def:PTA}
\end{definition}
\begin{figure}[tb]
	\centering
	\begin{tikzpicture}[pta, scale=2, xscale=1, yscale=.6]

		\node[location, initial] at (0, -1) (s0) {$\loci{0}$};

		\node[location, private] at (1.5, 0) (s2) {$\loci{2}$};

		\node[location, final] at (3, -1) (s1) {$\loci{1}$};

		\node[invariant, below=of s0] {$\styleclock{\clock} \leq 3$};
		\node[invariant, above=of s2] {$\styleclock{\clock} \leq \styleparam{\parami{2}}$};

		\path (s0) edge[bend left] node[align=center]{$\styleact{a}$\\$\styleclock{\clock} \geq \styleparam{\parami{1}}$} (s2);
		\path (s0) edge[] node[align=center]{$\styleact{c}$} (s1);
		\path (s2) edge[bend left] node[align=center]{$\styleact{b}$} (s1);

	\end{tikzpicture}
	\caption{A \PTAtextForFig{} example}
	\label{figure:example-PTA}
\end{figure}
\begin{example}
	In \cref{figure:example-PTA}, we give an example of a \PTAtext{} with three locations $\loci{0}$, $\loci{1}$ and $\loci{2}$,
	three edges, three actions $\set{a, b, c}$,
	one clock~$\clock$
	and two parameters $\set{\parami{1}, \parami{2}}$.
	$\loci{0}$ is the initial location, $\loci{2}$ is the private location, while~$\loci{1}$ is the final location.
	$\loci{0}$ has an invariant $\styleclock{\clock} \leq 3$ and the edge from $\loci{0}$ to $\loci{2}$ has a guard $\styleclock{\clock} \geq \parami{1}$.
\end{example}
\begin{definition}[Valuation of a \PTAtextForFig{}]\label{def:valuation-PTA}
	Given a parameter valuation~$\pval$, we denote by $\valuate{\PTA}{\pval}$ the non-parametric structure where all occurrences of a parameter~$\parami{i}$ have been replaced by~$\pval(\parami{i})$.
\end{definition}
\begin{remark}
	We have a direct correspondence between the valuation of a \PTAtext{} and the definition of a \TAtext{} given in \cref{def:TA}.
	\TAstextUpper{} were originally defined with integer constants in \cite{AD94} (as done in \cref{def:TA}), while our definition of \PTAstext{} allows \emph{rational}-valued constants.
	By assuming a rescaling of the constants (\ie{} by multiplying all constants in a \TAtext{} by the least common multiple of their denominators), we obtain an equivalent (integer-valued) \TAtext{}, as defined in \cref{def:TA}.
	So we assume in the following that $\valuate{\PTA}{\pval}$ is a~\TAtext{}.
\end{remark}
\begin{example}
		Consider again the \PTAtext{} in \cref{figure:example-PTA} and let $\pval$ be such that $\pval(\parami{1}) = 1$ and $\pval(\parami{2}) = 2$.
		Then
		$\valuate{\PTA}{\pval}$ is the \TAtext{} depicted in \cref{figure:example-TA}.
\end{example}
\paragraph{\LUPTAtextdef{}}\label{sec:LUPTA-definition}

While most decision problems are undecidable for the general class of \PTAstext{} (see~\cite{Andre19STTT} for a survey),
\LUPTAstext{} is the most well-known subclass of \PTAstext{} with some decidability results: for example, reachability-emptiness (``the emptiness of the valuations set for which a given location is reachable''), which is undecidable for \PTAstext{}~\cite{AHV93}, becomes decidable for \LUPTAstext{}~\cite{HRSV02}.
Various other results were studied for this subclass (\eg{} \cite{BlT09,JLR15,ALR22}).

\begin{definition}[\LUPTAtextdef{}~\cite{HRSV02}]\label{def:LUPTA}
	\Iac{lupta} is a \PTAtext{} where the set of parameters is partitioned into lower-bound parameters and upper-bound parameters,
	where each upper-bound (resp.\ lower-bound) parameter~$\parami{i}$ must be such that,
	for every guard or invariant constraint $\clock \compOp \sum_{1 \leq i \leq \ParamCard} \alpha_i \parami{i} + d$, we have:
	\begin{itemize}
		\item ${\compOp} \in \set{ \leq, < }$ implies $\alpha_i \geq 0$ (resp.\ $\alpha_i \leq 0$), and
		\item ${\compOp} \in \set{ \geq, > }$ implies $\alpha_i \leq 0$ (resp.\ $\alpha_i \geq 0$).
	\end{itemize}
\end{definition}
\begin{example}
	The \PTAtext{} in \cref{figure:example-PTA} is \LUPTAtext{} with $\set{ \parami{1} }$ as lower-bound parameter set, and $\set{ \parami{2} }$ as upper-bound parameter set.
\end{example}
\section{Execution-time opacity problems in timed automata}\label{section:opacity:TA}

Throughout this paper, the attacker model is as follows:
the attacker knows the \TAtext{} modeling the system, and can only observe the \execTimeText{} between the start of the system and the time it reaches the final location.
The attacker cannot observe actions, nor the values of the clocks, nor whether some locations are visited.
Its goal will be to deduce from its observations whether the private location was visited.

\subsection{Defining the execution times}\label{sec:opacity:preliminaries:PrivPubVisit} %

Let us first introduce two key concepts necessary to define our notion of execution-time opacity.

Given a \TAtext{}~$\TA$ and a run~$\run$, we say that $\locpriv$ is \emph{visited on the way to~$\locfinal$ in~$\run$} if $\run$ is of the form
\[(\loci{0}, \clockval_0), (d_0, \edgei{0}), (\loci{1}, \clockval_1), \cdots, (\loci{m}, \clockval_m), (d_m, \edgei{m}), \cdots (\loci{n}, \clockval_n)\]
\noindent{}for some~$m,n \in \setN$ such that $\loci{m} = \locpriv$, $\loci{n} = \locfinal$ and $\forall 0 \leq i \leq n-1, \loci{i} \neq \locfinal$.
We denote by $\PrivVisit{\TA}$ the set of those runs, and refer to them as \emph{private} runs.
We denote by $\PrivDurVisit{\TA}$ the set of all the durations of these runs.

Conversely, we say that
$\locpriv$ is \emph{avoided on the way to~$\locfinal$ in~$\run$}
if $\run$ is of the form
\[(\loci{0}, \clockval_0), (d_0, \edgei{0}), (\loci{1}, \clockval_1), \cdots, (\loci{n}, \clockval_n )\]
\noindent{}with $\loci{n} = \locfinal$ and $\forall 0 \leq i < n, \loci{i} \notin \set{\locpriv,\locfinal}$.
We denote the set of those runs by~$\PubVisit{\TA}$, referring to them as \emph{public} runs,
and by $\PubDurVisit{\TA}$ the set of all the durations of these public runs.

Therefore, $\PrivDurVisit{\TA}$ (resp.\ $\PubDurVisit{\TA}$) is the set of all the durations of the runs for which $\locpriv$ is visited (resp.\ avoided) on the way to~$\locfinal$.

These concepts can be seen as the set of \execTimesText{} from the initial location~$\locinit$ to the final location $\locfinal$ while visiting (resp.\ not visiting) a private location~$\locpriv$.
Observe that, from the definition of the duration of a run (\cref{definition:duration}), this ``\execTimeText{}'' does not include the time spent in~$\locfinal$.

\begin{example}\label{example:opacity:TOSEM:running:DurVisit}
	Consider again the \TAtext{} in \cref{figure:example-TA}. %
	We have $\PrivDurVisit{\TA} = [1, 2]$
	and
	$\PubDurVisit{\TA} = [0, 3]$.
\end{example}
\subsection{Defining execution-time opacity}

We now introduce formally the concept of ``\opacityText{} for a set of durations (or \execTimesText{}) $\execTimes$'': a system is \emph{\opaqueText{} for \execTimesText{}~$\execTimes$} whenever, for any duration in~$\execTimes$, it is not possible to deduce whether the system visited~$\locpriv$ or not.
In other words, if an attacker measures an \execTimeText{} within~$\execTimes$ from the initial location to the target location~$\locfinal$, then this attacker is not able to deduce whether the system visited~$\locpriv$.

\begin{definition}[\Acf*{opacity} for $\execTimes$]\label{def:opacity:TOSEM:ET-opacity}
	Given a \TAtext{}~$\TA$ and a set of \execTimesText{}~$\execTimes$,
	we say that $\TA$ is \emph{\opaqueTextdef{} for \execTimesText{}~$\execTimes$}
	if $\execTimes \subseteq (\PrivDurVisit{\TA} \cap \PubDurVisit{\TA})$.
\end{definition}

In the following, we will be interested in the existence of such an \execTimeText{}.
We say that a \TAtext{} is \existentialOpaqueText{} if it is \opaqueText{} for a non-empty set of \execTimesText{}.

\begin{definition}[\existentialOpacityTextForSec{}]\label{def:opacity:TOSEM:exist-ET-opacity}
	A \TAtext{}~$\TA$ is \emph{\existentialOpaqueText{}}
	if $(\PrivDurVisit{\TA} \cap \PubDurVisit{\TA}) \neq \emptyset$.
\end{definition}

If one does not have the ability to tune the system (\ie{} change internal delays, or add some \stylecode{Thread.sleep()} statements in a program), one may be first interested in knowing whether the system is \opaqueText{} for all \execTimesText{}. %
In other words, if a system is \emph{\fullOpaqueText{}}, for any possible measured \execTimeText{}, an attacker is not able to deduce whether~$\locpriv$ was visited or not.

\begin{definition}[\fullOpacityTextForSec{}]\label{def:opacity:TOSEM:full-ET-opacity}
	A \TAtext{}~$\TA$ is \emph{\fullOpaqueText{}}
	if $\PrivDurVisit{\TA} = \PubDurVisit{\TA}$.
\end{definition}

That is, a system is \fullOpaqueText{} if, for any \execTimeText{} ~$\paramd$, a run of duration~$\paramd$ reaches~$\locfinal$ after visiting~$\locpriv$ iff another run of duration~$\paramd$ reaches~$\locfinal$ without visiting~$\locpriv$.

\begin{remark}\label{remark:opacity:TOSEM:R2}
	This definition is symmetric: a system is not \fullOpaqueText{} iff an attacker can deduce $\locpriv$ or $\neg \locpriv$.
	For instance, if there is no run to~$\locfinal$ visiting~$\locpriv$, but still a run to~$\locfinal$ (not visiting~$\locpriv$), a system is not \fullOpaqueText{} \wrt\ \cref{def:opacity:TOSEM:full-ET-opacity}.
\end{remark}

We finally define \weakOpacityText{}, not considered in~\cite{ALMS22}, but defined in the specific context of expiring opacity~\cite{ALM23}.
We therefore reintroduce this definition in the ``normal'' opacity setting considered in this section, in the following:

\begin{definition}[\weakOpacityTextForSec{}]\label{def:opacity:TOSEM:weak-ET-opacity}
	A \TAtext{}~$\TA$ is \emph{\weakOpaqueText{}}
	if $\PrivDurVisit{\TA} \subseteq \PubDurVisit{\TA}$.
\end{definition}

That is, a \TAtext{} is \weakOpaqueText{} whenever, for any run reaching the final location after visiting the private location, there exists another run of the same duration reaching the final location but not visiting the private location;
but the converse does not necessarily hold.

\begin{remark}\label{remark:opacity:ICECCS:weak}
	Our notion of \weakOpacityText{} may still leak some information:
	on the one hand, if a run indeed visits the private location, there exists an equivalent run not visiting it, and therefore the system is \opaqueText{};
	\emph{but} on the other hand, there may exist \execTimesText{} for which the attacker can deduce that the private location was \emph{not} visited.
	This remains acceptable in some cases, and this motivates us to define a weak version of \opacityText{}.
	Also note that the ``initial-state opacity'' for real-time automata considered in~\cite{WZ18} can also be seen as \emph{weak} in the sense that their language inclusion is also unidirectional.
\end{remark}

\begin{example}\label{example:opacity:TOSEM:defs}
	Consider again the \PTAtext{}~$\PTA$ in \cref{figure:example-PTA} and let $\pval$ such that $\pval(\parami{1}) = 1$ while  $\pval(\parami{2}) = 2$ (\ie{} the \TAtext{} in \cref{figure:example-TA}). %
	Recall that $\PrivDurVisit{\valuate{\PTA}{\pval}} = [1, 2]$
	and
	$\PubDurVisit{\valuate{\PTA}{\pval}} = [0, 3]$.
	Hence, it holds that $\PrivDurVisit{\valuate{\PTA}{\pval}} \subseteq \PubDurVisit{\valuate{\PTA}{\pval}}$ and therefore $\valuate{\PTA}{\pval}$ is \weakOpaqueText{}.
	However, $\PrivDurVisit{\valuate{\PTA}{\pval}} \neq \PubDurVisit{\valuate{\PTA}{\pval}}$ and therefore $\valuate{\PTA}{\pval}$ is not \fullOpaqueText{}.

Now consider again the \PTAtext{}~$\PTA$ in \cref{figure:example-PTA} and let $\pval'$ such that $\pval'(\parami{1}) = 0$ while  $\pval'(\parami{2}) = 3$.
This time, $\PrivDurVisit{\valuate{\PTA}{\pval'}} = \PubDurVisit{\valuate{\PTA}{\pval'}} = [0, 3]$ and therefore $\valuate{\PTA}{\pval'}$ is \fullOpaqueText{}.
\end{example}
\subsection{Decision and computation problems}\label{sec:opacity:TOSEM:problems}
\subsubsection{Computation problem for \opacityTextForSec{}}
We can now define the \opacityExecutionTimesComputationProblem{}, which consists in computing the possible \execTimesText{} ensuring \opacityText{}.

\defProblem
{\opacityExecutionTimesComputationProblem{}}
{A \TAtext{}~$\TA$
}
{Compute the \execTimesText{}~$\execTimes$ such that $\TA$ is \opaqueText{} for~$\execTimes$.}

Let us illustrate that this computation problem is certainly not easy.
For the \TAtext{}~$\TA$ in \cref{figure:opacity:TOSEM:opaqueforN}, the \execTimesText{}~$\execTimes$ for which $\TA$ is \opaqueText{} is exactly~$\setN$; that is, only integer times ensure \opacityText{} (as the system can only leave $\locpriv$ and hence enter~$\locfinal$ at an integer time), while non-integer times violate \opacityText{}.

\begin{figure}[tb]
	{\centering
		\begin{tikzpicture}[->, >=stealth', auto, node distance=2cm, xscale=2, thin]

			\node[location, initial] at (0, 0) (s0) {$\locinit$};
			\node[location, private] at (1.5,1.5) (sp) {$\locpriv$};
			\node[location, final] at (3,0) (s1) {$\locfinal$};

			\path (s0) edge node[align=center]{} (s1);
			\path (s0) edge[bend left] node[align=center]{$\sclock = 0$} (sp);
			\path (sp) edge[loop above] node[align=center]{$\sclock = 1$\\$\sclock \assign 0$} (sp);
			\path (sp) edge[bend left] node[align=center]{$\sclock = 0$} (s1);

		\end{tikzpicture}

	}
	\caption{\TAtextForFig{} for which the set of \execTimesText{} ensuring \opacityTextForSec{} is $\setN$}
	\label{figure:opacity:TOSEM:opaqueforN}
\end{figure}
\subsubsection{Decision problems} %

We define the three following decision problems:

\defProblem{\existentialOpacityDecisionProblem{}}
{A \TAtext{}~$\TA$} %
{Is $\TA$ \existentialOpaqueText{}?}

\defProblem{\FullOpacityDecisionProblem{}}
{A \TAtext{}~$\TA$} %
{Is $\TA$ \fullOpaqueText{}?}

\defProblem{\WeakOpacityDecisionProblem{}}
{A \TAtext{}~$\TA$} %
{Is $\TA$ \weakOpaqueText{}?}

\subsection{Answering the \opacityExecutionTimesComputationProblem{}}\label{sec:opacity:TOSEM:ET-t-computation}
\begin{proposition}[Solvability of the \opacityExecutionTimesComputationProblem{}~{\cite[Proposition~5.2]{ALMS22}}]\label{prop:opacity:TOSEM:ET-t-computation}
The \opacityExecutionTimesComputationProblem{} is solvable for \TAstext{}.
\end{proposition}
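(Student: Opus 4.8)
The plan is to observe that the t-computation problem is essentially a request for a finite description of $\PrivDurVisit{\TA} \cap \PubDurVisit{\TA}$, and then to reduce the computation of each of these two duration sets to a ``reachability-in-time'' problem in an auxiliary \TAtext{}. Indeed, by \cref{def:opacity:TOSEM:ET-opacity}, $\TA$ is \opaqueText{} for a set of \execTimesText{}~$\execTimes$ iff $\execTimes \subseteq \PrivDurVisit{\TA} \cap \PubDurVisit{\TA}$; hence the largest such $\execTimes$ is exactly $\PrivDurVisit{\TA} \cap \PubDurVisit{\TA}$, and solving the computation problem amounts to producing a finite representation of this intersection. Since the relevant families of subsets of $\setRgeqzero$ will be closed under finite intersection, it suffices to compute finite representations of $\PrivDurVisit{\TA}$ and of $\PubDurVisit{\TA}$ separately.

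Next I would build auxiliary automata isolating public, resp.\ private, runs. For $\PubDurVisit{\TA}$: let $\TA_{\pub}$ be $\TA$ with $\locpriv$ and all its incident edges deleted, and with $\locfinal$ turned into an absorbing location (no outgoing edge). Then a finite run of $\TA_{\pub}$ reaches $\locfinal$ iff it corresponds to a public run of $\TA$ in the sense of \cref{sec:opacity:preliminaries:PrivPubVisit}, so $\PubDurVisit{\TA}$ is exactly the set of durations of runs of $\TA_{\pub}$ reaching $\locfinal$. For $\PrivDurVisit{\TA}$: take the synchronous product of $\TA$ with a two-state monitor that starts in a state ``$\locpriv$ not yet visited'' and moves irreversibly to ``$\locpriv$ visited'' upon entering $\locpriv$ (and is initialised according to whether $\locinit = \locpriv$), restrict the edges entering $\locfinal$ to the ``visited'' monitor state, and again make $\locfinal$ absorbing; then a run of the product reaches $\locfinal$ iff it corresponds to a private run of $\TA$, so $\PrivDurVisit{\TA}$ is the set of durations of runs of this product reaching $\locfinal$.

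Both sets are thus instances of the same subproblem: compute the set of durations of finite runs reaching a given location~$\loc$ in a given \TAtext{}~$\mathcal{B}$ (recall \cref{definition:duration}). To this end, add to $\mathcal{B}$ a fresh clock $\clockabs$ that is never reset, so that the sought set is exactly the set of values taken by $\clockabs$ in reachable configurations at~$\loc$. One then reasons on the region automaton of $\mathcal{B}$, a finite object: each simple path to~$\loc$ contributes an interval of achievable total delays with integer endpoints, and iterating a cycle adds a fixed interval of extra delay at each turn; unioning over all (infinitely many) ways of interleaving simple paths and cycle iterations yields a finitely representable subset of $\setRgeqzero$ — a finite union of intervals with integer endpoints, possibly augmented with a congruence (periodicity) constraint, as witnessed by \cref{figure:opacity:TOSEM:opaqueforN}, where the set of suitable \execTimesText{} is exactly $\setN$. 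Such sets are closed under finite intersection (indeed under all Boolean operations), so the answer to the \opacityExecutionTimesComputationProblem{} is obtained by intersecting the representations computed for $\PrivDurVisit{\TA}$ and $\PubDurVisit{\TA}$.

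The delicate point is precisely that last subproblem: since $\clockabs$ is unbounded, there are infinitely many reachable configurations at~$\loc$ and infinitely many discrete paths leading there, so one must argue that the computation terminates (exploiting the finiteness of the region automaton) and, more importantly, that the contributions of arbitrarily long runs — in particular runs iterating cycles — stabilise into a finitely describable pattern of intervals plus periodicity. Once this is settled, the two product constructions and the final intersection are routine, which is why I expect the finite representability and effective computability of the per-location duration set to be the crux of the argument.
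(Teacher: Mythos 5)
Your overall architecture is the same as in the proof this proposition refers to (\cite[Proposition~5.2]{ALMS22}): reduce to computing $\PrivDurVisit{\TA}$ and $\PubDurVisit{\TA}$ separately via syntactic transformations of~$\TA$ (a Boolean ``visited'' monitor / deletion of~$\locpriv$, with $\locfinal$ made absorbing so that only first visits count), then intersect. That part is correct and matches the intended decomposition.

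The gap is exactly where you say it is, and it is not discharged by your sketch. The claim that the set of durations of runs reaching a location is a finite union of integer-bounded intervals ``possibly augmented with a congruence constraint'' is true but is the entire mathematical content of the proposition; your justification --- each simple path contributes an interval, and ``iterating a cycle adds a fixed interval of extra delay at each turn'' --- is not a proof and is not even literally accurate. The set of durations achievable by $k$ iterations of a region-graph cycle is not in general $k$ translates of a fixed interval (it depends on the entry region and on which clocks the cycle resets, and cycles interleave), and the union over infinitely many interleavings of simple paths and cycles does not obviously stabilise without a genuine pumping argument. Moreover, since the answer can be $\setN$ (\cref{figure:opacity:TOSEM:opaqueforN}), the representation class is strictly larger than finite unions of intervals, so you must also pin down that class precisely and verify effective closure under intersection rather than assert it. The clean way to close the gap is the one used throughout \cref{section:opacity:TA} of this paper: after adding the monitor, the two duration sets are exactly the solution sets of the parametric TCTL formulas $\exists\Diamond_{=\param}(\locfinal \wedge \priv)$ and $\exists\Diamond_{=\param}(\locfinal \wedge \neg\priv)$, and \cite{BDR08} shows these solution sets are effectively computable in a representation (definable in an additive theory with integer periodicities) closed under intersection. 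Alternatively, the periodicity can be established directly from the region automaton by encoding integer parts of durations as unary regular languages, as done in~\cite{ALM23}; either way, this step needs an actual argument or citation, not the heuristic you give.
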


This positive result can be put in perspective with the negative result of~\cite{Cassez09} that proves that it is undecidable whether a \TAtext{} (and even the more restricted subclass of \ERAstext{}) is opaque, in a sense that the attacker can deduce some actions, by looking at observable actions together with their timing.
The difference in our setting is that only the global time is observable, which can be seen as a single action, occurring once only at the end of the computation.
In other words, our attacker is less powerful than the attacker in~\cite{Cassez09}.

\subsection{Checking for \existentialOpacityTextForSec{}}\label{sec:opacity:TOSEM:E-ET-decision}

The following result was not strictly speaking proved in~\cite{ALMS22}, and we provide here an original proof for it.

\begin{proposition}[Decidability of the \existentialOpacityDecisionProblem{}]\label{prop:opacity:TOSEM:E-ET-decision}
The \existentialOpacityDecisionProblem{} is decidable in \FiveEXPTIME{} for \TAstext{}.
\end{proposition}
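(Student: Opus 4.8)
By \cref{def:opacity:TOSEM:exist-ET-opacity}, $\TA$ is \existentialOpaqueText{} iff $\PrivDurVisit{\TA} \cap \PubDurVisit{\TA} \neq \emptyset$, i.e., iff some execution time is the duration both of a run that reaches $\locfinal$ through $\locpriv$ and of a run that reaches $\locfinal$ while avoiding $\locpriv$. The plan is to build a single (non-parametric) \TAtext{} $\mathcal{B}$ whose runs simulate such a pair of runs of $\TA$ along a common time line, and in which a designated control state is reachable exactly when the two simulated runs enter $\locfinal$ at the very same date; decidability (and the announced complexity bound) then reduce to the decidability of reachability in \TAstext{}.

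\textbf{The two auxiliary automata.} From $\TA$ I would first build $\TA_1$ by duplicating every location with a Boolean flag recording whether $\locpriv$ has already been visited (raised on entering $\locpriv$), redirecting every edge entering $\locfinal$ to a fresh absorbing location $T_1$ with invariant $\mathsf{true}$ but only from flag-true copies, and deleting the flag-false incoming edges of $\locfinal$; then a run of $\TA_1$ reaches $T_1$ at date $d$ iff $d \in \PrivDurVisit{\TA}$. Symmetrically, $\TA_2$ is $\TA$ with $\locpriv$ and its incident edges removed and every edge entering $\locfinal$ redirected to a fresh absorbing location $T_2$ with invariant $\mathsf{true}$; a run of $\TA_2$ reaches $T_2$ at date $d$ iff $d \in \PubDurVisit{\TA}$. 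Both automata have size polynomial in $|\TA|$, use $\ClockCard$ clocks each, and only the constants occurring in $\TA$.

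\textbf{The product.} Let $\mathcal{B}$ be the product $\TA_1 \parallel \TA_2$ over the disjoint union of the two clock sets and with disjoint action alphabets, so that no discrete synchronisation occurs and the two components share only the elapse of time. I add one fresh clock $z$, reset on the unique edge entering $T_1$, and I put the guard $z = 0$ on the unique edge entering $T_2$. Then in any run of $\mathcal{B}$ the component $\TA_2$ can enter $T_2$ only with no time elapsed since $\TA_1$ entered $T_1$, i.e., at the same date; conversely, for $d \in \PrivDurVisit{\TA} \cap \PubDurVisit{\TA}$, replaying the two witnessing runs of $\TA$ in lockstep on $[0,d]$ and then firing the two redirected edges back to back with zero delay reaches $(T_1,T_2)$. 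Hence $\TA$ is \existentialOpaqueText{} iff $(T_1,T_2)$ is reachable in $\mathcal{B}$.

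\textbf{Complexity, and the main subtlety.} $\mathcal{B}$ is a \TAtext{} of size polynomial in $|\TA|$ with $2\ClockCard+1$ clocks and the constants of $\TA$, and the reduction is polynomial; deciding reachability of $(T_1,T_2)$ through the region automaton (exponential in the number of clocks and in the bit-size of the constants) is in \EXPTIME{}, indeed in \PSPACE{} with on-the-fly exploration, hence \emph{a fortiori} within \FiveEXPTIME{} as claimed. (Plain decidability is in any case immediate from \cref{prop:opacity:TOSEM:ET-t-computation}: compute the set of execution times for which $\TA$ is \opaqueText{} and test it for emptiness.) There is no deep obstacle here; the one step requiring care is the \emph{faithfulness} of the construction — that ``$\locfinal$ is entered for the first time at date $d$, after (resp.\ without) a visit to $\locpriv$'' is captured exactly, including degenerate cases such as $\locinit \in \{\locpriv,\locfinal\}$ or $\locpriv=\locfinal$ (for which $\PrivDurVisit{\TA}=\emptyset$) — together with the observation that the interleaving semantics of $\mathcal{B}$ genuinely synchronises the two otherwise independent runs in time, the $z$-reset/$z=0$ gadget forcing exactly equality of the two arrival dates.
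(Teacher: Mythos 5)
Your route is genuinely different from the paper's: the paper encodes \existentialOpacityText{} as a \emph{parametric TCTL} formula $\exists \param \big(\exists\Diamond_{=\param} (\locfinal \wedge \priv) \wedge \exists\Diamond_{=\param} (\locfinal \wedge \neg \priv) \big)$ over the flag-augmented \TAtext{} and invokes the \FiveEXPTIME{} model-checking result of Baier--Doyen--Raskin, whereas you reduce to plain reachability in a self-composition. Your two auxiliary automata $\TA_1$ and $\TA_2$ are constructed correctly (the flag duplication, the deletion of flag-false edges into $\locfinal$, and the absorbing targets $T_1,T_2$ do capture $\PrivDurVisit{\TA}$ and $\PubDurVisit{\TA}$ exactly), and if the synchronisation gadget were sound your argument would in fact yield a \PSPACE{} upper bound, strictly improving on the stated \FiveEXPTIME{}. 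Your fallback remark is also valid: decidability alone (without the complexity bound) follows from \cref{prop:opacity:TOSEM:ET-t-computation} by testing the computed set for emptiness.

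There is, however, a genuine soundness gap in the $z$-gadget. The fresh clock $z$ equals the global time until it is reset on the edge into $T_1$, so the guard $z=0$ on the edge into $T_2$ is satisfiable not only ``with no time elapsed since $\TA_1$ entered $T_1$'' but also at global date $0$ \emph{before} $T_1$ has been entered at all; nothing in the product forces $T_1$ to be entered first. Concretely, take a \TAtext{} in which the only public run reaches $\locfinal$ at date $0$ and the only private run reaches it at date $1$: then $\PrivDurVisit{\TA} \cap \PubDurVisit{\TA} = \emptyset$, yet $\mathcal{B}$ reaches $(T_1,T_2)$ by firing the $T_2$-edge at date $0$ (with $z=0$ because $z$ was never reset) and the $T_1$-edge at date $1$ --- a false positive. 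The fix is easy and you should state one: either make the two redirected edges synchronise on a common fresh action (forcing them to fire at the same date, which makes $z$ unnecessary), or keep $z$ but additionally prevent the $T_2$-edge from firing before the $T_1$-edge, \eg{} by guarding it with a Boolean flag set on entering $T_1$ (equivalently, by ordering the two edges through a shared action). With that repair the equivalence ``$\TA$ is \existentialOpaqueText{} iff $(T_1,T_2)$ is reachable in $\mathcal{B}$'' holds and the complexity analysis goes through.
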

\begin{proof}
Let $\TA$~be a \TAtext{}.
Suppose we add a Boolean variable $\priv$ to $\TA$ which is initially false and set to true on every edge going into the location $\locpriv$.
This Boolean variable (not strictly part of the \TAtext{} syntax) can also be simulated by adding a copy of~$\TA$ instead, and jumping to that copy on edges going into location~$\locpriv$.

Then the \existentialOpacityDecisionProblem{} amounts to checking the following parametric TCTL formula~\cite{BDR08}, with $\param$ a parameter:
$$\exists \param \big(\exists\Diamond_{=\param} (\locfinal \wedge \priv) \wedge \exists\Diamond_{=\param} (\locfinal \wedge \neg \priv) \big)$$

From~\cite{BDR08}, this can be checked in \FiveEXPTIME{}, since the size of the \TAtext{} it is checked on is at most twice that of \TA{}, and the size of the formula is constant \wrt\ the size of \TA.
\end{proof}

\subsection{Checking for \fullOpacityTextForSec{}}\label{sec:opacity:TOSEM:F-ET-decision}

The following result matches \cite[Proposition~5.3]{ALMS22} but we provide an original proof, also fixing a complexity issue in \cite[Proposition~5.3]{ALMS22}.

\begin{proposition}[Decidability of the \fullOpacityDecisionProblem{}]\label{prop:opacity:TOSEM:F-ET-decision}
The \fullOpacityDecisionProblem{} is decidable in \FiveEXPTIME{} for \TAstext{}.
\end{proposition}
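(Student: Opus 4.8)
The plan is to reduce \fullOpacityText{} checking to a model-checking question on a finite abstraction of the \TAtext{}, very much in the spirit of the proof of \cref{prop:opacity:TOSEM:E-ET-decision}. First I would, as before, augment $\TA$ with a Boolean flag $\priv$ (equivalently, a duplicated copy of $\TA$ entered on edges into $\locpriv$), so that a run's flag value at $\locfinal$ records whether $\locpriv$ was visited on the way. By \cref{def:opacity:TOSEM:full-ET-opacity}, $\TA$ is \fullOpaqueText{} iff $\PrivDurVisit{\TA} = \PubDurVisit{\TA}$, i.e.\ iff for every execution time~$\paramd$ there is a run of duration~$\paramd$ reaching $\locfinal$ with $\priv$ iff there is one of duration~$\paramd$ reaching $\locfinal$ with $\neg\priv$. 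Negating, non-\fullOpacityText{} is the existence of a parameter value~$\param$ witnessing exactly one of the two reachabilities at time~$\param$. This is expressible as the parametric TCTL formula
\[
\exists \param \Big( \big( \exists\Diamond_{=\param}(\locfinal \wedge \priv) \wedge \neg \exists\Diamond_{=\param}(\locfinal \wedge \neg\priv) \big) \vee \big( \exists\Diamond_{=\param}(\locfinal \wedge \neg\priv) \wedge \neg \exists\Diamond_{=\param}(\locfinal \wedge \priv) \big) \Big),
\]
and \fullOpacityText{} itself is the negation of this, which is again a (closed) parametric TCTL formula of size constant in~$|\TA|$. Invoking the decidability and \FiveEXPTIME{} upper bound of~\cite{BDR08} for parametric TCTL over the augmented automaton — whose size is at most twice that of $\TA$ — yields the claimed complexity.

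The step I expect to be the main obstacle is justifying that the relevant fragment of parametric TCTL is actually the one shown decidable in~\cite{BDR08}: the formula above uses a single existentially quantified parameter~$\param$ appearing in equality-constrained eventualities $\Diamond_{=\param}$, but crucially it also nests those eventualities under negation and conjunction. I would need to check that~\cite{BDR08}'s decidability result covers this Boolean combination (one quantified parameter, arbitrary Boolean structure of $\Diamond_{=\param}$ subformulas), or else massage the formula into their exact fragment. One safe route, if the negations are problematic, is to observe that $\neg\exists\Diamond_{=\param}\varphi$ over the timed automaton equals $\forall\Box_{\neq\param}\neg\varphi$-type statements, or more robustly to work with the \emph{complement} problem: decide non-\fullOpacityText{}, which only needs the positive conjunctions $\exists\Diamond_{=\param}\psi_1 \wedge \neg\exists\Diamond_{=\param}\psi_2$, and then complement the decision procedure.

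An alternative, more self-contained route that avoids relying on the precise scope of~\cite{BDR08} would be to build the region automaton of the augmented $\TA$ together with a fresh clock $\clock_{\abs}$ that is never reset (tracking total elapsed time), and reason about the two sets $\PrivDurVisit{\TA}$ and $\PubDurVisit{\TA}$ directly. Each is a finite union of (bounded or unbounded, open/closed) intervals with integer endpoints — this follows because reachability of $\locfinal$ with a given value of $\clock_{\abs}$, under either flag value, is a property of paths in the region graph, and the set of reachable $\clock_{\abs}$-values along such paths is effectively computable as such a finite union of intervals. Equality of two such effectively-presented sets of reals is then decidable by direct comparison. This gives decidability unconditionally; the \FiveEXPTIME{} bound, however, is cleanest to state via the reduction to~\cite{BDR08}, so I would present the parametric-TCTL argument as the main proof and mention the region-automaton argument as the reason the complexity figure is not artificially inflated. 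The hardest part, in either route, is making sure the ``exactly one side'' asymmetry is captured faithfully — this is precisely the subtlety flagged in \cref{remark:opacity:TOSEM:R2}, that \fullOpacityText{} fails as soon as the attacker can deduce $\locpriv$ \emph{or} $\neg\locpriv$.
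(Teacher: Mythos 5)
Your proposal is correct and follows essentially the same route as the paper: augment $\TA$ with a Boolean flag $\priv$, express the property as a parametric TCTL formula with a single quantified parameter, and invoke the \FiveEXPTIME{} bound of~\cite{BDR08}. Your formula is just the negation-normal form of the paper's $\forall \param \big(\exists\Diamond_{=\param} (\locfinal \wedge \priv) \Leftrightarrow \exists\Diamond_{=\param} (\locfinal \wedge \neg \priv) \big)$, so the two are logically identical; the extra region-automaton fallback you sketch is not needed.
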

\begin{proof}
As before, we can write a parametric TCTL formula for this problem, with $\param$ a parameter:
$$\forall \param \big(\exists\Diamond_{=\param} (\locfinal \wedge \priv) \Leftrightarrow \exists\Diamond_{=\param} (\locfinal \wedge \neg \priv) \big)$$
This formula can be checked in \FiveEXPTIME{}~\cite{BDR08}.
\end{proof}
\subsection{Checking for \weakOpacityTextForSec{}}

The \emph{weak} notion of \opacityText{} had not been defined in~\cite{ALMS22}.
Nevertheless, the proof of \cref{prop:opacity:TOSEM:F-ET-decision} can be adapted in a very straightforward manner to prove its weak counterpart as follows:

\begin{proposition}[Decidability of the \weakOpacityDecisionProblem{}]\label{prop:opacity:TOSEM:W-ET-decision}
The \weakOpacityDecisionProblem{} is decidable in \FiveEXPTIME{} for \TAstext{}.
\end{proposition}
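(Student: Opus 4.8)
The plan is to follow the proof of \cref{prop:opacity:TOSEM:F-ET-decision} almost verbatim, changing only the temporal-logic formula. As in that proof, I would first augment $\TA$ with a Boolean variable $\priv$, initially false and set to true on every edge entering $\locpriv$ (and, as noted there, this can equivalently be encoded within the plain \TAtext{} syntax by duplicating $\TA$ and switching to the copy upon entering $\locpriv$). The resulting object is a \TAtext{} of size at most twice that of $\TA$, in which a run reaches $\locfinal$ with $\priv$ true (resp.\ false) exactly when the corresponding run of $\TA$ reaches $\locfinal$ after visiting (resp.\ without visiting) $\locpriv$; hence its durations realize $\PrivDurVisit{\TA}$ (resp.\ $\PubDurVisit{\TA}$).

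By \cref{def:opacity:TOSEM:weak-ET-opacity}, $\TA$ is \weakOpaqueText{} iff $\PrivDurVisit{\TA} \subseteq \PubDurVisit{\TA}$, \ie{} iff for every duration $\param$, whenever some run of duration $\param$ reaches $\locfinal$ after visiting $\locpriv$, some run of duration $\param$ also reaches $\locfinal$ without visiting $\locpriv$. This is expressed by the parametric TCTL formula, with $\param$ a parameter,
$$\forall \param \big(\exists\Diamond_{=\param} (\locfinal \wedge \priv) \Rightarrow \exists\Diamond_{=\param} (\locfinal \wedge \neg \priv) \big) ,$$
obtained from the formula in the proof of \cref{prop:opacity:TOSEM:F-ET-decision} by replacing the biconditional with an implication. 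By~\cite{BDR08}, this formula can be model-checked in \FiveEXPTIME{}, because it is evaluated on a \TAtext{} of size linear in $|\TA|$ and its own size is constant \wrt\ the size of $\TA$; this yields the claimed bound.

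There is essentially no obstacle beyond a sanity check: one must verify that the implication formula above captures precisely the one-directional inclusion $\PrivDurVisit{\TA} \subseteq \PubDurVisit{\TA}$, rather than the symmetric equality handled in \cref{prop:opacity:TOSEM:F-ET-decision}. This is immediate from the definitions of $\PrivDurVisit{\TA}$ and $\PubDurVisit{\TA}$ and from the correctness of the $\priv$-instrumentation already argued for the full case, so the adaptation is indeed straightforward.
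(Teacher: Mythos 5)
Your proof is correct and follows essentially the same route as the paper: the same $\priv$-instrumentation (borrowed from the existential/full cases) and the exact same parametric TCTL formula $\forall \param \big(\exists\Diamond_{=\param} (\locfinal \wedge \priv) \Rightarrow \exists\Diamond_{=\param} (\locfinal \wedge \neg \priv) \big)$, checked in \FiveEXPTIME{} via~\cite{BDR08}. No discrepancies to report.
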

\begin{proof}
Let $\TA$~be a \TAtext{}.
As before, we can write a parametric TCTL formula for this problem, with $\param$ a parameter:
$$\forall \param \big(\exists\Diamond_{=\param} (\locfinal \wedge \priv) \Rightarrow \exists\Diamond_{=\param} (\locfinal \wedge \neg \priv) \big)$$
This formula can be checked in \FiveEXPTIME{}~\cite{BDR08}.
\end{proof}
\section{Execution-time opacity problems in parametric timed automata}\label{section:opacity:PTA}

We now extend opacity problems to parametric timed automata.
We first address the parametric problems related to \existentialOpacityText{} in \cref{sec:opacity:TOSEM:existential}.
The decision problems associated to \fullOpacityText{} and \weakOpacityText{} will then be considered in \cref{sec:opacity:TOSEM:parametric-F-ET,sec:opacity:TOSEM:parametric-W-ET} respectively.

Following the usual concepts for parametric timed automata, we consider both \emph{emptiness} and \emph{synthesis} problems.
An emptiness problem aims at \emph{deciding} whether the set of parameter valuations for which a given property holds in the valuated \TAtext{} is empty, while a synthesis problem aims at \emph{synthesizing} the set of parameter valuations for which a given property holds in the valuated \TAtext{}.

\subsection{\existentialOpacityText{}}\label{sec:opacity:TOSEM:existential}
\subsubsection{Problems}
\paragraph{Emptiness problem for \existentialOpacityTextForSec{}}

Let us consider the following decision problem, \ie{} the problem of checking the \emph{emptiness} of the set of parameter valuations guaranteeing \existentialOpacityText{}.

\defProblem
{\existentialOpacityParamEmptinessProblem}
{A \PTAtext{}~$\PTA$
}
{Decide the emptiness of the set of parameter valuations $\pval$
such that $\valuate{\PTA}{\pval}$ is \existentialOpaqueText{}.}

The negation of the \existentialOpacityParamEmptinessProblem{} consists in deciding whether there exists at least one parameter valuation for which $\valuate{\PTA}{\pval}$ is \existentialOpaqueText{}.

\paragraph{Synthesis problem for \existentialOpacityTextForSec{}}

The synthesis counterpart allows for a higher-level problem by also synthesizing the internal timings guaranteeing \existentialOpacityText{}. %

\defProblem
{\existentialOpacityParamSynthesisProblem{}}
{A \PTAtext{}~$\PTA$
}
{Synthesize the set~$\PVal$ of parameter valuations such that $\valuate{\PTA}{\pval}$ is \existentialOpaqueText{}, for all $\pval \in \PVal$.}

\subsubsection{Undecidability in general}

With the rule of thumb that all non-trivial decision problems are undecidable for general \PTAstext{}~\cite{Andre19STTT}, the following result is not surprising, and follows from the undecidability of reachability-emptiness for \PTAstext{}~\cite{AHV93}.

\begin{theorem}[Undecidability of the \existentialOpacityParamEmptinessProblem{}~{\cite[Theorem~6.1]{ALMS22}}]\label{prop:opacity:TOSEM:E-ET-pemptiness-PTA}
The \existentialOpacityParamEmptinessProblem{} is undecidable for general \PTAstext{}.
\end{theorem}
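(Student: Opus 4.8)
The plan is to reduce from the \emph{reachability-emptiness} problem for \PTAstext{}, which is undecidable~\cite{AHV93}: given a \PTAtext{}~$\PTA$ together with a target location~$\locTarget$, decide whether the set of parameter valuations~$\pval$ such that $\locTarget$ is reachable in $\valuate{\PTA}{\pval}$ is empty. From such an instance I would construct a \PTAtext{}~$\PTA'$ (over the same parameters, using the clocks of~$\PTA$) whose set of valuations making it \existentialOpaqueText{} is \emph{exactly} the set of valuations making $\locTarget$ reachable in~$\PTA$; undecidability of the \existentialOpacityParamEmptinessProblem{} is then immediate, since any decision procedure for it would decide reachability-emptiness for~$\PTA$.

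The construction of~$\PTA'$ is a simple gadget. Add a fresh initial location~$\loci{0}'$ with trivial invariant, so that arbitrary time can elapse there. From~$\loci{0}'$ add two outgoing edges: a \emph{public} edge with guard true and no reset going to a fresh location~$\locfinal$ (declared final in~$\PTA'$); and a \emph{private} edge with guard true that resets all clocks and goes to a fresh location~$\locpriv$ (declared private in~$\PTA'$), from which we plug in a verbatim copy of~$\PTA$ entered at its own initial location (the original private location of~$\PTA$, if any, is treated as an ordinary location in this copy, as it is irrelevant to reachability). Finally, from the copy of~$\locTarget$ add an edge with guard true to~$\locfinal$.

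For correctness, fix any parameter valuation~$\pval$. Every run of $\valuate{\PTA'}{\pval}$ reaching~$\locfinal$ either takes the public edge directly, never visiting~$\locpriv$, or first enters~$\locpriv$; hence the public runs are precisely those that let some delay $d \in \setRgeqzero$ elapse in~$\loci{0}'$ and then jump to~$\locfinal$, so $\PubDurVisit{\valuate{\PTA'}{\pval}} = \setRgeqzero$ regardless of~$\pval$. Therefore $\PrivDurVisit{\valuate{\PTA'}{\pval}} \cap \PubDurVisit{\valuate{\PTA'}{\pval}} = \PrivDurVisit{\valuate{\PTA'}{\pval}}$, so $\valuate{\PTA'}{\pval}$ is \existentialOpaqueText{} iff $\PrivDurVisit{\valuate{\PTA'}{\pval}} \neq \emptyset$. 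Now a private run reaching~$\locfinal$ exists iff, after the clock reset, the embedded copy of~$\valuate{\PTA}{\pval}$ — whose initial configuration $(\locinit,\ClocksZero)$ is faithfully reproduced because all clocks are~$0$ — can reach the copy of~$\locTarget$ (after which the guard-true edge to~$\locfinal$ is always enabled), i.e.\ iff $\locTarget$ is reachable in~$\valuate{\PTA}{\pval}$. Hence the two valuation sets coincide, which completes the reduction.

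The whole argument is elementary; there is no deep obstacle, and the only points requiring care are bookkeeping: making $\loci{0}'$, $\locpriv$, $\locfinal$ genuinely new so the embedded copy cannot create spurious public runs nor re-enter~$\loci{0}'$; resetting all clocks when control enters the copy so that its semantics is unchanged; and — the single quantitative point that makes the reduction work — observing that $\PubDurVisit{\valuate{\PTA'}{\pval}}$ is all of~$\setRgeqzero$ for \emph{every}~$\pval$, which is exactly what collapses \existentialOpaqueText{}ness of~$\PTA'$ to plain reachability in~$\PTA$.
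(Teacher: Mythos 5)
Your reduction is correct and follows exactly the route the paper indicates: it derives the result from the undecidability of reachability-emptiness for \PTAstext{}~\cite{AHV93}, and your gadget (an always-available public path making $\PubDurVisit{\valuate{\PTA'}{\pval}} = \setRgeqzero$ for every~$\pval$, plus a private path whose non-emptiness is equivalent to reachability of the target) is a faithful instantiation of that reduction. The paper itself only cites this argument and defers the concrete construction to the prior work, so your proposal simply supplies the details along the same lines, and I see no gap in them.
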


Since the emptiness problem is undecidable, the synthesis problem is immediately unsolvable as well.

\begin{corollary}\label{thm:opacity:TOSEM:synthesisPTA}
	The \existentialOpacityParamSynthesisProblem{} is unsolvable for general \PTAstext{}.
\end{corollary}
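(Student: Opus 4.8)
The plan is to observe that the \existentialOpacityParamSynthesisProblem{} subsumes the \existentialOpacityParamEmptinessProblem{}, and then to invoke \cref{prop:opacity:TOSEM:E-ET-pemptiness-PTA}. First I would argue by contradiction: suppose there existed an effective procedure solving the \existentialOpacityParamSynthesisProblem{}, \ie{} a procedure that, given any \PTAtext{}~$\PTA$, outputs a finite symbolic representation of the set~$\PVal$ of all parameter valuations~$\pval$ such that $\valuate{\PTA}{\pval}$ is \existentialOpaqueText{}.

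Next I would note that, under any of the standard notions of ``solving'' a synthesis problem for \PTAstext{} (\eg{} outputting a finite union of polyhedra over the parameters, as produced by \imitator{}-style algorithms), the output representation admits an effective emptiness test. Composing the hypothetical synthesis procedure with this emptiness test would then yield a decision procedure for the \existentialOpacityParamEmptinessProblem{}: given~$\PTA$, compute~$\PVal$ and return ``empty'' iff the representation denotes~$\emptyset$. This contradicts \cref{prop:opacity:TOSEM:E-ET-pemptiness-PTA}, which establishes that the \existentialOpacityParamEmptinessProblem{} is undecidable for general \PTAstext{}. Hence no such synthesis procedure exists, and the \existentialOpacityParamSynthesisProblem{} is unsolvable for general \PTAstext{}.

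I do not expect any genuine obstacle here: the argument is immediate once the reduction direction (emptiness reduces to synthesis) is made explicit. The only point worth stating carefully is the implicit convention on what counts as a solution to a synthesis problem, namely that the returned description of~$\PVal$ belongs to an effectively representable class closed under emptiness checking; this is the usual convention in the \PTAtext{} literature, so it requires no extra work.
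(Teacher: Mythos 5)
Your argument is correct and is exactly the one the paper uses: the paper's proof is the single observation that undecidability of the \existentialOpacityParamEmptinessProblem{} (\cref{prop:opacity:TOSEM:E-ET-pemptiness-PTA}) immediately renders the synthesis problem unsolvable, since an effective synthesis procedure with an emptiness-testable output would decide emptiness. You merely spell out the implicit convention on representations, which is fine and adds nothing contentious.
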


Nevertheless, in~\cite{ALMS22} we proposed a procedure solving this problem.
While this procedure is not guaranteed to terminate, its result is correct when termination can be achieved.
See~\cite[Section~8]{ALMS22} for details.

\subsubsection{The subclass of \LUPTAstext{}}\label{ss:ET-opacity-LU}

\paragraph{Decidability}

We now show that the \existentialOpacityParamEmptinessProblem{} is decidable for \LUPTAstext{}.
Despite early positive results for \LUPTAstext{}~\cite{HRSV02,BlT09}, more recent results (notably \cite{JLR15,ALR18FORMATS,ALR22}) mostly proved undecidable properties of \LUPTAstext{}, and therefore this positive result is welcome.

\begin{theorem}[Decidability of the \existentialOpacityParamEmptinessProblem{}~{\cite[Theorem~6.2]{ALMS22}}]\label{prop:opacity:TOSEM:E-ET-pemptiness-LUPTA}
The \existentialOpacityParamEmptinessProblem{} is decidable for \LUPTAstext{}.
\end{theorem}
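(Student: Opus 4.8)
The plan is to reduce the \existentialOpacityParamEmptinessProblem{} for an \acs{lupta}~$\PTA$ to the reachability-emptiness problem for \acsp{lupta}, which is known to be decidable~\cite{HRSV02}. From~$\PTA$ I would build a \PTAtext{}~$\PTA'$, obtained as an (almost asynchronous) parallel composition of two modified copies of~$\PTA$ that share a single fresh synchronisation action~$\extraAction$, so that for \emph{every} parameter valuation~$\pval$ a distinguished location of~$\valuate{\PTA'}{\pval}$ is reachable iff $\valuate{\PTA}{\pval}$ is \existentialOpaqueText{}. The \existentialOpacityParamEmptinessProblem{} for~$\PTA$ is then exactly the reachability-emptiness problem for~$\PTA'$, and checking that~$\PTA'$ is still an \acs{lupta} closes the argument.

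First I would construct the \emph{private copy}~$\PTAi{1}$: a copy of~$\PTA$ (with clocks and all non-$\extraAction$ actions renamed apart) augmented with a Boolean flag recording whether $\locpriv$ has been visited --- encoded, as in the proof of \cref{prop:opacity:TOSEM:E-ET-decision}, by duplicating the location set --- in which all outgoing edges of~$\locfinal$ are removed and replaced by a single $\extraAction$-edge, guarded by ``flag is true'', leading to a fresh sink location~$\loc^1_{\mathit{done}}$; moreover a fresh clock~$\clocky_1$ is reset on every edge entering~$\locfinal$, and the invariant of~$\locfinal$ becomes $\clocky_1 \leq 0$, so that no time may elapse in~$\locfinal$. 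Symmetrically, the \emph{public copy}~$\PTAi{2}$ is a copy of~$\PTA$ in which $\locpriv$ and all its incident edges are deleted, again with $\locfinal$ turned into a no-delay location that can only fire an $\extraAction$-edge towards a fresh sink~$\loc^2_{\mathit{done}}$. Then $\PTA'$ is the parallel composition of~$\PTAi{1}$ and~$\PTAi{2}$ synchronising on~$\extraAction$ only (the two clock sets are disjoint and all other actions are distinct, so the copies interleave freely), and the distinguished location is~$(\loc^1_{\mathit{done}}, \loc^2_{\mathit{done}})$. By construction, a run of~$\PTAi{1}$ reaching~$\locfinal$ (necessarily for the first time, since $\locfinal$ is now a trap) with the flag true corresponds exactly to a private run of~$\PTA$, and a run of~$\PTAi{2}$ reaching~$\locfinal$ corresponds exactly to a public run.

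For correctness, fix~$\pval$. If $\valuate{\PTA}{\pval}$ is \existentialOpaqueText{}, there is a duration $d \in \PrivDurVisit{\valuate{\PTA}{\pval}} \cap \PubDurVisit{\valuate{\PTA}{\pval}}$, witnessed by a private run~$\run_1$ and a public run~$\run_2$. Merging~$\run_1$ and~$\run_2$ along the shared global time in~$\valuate{\PTA'}{\pval}$ (which is possible precisely because both have total duration~$d$ and invariants of each copy depend only on that copy's clocks), each copy enters its final location at global time~$d$; the invariants~$\clocky_i \leq 0$ then force~$\extraAction$ to be the only available move in each copy at time~$d$, and synchronisation lets the two copies fire it jointly, reaching~$(\loc^1_{\mathit{done}}, \loc^2_{\mathit{done}})$. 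Conversely, any run of~$\valuate{\PTA'}{\pval}$ reaching~$(\loc^1_{\mathit{done}}, \loc^2_{\mathit{done}})$ fires~$\extraAction$ at one global time~$d$; the zero invariants ensure both copies entered~$\locfinal$ at time~$d$ (and, $\locfinal$ being a trap, on their first visit), so projecting the run onto each component yields a private run and a public run of~$\valuate{\PTA}{\pval}$ of duration~$d$, whence $d \in \PrivDurVisit{\valuate{\PTA}{\pval}} \cap \PubDurVisit{\valuate{\PTA}{\pval}}$. Thus the set of valuations for which~$(\loc^1_{\mathit{done}}, \loc^2_{\mathit{done}})$ is reachable in~$\PTA'$ coincides with the set of valuations making~$\valuate{\PTA}{\pval}$ \existentialOpaqueText{}, so the two emptiness questions are equivalent.

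Finally I would observe that $\PTA'$ is still an \acs{lupta}: both copies reuse the guards and invariants of~$\PTA$ verbatim (only clocks are renamed, never parameters), so each parameter keeps exactly the lower-bound or upper-bound polarity it had in~$\PTA$, and the only new constraints --- those on~$\clocky_1, \clocky_2$ and on the $\extraAction$-edges --- are parameter-free. Decidability of the \existentialOpacityParamEmptinessProblem{} for \LUPTAstext{} then follows from decidability of reachability-emptiness for \LUPTAstext{}~\cite{HRSV02}. I expect the only genuinely delicate point to be the product bookkeeping that forces the two copies to enter~$\locfinal$ at \emph{exactly} the same instant: the fresh clocks, their $\leq 0$ invariants and the guards of the $\extraAction$-edges must be set up so that no spurious, desynchronised computation can reach~$(\loc^1_{\mathit{done}}, \loc^2_{\mathit{done}})$. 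Once this is done, decidability is immediate from the known result on \LUPTAstext{}, and no deeper argument is needed.
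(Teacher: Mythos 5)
Your reduction is correct and matches the approach behind the cited result: the paper's own method (see also the workflow recalled in \cref{section:implementation}) likewise adds a Boolean flag and a final synchronisation action, self-composes the \PTAtext{}, and reduces \existentialOpacityText{} to reachability of a final configuration with contradictory flag values, so that for \LUPTAstext{} decidability follows from the decidability of reachability-emptiness~\cite{HRSV02} once one observes that the composition preserves the L/U structure. Your variant with two asymmetrically modified copies and the zero-invariant clocks forcing simultaneous entry into $\locfinal$ is just a concrete realisation of the same construction, so no further comparison is needed.
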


\paragraph{Intractability of synthesis for \LUPTAstextForSec{}}

Even though the \existentialOpacityParamEmptinessProblem{} is decidable for \LUPTAstext{} (\cref{prop:opacity:TOSEM:E-ET-pemptiness-LUPTA}), the \emph{synthesis} of the parameter valuations remains intractable in general, as shown in the following \cref{prop:opacity:TOSEM:E-ET-psynthesis-LUPTA}.
By intractable we mean more precisely that the solution, if it can be computed, cannot (in general, \ie{} for some sufficiently complex solutions) be represented using any formalism for which the emptiness of the intersection with equality constraints is decidable.
That is, a formalism in which it is decidable to decide ``the emptiness of the valuation set of the computed solution intersected with an equality test between variables''\ cannot be used to represent the solution.
For example, let us question whether we could represent the solution of the \existentialOpacityParamSynthesisProblem{} for \LUPTAstext{} using the formalism of a \emph{finite union of polyhedra}: testing whether a finite union of polyhedra intersected with ``equality constraints'' (typically $\parami{1} = \parami{2}$) is empty or not \emph{is} decidable.
The Parma polyhedra library~\cite{BHZ08} can typically compute the answer to this question.
Therefore, from the following \cref{prop:opacity:TOSEM:E-ET-psynthesis-LUPTA}, finite unions of polyhedra cannot be used to represent the solution of the \existentialOpacityParamSynthesisProblem{} for \LUPTAstext{}.
As finite unions of polyhedra are a very common formalism (not to say the \emph{de facto} standard) to represent the solutions of various timing parameters synthesis problems, the synthesis is then considered to be infeasible in practice, or \emph{intractable} (following the vocabulary used in \cite[Theorem~2]{JLR15}).

\begin{proposition}[Intractability of the \existentialOpacityParamSynthesisProblem{}~{\cite[Proposition~6.4]{ALMS22}}]\label{prop:opacity:TOSEM:E-ET-psynthesis-LUPTA}
In case a solution to the  \existentialOpacityParamSynthesisProblem{} for \LUPTAstext{} can be computed, this solution may be not representable using any formalism for which the emptiness of the intersection with equality constraints is decidable.
\end{proposition}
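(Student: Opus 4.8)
The plan is to adapt the classical intractability argument of Jovanović, Lime and Roux~\cite{JLR15} --- originally devised for reachability synthesis in \LUPTAstext{} --- to the \existentialOpacityText{} setting, reducing this time from the (undecidable) \existentialOpacityParamEmptinessProblem{} for general \PTAstext{} rather than from reachability-emptiness. The intuition is that, although the emptiness of the synthesis set is \emph{decidable} for \LUPTAstext{} (\cref{prop:opacity:TOSEM:E-ET-pemptiness-LUPTA}), the set itself cannot be described in any formalism precise enough to be intersected with a ``diagonal'' constraint $\paramiLower{j} = \paramiUpper{j}$ and then tested for emptiness, since this would decide the undecidable problem on the underlying \PTAtext{}.

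First, I would recall the standard parameter-splitting translation from \PTAstext{} to \LUPTAstext{}. Given a \PTAtext{}~$\PTA$ over parameters $\parami{1}, \dots, \parami{\ParamCard}$, I build \LUPTAtext{}~$\PTAi{LU}$ over $2\ParamCard$ fresh parameters, a lower-bound copy $\paramiLower{j}$ and an upper-bound copy $\paramiUpper{j}$ for each~$\parami{j}$, obtained by replacing, in every guard and invariant, each occurrence of~$\parami{j}$ inside an atomic constraint $\clock \compOp \sum_k \alpha_k \parami{k} + d$ by $\paramiUpper{j}$ if increasing $\parami{j}$ would enlarge the set of clock valuations satisfying that atom (\ie{} $\parami{j}$ acts there as an upper bound), and by $\paramiLower{j}$ otherwise; an atom with ${\compOp}$ equal to $=$ is first split into a $\leq$-atom and a $\geq$-atom before applying this rule. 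By construction $\PTAi{LU}$ is \LUPTAtext{}. Moreover, for any valuation~$\pval$ of~$\PTAi{LU}$ on the diagonal, \ie{} with $\pval(\paramiLower{j}) = \pval(\paramiUpper{j})$ for all~$j$, the \TAtext{}~$\valuate{\PTAi{LU}}{\pval}$ coincides --- up to rewriting each conjunction $\clock \leq \gamma \wedge \clock \geq \gamma$ back into $\clock = \gamma$ --- with $\valuate{\PTA}{\pval'}$ where $\pval'(\parami{j}) = \pval(\paramiLower{j})$; hence these two timed automata have exactly the same runs, the same private and public runs, the same duration sets, and in particular $\valuate{\PTAi{LU}}{\pval}$ is \existentialOpaqueText{} exactly when $\valuate{\PTA}{\pval'}$ is.

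Then I would conclude by contradiction. Suppose that for every \LUPTAtext{} a solution to the \existentialOpacityParamSynthesisProblem{} could be computed and represented in some formalism~$\mathcal{F}$ for which the emptiness of the intersection with equality constraints is decidable. Apply this to the \LUPTAtext{}~$\PTAi{LU}$ above, obtaining a representation of the set~$\PVal$ of valuations~$\pval$ such that $\valuate{\PTAi{LU}}{\pval}$ is \existentialOpaqueText{}. By the correspondence of the previous paragraph, $\PVal \cap \set{ \paramiLower{j} = \paramiUpper{j} \text{ for every } j } = \emptyset$ if and only if there is \emph{no} valuation~$\pval'$ of~$\PTA$ for which $\valuate{\PTA}{\pval'}$ is \existentialOpaqueText{}. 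Since the left-hand emptiness is decidable by hypothesis on~$\mathcal{F}$ (equality constraints between variables being a special case of the equality constraints~$\mathcal{F}$ is required to handle), we would thereby decide the \existentialOpacityParamEmptinessProblem{} for the arbitrary \PTAtext{}~$\PTA$, contradicting \cref{prop:opacity:TOSEM:E-ET-pemptiness-PTA}. Hence no such formalism~$\mathcal{F}$ can represent the solution, which is the claim.

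I expect the delicate point to be the diagonal-correspondence argument of the second step: one must check that the per-occurrence replacement really produces \LUPTAtext{} (each copy always occurring with a sign consistent with its lower/upper role), that the splitting of equality atoms is harmless, and --- crucially --- that nothing specific to the \existentialOpacityText{} setting (the private location, the way the final location is entered) introduces a parameter occurrence that would break the equality between $\valuate{\PTAi{LU}}{\pval}$ and $\valuate{\PTA}{\pval'}$ on the diagonal; off the diagonal the behaviour of~$\PTAi{LU}$ is irrelevant to the argument. Everything else --- the reduction skeleton and the use of a decidable-emptiness formalism to derive a contradiction --- is routine and mirrors~\cite{JLR15}. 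Finally, I would remark that this is consistent with \cref{prop:opacity:TOSEM:E-ET-pemptiness-LUPTA}: one can still \emph{decide} whether~$\PVal$ is empty, but not produce a representation of~$\PVal$ faithful enough to be tested against the diagonal.
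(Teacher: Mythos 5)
Your argument is correct and is essentially the proof of the cited \cite[Proposition~6.4]{ALMS22} that this paper only recalls: split each parameter into a lower-bound copy $\paramiLower{j}$ and an upper-bound copy $\paramiUpper{j}$, observe that on the diagonal $\paramiLower{j}=\paramiUpper{j}$ the resulting \LUPTAtext{} behaves exactly like the original \PTAtext{}, and conclude that any representation of the synthesized set admitting decidable emptiness of intersection with equality constraints would decide the \existentialOpacityParamEmptinessProblem{} for general \PTAstext{}, contradicting \cref{prop:opacity:TOSEM:E-ET-pemptiness-PTA}. The delicate points you flag (the per-occurrence, sign-based replacement yields a genuine \LUPTAtext{}, the splitting of $=$ atoms is harmless, and the construction leaves the private/final-location structure untouched) all check out, so no gap remains.
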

\subsection{Parametric \fullOpacityTextForSec{}}\label{sec:opacity:TOSEM:parametric-F-ET}

We address here the following decision problem, which asks about the emptiness of the parameter valuation set guaranteeing \fullOpacityText{}.
We also define the \fullOpacityParamSynthesisProblem{}, this time \emph{synthesizing} the timing parameters guaranteeing \fullOpacityText{}.

\subsubsection{Problem definitions}

\defProblem
{\FullOpacityParamEmptinessProblem{}}
{A \PTAtext{}~$\PTA$
}
{Decide the emptiness of the set of parameter valuations $\pval$
such that $\valuate{\PTA}{\pval}$ is \fullOpaqueText{}.}

Equivalently, we are interested in deciding whether there exists at least one parameter valuation for which $\valuate{\PTA}{\pval}$ is \fullOpaqueText{}.

We also define the \emph{\fullOpacityParamSynthesisProblem{}}, aiming at synthesizing (ideally the entire set of) parameter valuations~$\pval$ for which $\valuate{\PTA}{\pval}$ is \fullOpaqueText{}.

\defProblem
{\FullOpacityParamSynthesisProblem{}}
{A \PTAtext{}~$\PTA$
}
{Synthesize the set~$\PVal$ of parameter valuations such that $\valuate{\PTA}{\pval}$ is \fullOpaqueText{}, for all $\pval \in \PVal$.}

\subsubsection{Undecidability for general PTAs}

Considering that \cref{prop:opacity:TOSEM:E-ET-pemptiness-PTA} shows the undecidability of the \existentialOpacityParamEmptinessProblem{}, the undecidability of the \fullOpacityParamEmptinessProblem{} is not surprising, but does not follow immediately.

\begin{theorem}[Undecidability of the \fullOpacityParamEmptinessProblem{}~{\cite[Theorem~7.2]{ALMS22}}]\label{prop:opacity:TOSEM:F-ET-pemptiness-PTA}
The \fullOpacityParamEmptinessProblem{} is undecidable for general \PTAstext{}.
\end{theorem}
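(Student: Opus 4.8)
The plan is to reduce from the \emph{reachability-emptiness} problem for \PTAstext{}, which is undecidable~\cite{AHV93}: given a \PTAtext{}~$\PTA'$ and one of its locations~$\loc$, it is undecidable whether the set $R=\{\pval'\mid \loc \text{ is reachable in } \valuate{\PTA'}{\pval'}\}$ is empty. From an instance $(\PTA',\loc)$ I will build a \PTAtext{}~$\PTA$ --- adding one fresh clock~$\sclock$, one fresh parameter~$\sparam$, and fresh locations $\locinit,\locpub,\locpriv,\locfinal$ --- so that, for every parameter valuation~$\pval$ of~$\PTA$, $\valuate{\PTA}{\pval}$ is \fullOpaqueText{} iff $\loc$ is reachable in $\valuate{\PTA'}{\pval}$ by a run of duration at most~$\pval(\sparam)$. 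Hence $\{\pval\mid \valuate{\PTA}{\pval}\text{ is \fullOpaqueText}\}$ is empty iff $R$ is empty, so a decision procedure for the \fullOpacityParamEmptinessProblem{} would decide reachability-emptiness, which is impossible.

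For the construction, I make $\locinit$ the initial location with invariant $\sclock\le 0$ (forcing an instantaneous departure) and three outgoing edges, all fireable at $\sclock=0$: one into the initial location of~$\PTA'$, resetting every clock of~$\PTA'$ and with no further reset of~$\sclock$ along that branch (so that at~$\loc$ the clock~$\sclock$ equals the total elapsed time); one to $\locpub$, resetting~$\sclock$; and one to $\locpriv$, resetting~$\sclock$. From $\locpub$ (no invariant) I add two edges to $\locfinal$ guarded by $\sclock<\sparam$ and $\sclock>\sparam$; from $\locpriv$ (no invariant) a single edge to $\locfinal$ guarded by $\mathrm{true}$; and from~$\loc$ an edge to a fresh location~$\loc'$ with no invariant, not resetting~$\sclock$, then from~$\loc'$ an edge to $\locfinal$ guarded by $\sclock=\sparam$. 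Then $\PrivDurVisit{\valuate{\PTA}{\pval}}=\setRgeqzero$ always (the $\locpriv$ branch is the only run through the private location), and $\PubDurVisit{\valuate{\PTA}{\pval}}$ always contains $\setRgeqzero\setminus\{\pval(\sparam)\}$ (the $\locpub$ branch) and additionally contains the single value $\pval(\sparam)$ exactly when the branch $\locinit\to\PTA'\to\loc\to\loc'\to\locfinal$ can complete, i.e.\ when $\loc$ is reachable in $\valuate{\PTA'}{\pval}$ at some time $t_0\le\pval(\sparam)$ (reach~$\loc$ at~$t_0$, let $\sclock$ grow in~$\loc'$ up to $\pval(\sparam)$, then take the last edge, for total execution time exactly $\pval(\sparam)$). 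Consequently $\valuate{\PTA}{\pval}$ is \fullOpaqueText{}, i.e.\ $\PrivDurVisit{\valuate{\PTA}{\pval}}=\PubDurVisit{\valuate{\PTA}{\pval}}$, iff $\PubDurVisit{\valuate{\PTA}{\pval}}=\setRgeqzero$, iff that simulation branch completes. If $R\neq\emptyset$, choosing $\pval'\in R$ and extending it with a rational value of~$\sparam$ at least some duration at which~$\loc$ is reached yields a \fullOpaqueText{} valuation; if $R=\emptyset$, then for every~$\pval$ the value $\pval(\sparam)$ lies in $\PrivDurVisit{\valuate{\PTA}{\pval}}$ but not in $\PubDurVisit{\valuate{\PTA}{\pval}}$, so no valuation is \fullOpaqueText{}.

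The crux --- and the reason this does not follow immediately from the undecidability of the \existentialOpacityParamEmptinessProblem{} (\cref{prop:opacity:TOSEM:E-ET-pemptiness-PTA}) --- is that \fullOpacityText{} demands the \emph{equality} of the private and public execution-time sets, whereas \existentialOpacityText{} only needs a non-empty intersection, which a reachable run through~$\loc$ supplies automatically. I must therefore engineer a deliberately missing single execution time on the public side and force the reachability instance to fill precisely that point; the auxiliary parameter~$\sparam$ is exactly what allows this one point to be aligned with the a priori uncontrolled time needed by~$\PTA'$ to reach~$\loc$. The routine checks are that the invariant $\sclock\le0$ at~$\locinit$ prevents the $\locpub$ and $\locpriv$ branches from filling the gap themselves (no slack is available before~$\sclock$ is reset), that the simulation branch contributes the execution time $\pval(\sparam)$ and nothing else, and that~$\locpriv$ is the only private route --- all immediate from the shapes of the gadgets.
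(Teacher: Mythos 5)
Your reduction is correct, but it takes a genuinely different route from the paper's. The published proof (\cite[Theorem~7.2]{ALMS22}) reduces from reachability-emptiness \emph{in constant time}, whose undecidability must first be established as a separate intermediate result (\cite[Lemma~7.1]{ALMS22}, in the spirit of the bounded-time undecidability of \cite{ALM20}); the gadget there pins the private duration set to a singleton such as $\set{1}$ and makes the public duration set equal to it exactly when the constant-time reachability instance is positive (compare the construction given here for the weak variant, \cref{prop:opacity:TOSEM:W-ET-pemptiness-PTA}). You instead reduce directly from plain reachability-emptiness~\cite{AHV93}, which works because you introduce an extra parameter~$\sparam$ that absorbs the a~priori unknown time needed to reach~$\loc$: the pair of guards $\sclock<\sparam$ and $\sclock>\sparam$ punches a single hole at $\pval(\sparam)$ in $\PubDurVisit{\valuate{\PTA}{\pval}}$ while $\PrivDurVisit{\valuate{\PTA}{\pval}}=\setRgeqzero$, and the hole is filled exactly when the simulation branch can terminate at time $\pval(\sparam)$, \ie{} when $\loc$ is reachable within time $\pval(\sparam)$. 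Since $\sparam$ is existentially quantified by the emptiness problem and every reachable location is reached by some run of finite duration (hence bounded by some rational), ``reachable within $\pval(\sparam)$ for some $\sparam$'' collapses to plain reachability, so the equivalence of the two emptiness problems holds. Your argument is therefore more self-contained---no auxiliary undecidability result about timed reachability is needed---at the price of one extra parameter and of a construction that cannot be recycled for L/U-PTAs (the guards $\sclock<\sparam$ and $\sclock>\sparam$ use $\sparam$ simultaneously as an upper and a lower bound, which \cref{def:LUPTA} forbids), whereas the paper's family of constructions adapts to \cref{prop:opacity:TOSEM:F-ET-pemptiness-LUPTA}. The routine checks you list---the invariant $\sclock\le 0$ at $\locinit$ removing any slack before the resets, the freshness of $\sclock$ so that it measures total elapsed time along the simulation branch, and the existence of a rational value of $\sparam$ dominating the duration of a witness run---all go through.
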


The proof relies on a reduction from the problem of reachability-emptiness in constant time, a result proved itself undecidable in the same paper \cite[Lemma~7.1]{ALMS22}.

Since the emptiness problem is undecidable, the synthesis problem is immediately unsolvable as well.

\begin{corollary}\label{thm:opacity:TOSEM:synthesisPTAfull}
	The \fullOpacityParamSynthesisProblem{} is unsolvable for \PTAstext{}. %
\end{corollary}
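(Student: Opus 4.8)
The plan is to obtain this corollary as an immediate consequence of the undecidability of the \fullOpacityParamEmptinessProblem{} (\cref{prop:opacity:TOSEM:F-ET-pemptiness-PTA}), via the standard observation that an algorithm solving a synthesis problem would, in particular, let us decide the associated emptiness problem. No new machinery is needed: all of the genuine difficulty already resides in \cref{prop:opacity:TOSEM:F-ET-pemptiness-PTA}, whose proof (a reduction from reachability-emptiness in constant time, \cite[Lemma~7.1]{ALMS22}) is cited rather than reproduced.

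Concretely, I would argue by contradiction. Suppose there were an algorithm that, given a \PTAtext{}~$\PTA$, computes a finite effective representation of the set~$\PVal$ of all parameter valuations~$\pval$ such that $\valuate{\PTA}{\pval}$ is \fullOpaqueText{}. Then, on any input \PTAtext{}, one could first run this algorithm to obtain~$\PVal$, and then test whether the resulting representation denotes the empty set; returning the answer to that test decides whether there exists a valuation making the valuated automaton \fullOpaqueText{}, i.e.\ it solves the \fullOpacityParamEmptinessProblem{}. This contradicts \cref{prop:opacity:TOSEM:F-ET-pemptiness-PTA}, hence no such synthesis algorithm exists and the \fullOpacityParamSynthesisProblem{} is unsolvable. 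This mirrors exactly the earlier derivation of \cref{thm:opacity:TOSEM:synthesisPTA} from \cref{prop:opacity:TOSEM:E-ET-pemptiness-PTA}.

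The only delicate point — and hence the one I would spell out with some care, although it is minor — is fixing what ``solvable'' should mean for a synthesis problem so that the reduction is legitimate: one needs that, whatever effective formalism is used to represent the synthesized~$\PVal$, emptiness of that representation is decidable. This holds for all the usual candidate formalisms (finite unions of polyhedra, disjunctions of conjunctions of linear parameter constraints, etc.), so the composed procedure above is a genuine algorithm; equivalently, and more robustly, one simply adopts the convention that ``solving the synthesis problem'' entails at least being able to decide whether the synthesized set is empty, under which the corollary is immediate. I expect no real obstacle beyond stating this convention cleanly.
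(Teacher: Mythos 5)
Your proposal is correct and follows exactly the paper's argument: the corollary is obtained as an immediate consequence of the undecidability of the \fullOpacityParamEmptinessProblem{} (\cref{prop:opacity:TOSEM:F-ET-pemptiness-PTA}), since any effective solution to the synthesis problem would allow one to decide emptiness. Your additional care in pinning down what ``solvable'' means for a synthesis problem is a reasonable elaboration of the same one-line reduction the paper uses.
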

\subsubsection{Undecidability for \LUPTAstextForSec{}}
\begin{figure}[tb]
{\centering
\begin{tikzpicture}[->, >=stealth', auto, node distance=2cm, thin]

	\node[location, initial] at (0, 0) (s0) {$\locinit$};
	\node[location, private] at (1.5,1.5) (sp) {$\locpriv$};
	\node[location, final] at (3,0) (s1) {$\locfinal$};

	\path (s0) edge node[align=center]{$\sclock \leq \sparam$} (s1);
	\path (s0) edge node[align=center]{} (sp);
	\path (sp) edge[] node[align=center]{$\sclock \leq 1$} (s1);

\end{tikzpicture}

}
\caption{No monotonicity for \fullOpacityTextForSec{} in \LUPTAstextForFig{}}
\label{figure:opacity:TOSEM:counterexample-monotonicity-LU}
\end{figure}

Let us now study the \fullOpacityParamEmptinessProblem{} for \LUPTAstext{}.
While it is well-known that \LUPTAstext{} enjoy a monotonicity for reachability properties (``enlarging an upper-bound parameter or decreasing a lower-bound parameter preserves reachability'')~\cite{HRSV02}, we can show in the following example that this is not the case for \fullOpacityText{}.

\begin{example}
Consider the \PTAtext{} in \cref{figure:opacity:TOSEM:counterexample-monotonicity-LU}.
First assume $\pval$ such that $\pval(\param) = 0.5$.
Then, $\valuate{\PTA}{\pval}$ is not \fullOpaqueText{}: indeed, $\locfinal$ can be reached in $1$ time unit by visiting~$\locpriv$, but not without visiting~$\locpriv$.

Second, assume $\pval'$ such that $\pval'(\param) = 1$.
Then, $\valuate{\PTA}{\pval'}$ is \fullOpaqueText{}: indeed, $\locfinal$ can be reached for any duration in $[0,1]$ by runs both visiting and not visiting~$\locpriv$.

Finally, let us enlarge~$\param$ further, and assume $\pval''$ such that $\pval''(\param) = 2$.
Then, $\valuate{\PTA}{\pval''}$ becomes again not \fullOpaqueText{}: indeed, $\locfinal$ can be reached in $2$ time units without visiting~$\locpriv$, but cannot be reached in $2$ time units by visiting~$\locpriv$.

As a side note, remark that this \PTAtext{} is actually \UPTAtext{}, that is, monotonicity for this problem does not even hold for \UPTAstext{}.
\end{example}

\smallskip

In fact, we show that, while the \existentialOpacityParamEmptinessProblem{} is decidable for \LUPTAstext{} (\cref{prop:opacity:TOSEM:E-ET-pemptiness-LUPTA}), the \fullOpacityParamEmptinessProblem{} becomes undecidable for this same class. %
This confirms (after previous works in~\cite{BlT09,JLR15,ALR18FORMATS,ALR22}) that \LUPTAstext{} stand at the frontier between decidability and undecidability.

\begin{theorem}[Undecidability of the \fullOpacityParamEmptinessProblem{} for \LUPTAstextForFig{}~{\cite[Theorem~7.4]{ALMS22}}]\label{prop:opacity:TOSEM:F-ET-pemptiness-LUPTA}
The \fullOpacityParamEmptinessProblem{} is undecidable for \LUPTAstext{}. %
\end{theorem}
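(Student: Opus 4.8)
The plan is to reduce from an undecidable problem about $2$-counter (Minsky) machines --- concretely, the halting problem --- by constructing, from a machine $M$, an \LUPTAtext{} $\PTA_M$ such that the set of parameter valuations $\pval$ for which $\valuate{\PTA_M}{\pval}$ is \fullOpaqueText{} is non-empty iff $M$ halts. Since the \existentialOpacityParamEmptinessProblem{} is already decidable for \LUPTAstext{} (\cref{prop:opacity:TOSEM:E-ET-pemptiness-LUPTA}), the whole leverage must come from the fact that \fullOpacityText{} is a \emph{two-sided} (equality) requirement $\PrivDurVisit{\valuate{\PTA_M}{\pval}} = \PubDurVisit{\valuate{\PTA_M}{\pval}}$: unlike reachability, such an equality can covertly encode the exact clock/parameter comparisons that the L/U syntax forbids, which is exactly why monotonicity fails here (as already witnessed by \cref{figure:opacity:TOSEM:counterexample-monotonicity-LU}).

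Concretely, I would start from the PTA underlying \cite[Lemma~7.1]{ALMS22} (used already for \cref{prop:opacity:TOSEM:F-ET-pemptiness-PTA}), in which reachability of $\locfinal$ in a \emph{constant} time~$T$ is undecidable because a single parameter~$p$ is compared to clocks both from below ($\sclock \geq \sparam$) and from above ($\sclock \leq \sparam$); that double use is precisely what is not allowed in an \LUPTAtext{}. The idea is to \emph{split}~$\param$ into a lower-bound parameter~$p^l$ and an upper-bound parameter~$p^u$, replacing each $\sclock \geq \sparam$ by $\sclock \geq p^l$ and each $\sclock \leq \sparam$ by $\sclock \leq p^u$, so that the simulation part becomes syntactically L/U, and routing it through~$\locpriv$ so that its completions are private runs. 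This relaxation alone breaks faithfulness, so the second ingredient is a forcing gadget in the spirit of \cref{figure:opacity:TOSEM:counterexample-monotonicity-LU}, placed in parallel and operating at durations larger than~$T$: along a path avoiding~$\locpriv$ it contributes an interval of length~$p^u$, and along a path through~$\locpriv$ an interval of length~$p^l$, so that the equality demanded by \fullOpacityText{} can hold only when $p^l = p^u$. Finally I would pad both sides with reference runs yielding durations $[0,T)$ along a path through~$\locpriv$ and $[0,T]$ along a path avoiding it, and arrange the simulation so that its private completion reaches~$\locfinal$ in time exactly~$T$ iff $M$ halts; then, under $p^l = p^u = k$, the only possibly-missing point of the private side is~$T$, which is supplied iff the bounded simulation with bound~$k$ halts, i.e.\ (taking $k$ large enough) iff $M$ halts.

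Assembling the pieces, $\valuate{\PTA_M}{\pval}$ is \fullOpaqueText{} for some~$\pval$ iff one can take $p^l = p^u$ and the simulation then halts, i.e.\ iff $M$ halts; hence the \fullOpacityParamEmptinessProblem{} is undecidable for \LUPTAstext{}. The hard part will be the bookkeeping that makes this reduction \emph{robust}: one must verify that \emph{no} valuation with $p^l \neq p^u$ can accidentally make $\PrivDurVisit{\cdot}$ and $\PubDurVisit{\cdot}$ coincide (in particular that the side effects of the counter encoding do not fill the gap between the two length-$p^l$ and length-$p^u$ intervals, nor erase the missing point~$T$), and, dually, that when $p^l = p^u$ the L/U-relaxed simulation introduces no extra ``slack'' run, so it remains faithful to~$M$. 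Controlling this interaction between the two sources of parametric constraint --- the opacity-induced equality $p^l = p^u$ and the counter encoding --- is the crux; the remaining work (the standard $2$CM-to-clocks simulation, the explicit reference paths, and the routine computation of $\PrivDurVisit{\valuate{\PTA_M}{\pval}}$ and $\PubDurVisit{\valuate{\PTA_M}{\pval}}$) follows the usual pattern.
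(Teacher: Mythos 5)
Your high-level strategy coincides with the one actually used: reduce from the (undecidable) emptiness of the set of valuations for which $\locfinal$ is reachable within a constant/bounded time, split each parameter $p$ of the source \PTAtext{} into a lower-bound copy $p^l$ and an upper-bound copy $p^u$ so that the simulation becomes syntactically L/U, and then use the equality required by \fullOpacityText{} to rule out every valuation with $p^l \neq p^u$, so that the surviving valuations make the split automaton faithful to the original. Your observation that all the leverage must come from the two-sidedness of the definition (given \cref{prop:opacity:TOSEM:E-ET-pemptiness-LUPTA}) is also exactly right.

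The gap is in the forcing gadget, which is precisely the crux you flag but do not construct. A gadget whose durations form ``an interval of length $p^l$'' cannot exist in an \LUPTAtext{}: decreasing a lower-bound parameter only relaxes its occurrences, so the set of durations of (private or public) runs is \emph{antitone} in $p^l$ --- the set for a larger value of $p^l$ is contained in the set for a smaller one --- whereas any family of intervals of length exactly $p^l$ has measure growing with $p^l$ and hence cannot be antitone; syntactically, $p^l$ is simply never allowed to bound a clock (hence a duration) from above. The construction that works (see \cref{figure:opacity:TOSEM:undecidabilityFTOELU} and the proof of \cref{prop:opacity:TOSEM:W-ET-pemptiness-LUPTA}, which is the weak-opacity adaptation of the same idea) forces $p^l = p^u$ by exploiting the \emph{order} between the two copies rather than interval lengths: the conjunctive guards $p^l \leq \clock \leq p^u$ and $p^l < \clock \leq p^u$ are L/U-legal (each copy sits on its permitted side) and are satisfiable iff $p^l \leq p^u$, respectively $p^l < p^u$. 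Guarding the unique entrance to the embedded simulation --- the sole source of runs on one side of the equality --- by $p^l \leq \clock \leq p^u$ empties that side entirely when $p^l > p^u$, while a detour enabled by $p^l < \clock \leq p^u$ injects an unmatched duration on the other side when $p^l < p^u$; together with a reference run pinning a single target duration on the always-available side, one gets \fullOpacityText{} iff $p^l = p^u$ and the simulated \PTAtext{} reaches $\locfinal$ within the bound. Until your length-matching gadget is replaced by something of this kind, the reduction does not go through.
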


Since the emptiness problem is undecidable, the synthesis problem is immediately unsolvable as well.

\begin{corollary}\label{thm:opacity:TOSEM:synthesisLUPTAfull}
	The \fullOpacityParamSynthesisProblem{} is unsolvable for \LUPTAstext{}. %
\end{corollary}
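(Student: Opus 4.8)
The plan is to derive this immediately from \cref{prop:opacity:TOSEM:F-ET-pemptiness-LUPTA}, exactly as the three preceding corollaries (\cref{thm:opacity:TOSEM:synthesisPTA,thm:opacity:TOSEM:synthesisPTAfull,thm:opacity:TOSEM:synthesisLUPTAfull}) are derived from their respective emptiness results. The only content is a reduction from the \FullOpacityParamEmptinessProblem{} to the \FullOpacityParamSynthesisProblem{}, both restricted to \LUPTAstext{}.

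First I would argue by contradiction: assume there is an algorithm solving the \FullOpacityParamSynthesisProblem{} for \LUPTAstext{}, \ie{} an algorithm that, given \iac{lupta} $\PTA$, outputs an (effective, testable-for-emptiness) representation of the set $\PVal$ of parameter valuations $\pval$ such that $\valuate{\PTA}{\pval}$ is \fullOpaqueText{}. Then, given any \lupta{} $\PTA$, one runs this algorithm, obtains the representation of $\PVal$, and checks whether $\PVal = \emptyset$; the answer to this test is exactly the answer to the \FullOpacityParamEmptinessProblem{} instance. This would yield a decision procedure for the \FullOpacityParamEmptinessProblem{} on \LUPTAstext{}, contradicting \cref{prop:opacity:TOSEM:F-ET-pemptiness-LUPTA}. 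Hence no such synthesis algorithm exists, \ie{} the \FullOpacityParamSynthesisProblem{} is unsolvable for \LUPTAstext{}.

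There is essentially no obstacle here: the reduction is a one-line observation that solving a synthesis problem (producing a representation of the full valuation set) entails deciding the corresponding emptiness problem (testing that representation for emptiness). The only point worth a sentence of care — and the closest thing to a ``hard part'' — is making precise what ``unsolvable'' means for a synthesis problem: one cannot merely say ``no algorithm computes $\PVal$'', since in a vacuous sense any fixed object is ``computable''; rather, the claim is that no algorithm can output $\PVal$ in \emph{any} representation formalism admitting an emptiness test, which is precisely the negative statement that follows from the undecidability of emptiness. With this reading fixed, the contrapositive argument above is complete.
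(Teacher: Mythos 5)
Your proposal is correct and matches the paper's argument: the paper derives this corollary in one line from the undecidability of the \FullOpacityParamEmptinessProblem{} for \LUPTAstext{} (\cref{prop:opacity:TOSEM:F-ET-pemptiness-LUPTA}), exactly the contrapositive reduction you spell out. Your extra remark on what ``unsolvable'' means for a synthesis problem is consistent with how the paper itself frames representability earlier (in the discussion preceding \cref{prop:opacity:TOSEM:E-ET-psynthesis-LUPTA}) and is a reasonable clarification, not a deviation.
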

\begin{remark}\label{remark:redundancy}
	Since L/U-PTAs are a subclass of PTAs (put it differently: ``any L/U-PTA is a PTA''), the negative results proved for L/U-PTAs (\cref{prop:opacity:TOSEM:F-ET-pemptiness-LUPTA,thm:opacity:TOSEM:synthesisLUPTAfull}) immediately imply those previously shown for general PTAs (\cref{prop:opacity:TOSEM:F-ET-pemptiness-PTA,thm:opacity:TOSEM:synthesisPTAfull}).
	However, in~\cite{ALMS22}, a smaller number of clocks and parameters is needed to prove the aforementioned negative results for general PTAs, which justifies the two versions of the proofs in~\cite{ALMS22}.
\end{remark}
\subsection{Parametric \weakOpacityTextForSec{}}\label{sec:opacity:TOSEM:parametric-W-ET}
\subsubsection{Problem definitions}

\defProblem
{\WeakOpacityParamEmptinessProblem{}}
{A \PTAtext{}~$\PTA$}
{Decide the emptiness of the set of parameter valuations $\pval$ such that $\valuate{\PTA}{\pval}$ is \weakOpaqueText{}.}

\defProblem
{\WeakOpacityParamSynthesisProblem{}}
{A \PTAtext{}~$\PTA$}
{Synthesize the parameter valuations~$\pval$ such that $\valuate{\PTA}{\pval}$ is \weakOpaqueText{}.}

\subsubsection{Undecidability for general PTAs}

We provide below an original result in the context of \emph{weak} opacity, but partially inspired by the construction used in the proof of \cref{prop:opacity:TOSEM:F-ET-pemptiness-PTA}.

\begin{theorem}[Undecidability of the \weakOpacityParamEmptinessProblem{}]\label{prop:opacity:TOSEM:W-ET-pemptiness-PTA}
The \weakOpacityParamEmptinessProblem{} is undecidable for general \PTAstext{}.
\end{theorem}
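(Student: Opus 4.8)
The plan is to reduce from the \emph{reachability-emptiness in constant time} problem, shown undecidable in \cite[Lemma~7.1]{ALMS22}: given a \PTAtext{}~$\PTA'$, a location~$\locTarget$ of~$\PTA'$ and a constant~$\K \in \setQgeqzero$, decide whether the set of parameter valuations~$\pval$ for which $\locTarget$ is reachable in $\valuate{\PTA'}{\pval}$ by a run of duration exactly~$\K$ is empty. (Should the precise statement of \cite[Lemma~7.1]{ALMS22} instead use ``in time equal to a parameter'', we simply keep that parameter in the construction below and use it in place of~$\K$; nothing else changes.) The construction is similar in spirit to that of \cref{prop:opacity:TOSEM:F-ET-pemptiness-PTA}, but it exploits the \emph{one-sided} nature of \weakOpacityText{} (only $\PrivDurVisit{} \subseteq \PubDurVisit{}$ is required): this lets us make the private execution times collapse to a single point, which simplifies matters considerably.

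From such an instance I would build a \PTAtext{}~$\PTA$ with fresh locations $\locinit$ (initial, carrying the invariant $\clockabs \leq 0$ so that no time elapses in it), $\locpriv$ (private) and $\locfinal$ (final, a sink), plus a fresh clock~$\clockabs$ that is never reset. From~$\locinit$ there are exactly two edges, both with trivial guard and empty reset: one to~$\locpriv$ (the \emph{private branch}) and one to the initial location of a verbatim, suitably renamed copy of~$\PTA'$ (the \emph{public branch}). Location~$\locpriv$ carries the invariant $\clockabs \leq \K$ and has a single outgoing edge, to~$\locfinal$, guarded by $\clockabs = \K$. Inside the copy of~$\PTA'$ I add a fresh location~$\locTarget'$ together with a copy of each edge entering~$\locTarget$ (these copies go to~$\locTarget'$ and additionally reset a fresh clock~$\clockz$), give~$\locTarget'$ the invariant $\clockz \leq 0$ and a single outgoing edge to~$\locfinal$ with trivial guard, and otherwise leave the copy of~$\PTA'$ unchanged; all copied locations are distinct from $\locinit,\locpriv,\locfinal$. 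This is clearly a computable transformation.

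The two key claims, for every valuation~$\pval$, are: (i)~$\PrivDurVisit{\valuate{\PTA}{\pval}} = \{\K\}$, because any run visiting~$\locpriv$ on the way to~$\locfinal$ is necessarily $\locinit \to \locpriv \to \locfinal$ with the first two edges taken at time~$0$ and the last one taken exactly when $\clockabs = \K$; and (ii)~$\PubDurVisit{\valuate{\PTA}{\pval}} = \{\, d \mid \locTarget \text{ is reachable in } \valuate{\PTA'}{\pval} \text{ by a run of duration } d \,\}$, because a run avoiding~$\locpriv$ on the way to~$\locfinal$ enters the copy of~$\PTA'$ at time~$0$, behaves there exactly as in~$\PTA'$ (the free clock~$\clockabs$ is not tested, and resetting~$\clockz$ when entering~$\locTarget'$ does not affect whether or when~$\locTarget$ can be reached), and the last two edges --- into~$\locTarget'$ and then to~$\locfinal$ --- cost no time since the invariant $\clockz \leq 0$ forces leaving~$\locTarget'$ at once. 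Combining the two claims, $\valuate{\PTA}{\pval}$ is \weakOpaqueText{} iff $\{\K\} \subseteq \PubDurVisit{\valuate{\PTA}{\pval}}$ iff $\locTarget$ is reachable in $\valuate{\PTA'}{\pval}$ by a run of duration exactly~$\K$. Hence the set of valuations making $\valuate{\PTA}{\pval}$ \weakOpaqueText{} equals the set of witnesses of reachability of~$\locTarget$ in constant time~$\K$ in~$\PTA'$, so deciding the emptiness of the former would decide the emptiness of the latter, which is impossible. This yields the undecidability of the \weakOpacityParamEmptinessProblem{} for general \PTAstext{}.

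The main obstacle I anticipate lies not in the reduction skeleton --- which is short --- but in verifying that the gadgets introduce no spurious execution times on either side: the invariant $\clockabs \leq 0$ on~$\locinit$ is what pins the private branch to duration exactly~$\K$ and forbids delaying before entering the copy of~$\PTA'$; and duplicating~$\locTarget$ into an instantly-leaving copy~$\locTarget'$ (rather than redirecting~$\locTarget$ itself) is what guarantees that the set of \emph{durations} of public runs is precisely the set of reachability times of~$\locTarget$ in~$\PTA'$, with neither over- nor under-approximation. A secondary point is to ensure the embedded copy of~$\PTA'$ contains neither~$\locpriv$ nor~$\locfinal$, so that ``public'' and ``private'' runs are correctly separated in the sense of \cref{sec:opacity:preliminaries:PrivPubVisit}.
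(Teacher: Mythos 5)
Your reduction is correct, and it takes a route that is genuinely different from the paper's, although both exploit the same structural trick made possible by the one-sidedness of \weakOpacityText{}: pin $\PrivDurVisit{\cdot}$ to a singleton, so that weak opacity degenerates to ``is that single duration also a public duration?''. The difference lies in the source problem and in how the public side is handled. The paper reduces from reachability-emptiness in \emph{bounded} time (reaching the target within time $\leq 1$), undecidable by \cite[Theorem~3.12]{ALM20}, and therefore must \emph{pad} every public run to duration exactly~$1$: after leaving the embedded automaton the run waits until a never-reset clock reaches~$1$ before entering the new final location, so $\PubDurVisit{\cdot}$ is $\set{1}$ or $\emptyset$ according to whether the target is reachable within time~$1$. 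You instead reduce from reachability-emptiness in \emph{constant} time (\cite[Lemma~7.1]{ALMS22}, the same source the paper uses for \cref{prop:opacity:TOSEM:F-ET-pemptiness-PTA}) and leave the public durations untouched, so that $\PubDurVisit{\cdot}$ is exactly the set of instants at which $\locTarget$ is reachable and weak opacity holds iff $\K$ belongs to it. Both reductions are sound; yours is arguably more economical on the public side (no padding gadget), while the paper's choice of the bounded-time problem lets it reuse the figure and observations already set up for the L/U case. Two routine details you should still patch: (i)~when you duplicate the edges entering $\locTarget$ and redirect them to $\locTarget'$ with invariant $\clockz \leq 0$, you drop the invariant $\invariant(\locTarget)$, so a copied edge could be enabled at an instant when the original transition into $\locTarget$ would be forbidden; conjoin $\invariant(\locTarget)$ to the copied guards or assume without loss of generality that $\locTarget$ is an invariant-free sink; and (ii)~handle the degenerate case where $\locTarget$ is the initial location of $\PTA'$, which is reachable at time~$0$ without taking any edge and hence is missed by an edge-duplication gadget.
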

\begin{figure}[tb]
{\centering
\begin{tikzpicture}[->, >=stealth', auto, node distance=2cm, thin]

	\node[location] (l0) at (-2, 0) {$\locinit$};
	\node[location] (lf) at (+1.8, 0) {$\locfinal$};
	\node[cloud, cloud puffs=15.7, cloud ignores aspect, minimum width=5cm, minimum height=2cm, align=center, draw] (cloud) at (0cm, 0cm) {$\PTA$};

	\node[location, initial] (l0') at (-5, 0) {$\locinit'$};
	\node[location] (lpriv) at (+5, 0) {$\locpub$};
	\node[location, private] (lpub) at (+2, -1.2) {$\locpriv$};
	\node[location, final] (lf') at (+5, -1.2) {$\locfinal'$};

	\path
	(l0') edge node {$\styleclock{\clock} = 0$} (l0) %
	(lf) edge node[xshift=.5em] {$\styleclock{\clock} = 1$} (lpriv)
	(lpriv) edge node {$\styleclock{\clock} = 1$} (lf')
	(l0') edge[out=-45,in=180] node {$\styleclock{\clock} = 0$} (lpub)
	(lpub) edge node {$\styleclock{\clock} = 1$} (lf')
	;

\end{tikzpicture}

}
\caption{Reduction from reachability-emptiness for the proof of \cref{prop:opacity:TOSEM:W-ET-pemptiness-PTA}} %
\label{figure:opacity:TOSEM:undecidabilityFullOpacityPTA}
\end{figure}
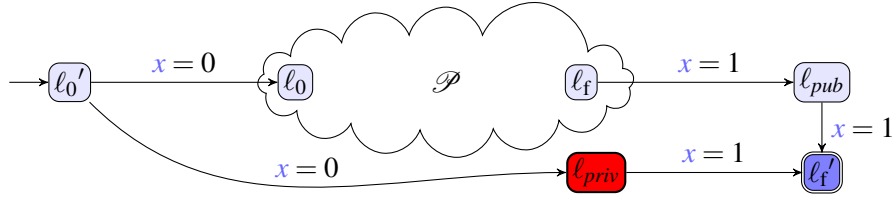
\begin{proof}
We reduce from the reachability-emptiness problem in bounded time, which is undecidable from~\cite[Theorem~3.12]{ALM20}.
(This is different from the proof of \cite[Theorem~7.2]{ALMS22}, which reduces from the reachability-emptiness problem in \emph{constant} time, which is undecidable according to \cite[Lemma~7.1]{ALMS22}.)

Consider an arbitrary PTA~$\PTA$%
, with initial location~$\locinit$ and a final location~$\locfinal$.
We add the following locations and transitions in~$\PTA$\ to obtain a PTA~$\PTA'$, as in \cref{figure:opacity:TOSEM:undecidabilityFullOpacityPTA}:
\begin{inlineenumeration}
	\item a new
	initial location $\locinit'$, with outgoing transitions in 0-time (due to their guard $\clock = 0$, where~$\clock$ is a new clock not belonging to~$\PTA$, and never reset in~$\PTA'$) to~$\locinit$ and to a new location~$\locpriv$,
	\item a new location $\locpub$ with an incoming transition from~$\locfinal$ guarded by $\clock = 1$,
	and
	\item a new final location $\locfinal'$ with incoming transitions from $\locpub$\ and~$\locpriv$ both guarded by $\clock = 1$.
\end{inlineenumeration}

First, note that, due to the guarded transitions, $\locfinal'$ is reachable for any parameter valuation via runs visiting~$\locpriv$, (only) for an \execTimeText{} equal to~$1$.
That is, for all $\pval$, $\PrivDurVisit{\valuate{\PTA'}{\pval}} = \set{1}$.

We now show that there exists a valuation~$\pval$ such that $\valuate{\PTA'}{\pval}$ is \weakOpaqueText{} (with $\locpriv$ as private location, and $\locfinal'$ as final location)
iff
there exists a valuation~$\pval$ such that $\locfinal$ is reachable in $\valuate{\PTA}{\pval}$ for an \execTimeText{} $\leq 1$.
\begin{itemize}
\item[$\Leftarrow$]
Assume there exists some valuation~$\pval$ such that $\locfinal$ is reachable from $\locinit$ in~$\PTA$ for an \execTimeText{} $\leq 1$.
Then, due to our construction, $\locpub$ is visited on the way to $\locfinal'$ in $\valuate{\PTA'}{\pval}$ (only) for the \execTimeText{}  $1$.
Therefore,
$\PubDurVisit{\valuate{\PTA'}{\pval}} = \set{1} = \PrivDurVisit{\valuate{\PTA'}{\pval}}$
and then
$\valuate{\PTA'}{\pval}$ is \weakOpaqueText{} (and also \fullOpaqueText{}, which plays no role here).

\item[$\Rightarrow$]
Conversely, if $\locfinal$ is not reachable from $\locinit$ in $\PTA$ for any valuation for an \execTimeText{} $\leq 1$, then no run reaches $\locfinal'$ in time~1 without visiting~$\locpriv$, for any valuation of $\PTA'$.
Therefore, for any valuation~$\pval$,
$\PrivDurVisit{\valuate{\PTA'}{\pval}} = \set{1} \not\subseteq \PubDurVisit{\valuate{\PTA'}{\pval}}=\emptyset$.
Therefore, there is no valuation~$\pval$ such that $\valuate{\PTA'}{\pval}$ is \weakOpaqueText{}.
\end{itemize}
Therefore, there exists a valuation~$\pval$ such that $\valuate{\PTA'}{\pval}$ is \weakOpaqueText{}
iff
there exists a valuation~$\pval$ such that $\locfinal$ is reachable in $\valuate{\PTA}{\pval}$ for an \execTimeText{} $\leq 1$---which is undecidable from~\cite[Theorem~3.12]{ALM20}.
This concludes the proof.
\end{proof}

Since the emptiness problem is undecidable, the synthesis problem is immediately unsolvable as well.

\begin{corollary}\label{prop:opacity:TOSEM:W-ET-psynthesis-PTA}
	The \weakOpacityParamSynthesisProblem{} is unsolvable for general \PTAstext{}.
\end{corollary}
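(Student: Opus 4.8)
The plan is to obtain this as an immediate consequence of \cref{prop:opacity:TOSEM:W-ET-pemptiness-PTA}, using the standard fact that an exact, terminating synthesis procedure would yield a decision procedure for the corresponding emptiness problem. Concretely, suppose towards a contradiction that the \weakOpacityParamSynthesisProblem{} is solvable, \ie{} there is an algorithm that, on input a \PTAtext{}~$\PTA$, always terminates and returns a representation of the set~$\PVal$ of all parameter valuations~$\pval$ for which $\valuate{\PTA}{\pval}$ is \weakOpaqueText{}. Then one could run this algorithm on any~$\PTA$ and test whether the returned object denotes the empty set; this would decide the \weakOpacityParamEmptinessProblem{}, contradicting \cref{prop:opacity:TOSEM:W-ET-pemptiness-PTA}. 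Hence no such algorithm exists.

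The only point worth a word is what ``representation'' means here: the argument applies to any output formalism for which emptiness is decidable --- in particular finite unions of (parametric) polyhedra, the de facto standard for timing-parameter synthesis --- so the synthesis problem is unsolvable in the strong sense. There is essentially no obstacle in the corollary itself: all the work lies in \cref{prop:opacity:TOSEM:W-ET-pemptiness-PTA}, whose proof reduces from bounded-time reachability-emptiness~\cite{ALM20} through the gadget of \cref{figure:opacity:TOSEM:undecidabilityFullOpacityPTA}. I would therefore keep the written proof to a single sentence, mirroring the analogous corollaries \cref{thm:opacity:TOSEM:synthesisPTA,thm:opacity:TOSEM:synthesisPTAfull} stated earlier in the paper.
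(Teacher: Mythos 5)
Your argument is correct and is exactly the paper's: the corollary is derived as an immediate consequence of the undecidability of the \weakOpacityParamEmptinessProblem{} (\cref{prop:opacity:TOSEM:W-ET-pemptiness-PTA}), since a terminating exact synthesis procedure would decide emptiness. Your added remark about the output formalism is a reasonable clarification but not needed beyond what the paper states.
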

\subsubsection{Undecidability for \LUPTAstextForSec{}}

We provide below another original result in the context of \emph{weak} opacity, this time for \LUPTAstext{}, largely inspired by the proof of \cref{prop:opacity:TOSEM:F-ET-pemptiness-LUPTA}, even though our construction needed to be changed.

\begin{theorem}[Undecidability of the \weakOpacityParamEmptinessProblem{} for \LUPTAstext{}]\label{prop:opacity:TOSEM:W-ET-pemptiness-LUPTA}
	The \weakOpacityParamEmptinessProblem{} is undecidable for \LUPTAstext{}.
\end{theorem}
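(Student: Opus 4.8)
The plan is to prove undecidability by reduction, splicing the one-sided gadget from the proof of \cref{prop:opacity:TOSEM:W-ET-pemptiness-PTA} onto the L/U-respecting encoding developed for \cref{prop:opacity:TOSEM:F-ET-pemptiness-LUPTA}. The source problem will be a \emph{time-constrained} reachability-emptiness problem that is already undecidable for \LUPTAstext{} --- concretely, reachability-emptiness ``in bounded (or constant) time'', of the kind used to establish \cref{prop:opacity:TOSEM:F-ET-pemptiness-LUPTA}; note that ordinary reachability-emptiness will not do, since it is decidable for \LUPTAstext{}~\cite{HRSV02}, which is also why this theorem does not follow for free from \cref{prop:opacity:TOSEM:W-ET-pemptiness-PTA}. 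So we are given \iac{lupta} $\PTA$ with initial location $\locinit$, a location $\locfinal$ and a time bound (say~$1$), and we must produce \iac{lupta} $\PTA'$ that is \weakOpaqueText{} for some parameter valuation iff $\locfinal$ is reachable within the bound in $\valuate{\PTA}{\pval}$ for some~$\pval$.

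First I would build $\PTA'$ as in the construction of \cref{prop:opacity:TOSEM:W-ET-pemptiness-PTA}: keep $\PTA$ as a sub-automaton, add a fresh clock~$\clock$ that is never reset (so that it measures the total elapsed time), a fresh initial location $\locinit'$ with two $0$-time edges (guard $\clock = 0$) to $\locinit$ and to a fresh private location $\locpriv$, a fresh location $\locpub$ entered from $\locfinal$ by an edge guarded $\clock = 1$, and a fresh final location $\locfinal'$ entered from both $\locpub$ and $\locpriv$ by edges guarded $\clock = 1$. Then the private branch $\locinit' \to \locpriv \to \locfinal'$ reaches $\locfinal'$ with total duration exactly~$1$ for every valuation, so $\PrivDurVisit{\valuate{\PTA'}{\pval}} = \set{1}$ for all~$\pval$, while $\PubDurVisit{\valuate{\PTA'}{\pval}}$ equals $\set{1}$ if $\valuate{\PTA}{\pval}$ reaches $\locfinal$ within time~$1$ and is empty otherwise. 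Since the added gadget introduces only the constants $0$ and~$1$ and never compares $\clock$ to a parameter, and since $\PTA$ is \iac{lupta} by hypothesis, $\PTA'$ is \iac{lupta} as well --- this is the whole point of the construction.

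The correctness argument then mirrors that of \cref{prop:opacity:TOSEM:W-ET-pemptiness-PTA}: if $\locfinal$ is reachable within time~$1$ for some~$\pval$ then $\PrivDurVisit{\valuate{\PTA'}{\pval}} = \set{1} = \PubDurVisit{\valuate{\PTA'}{\pval}}$, so $\valuate{\PTA'}{\pval}$ is \weakOpaqueText{}; conversely, if $\locfinal$ is unreachable within time~$1$ for every~$\pval$ then $\PubDurVisit{\valuate{\PTA'}{\pval}} = \emptyset$ does not contain the duration~$1$, so no valuation makes $\PTA'$ \weakOpaqueText{}. Undecidability of the \weakOpacityParamEmptinessProblem{} for \LUPTAstext{} follows.

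The main obstacle --- and exactly where ``the construction needed to be changed'' relative to \cref{prop:opacity:TOSEM:F-ET-pemptiness-LUPTA} --- is securing the source problem: I need a time-bounded reachability-emptiness problem that is undecidable for \LUPTAstext{}, and the plain version is decidable there. The real work is to establish (or reuse from the proof of \cref{prop:opacity:TOSEM:F-ET-pemptiness-LUPTA}) that bounded- or constant-time reachability of $\locfinal$ is undecidable for \LUPTAstext{}, via a direct encoding of a deterministic two-counter machine in which each ``exact'' clock--parameter comparison is simulated by a lower-bound copy and an upper-bound copy, so that a faithful simulation --- hence reachability of $\locfinal$ within the time budget --- occurs only when the two copies coincide. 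Reconciling that diagonal argument with a fixed time bound and with the valuation-independence of the private branch is the delicate ingredient, and one must in particular rule out spurious public runs of duration exactly~$1$ for a non-halting machine, which would wrongly make $\PTA'$ \weakOpaqueText{}.
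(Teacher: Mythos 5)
There is a genuine gap, and it sits exactly where you located ``the real work'': the source problem you propose to reduce from --- time-bounded (or constant-time) reachability-emptiness for \LUPTAstext{} --- is not undecidable; it is decidable. The decidability of reachability-emptiness for \LUPTAstext{}~\cite{HRSV02} rests on monotonicity: a run of $\valuate{\PTA}{\pval}$ remains a run, \emph{with the same delays and hence the same duration}, after decreasing lower-bound parameters and increasing upper-bound parameters. Consequently ``$\exists \pval$ such that $\locfinal$ is reachable within time $1$'' is equivalent to bounded-time reachability in the single \TAtext{} obtained by setting lower-bound parameters to $0$ and erasing the upper-bound constraints (any concrete witness run uses only finitely many clock values, so sufficiently large rational upper bounds recover an actual valuation), and bounded-time reachability in a \TAtext{} is decidable. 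Your plan to obtain undecidability by splitting each exact comparison of a two-counter encoding into a lower-bound and an upper-bound copy cannot work inside a pure reachability property: when the copies differ, the guards are only \emph{weaker}, so $\locfinal$ can only become \emph{more} reachable, and reachability alone offers no mechanism to discard these spurious valuations. Nor can you ``reuse'' such a result from the proof of \cref{prop:opacity:TOSEM:F-ET-pemptiness-LUPTA}: that proof does not establish undecidability of any reachability problem for \LUPTAstext{}.

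The paper's proof resolves this by keeping the undecidable source in the class of \emph{general} \PTAstext{} (bounded-time reachability-emptiness, \cite[Theorem~3.12]{ALM20}), performing the lower/upper split on the \emph{target} automaton, and then using the \weakOpacityText{} property itself --- not reachability --- to rule out every valuation with $\paramiLower{i} \neq \paramiUpper{i}$. Concretely, dedicated gadgets ensure that if $\paramiLower{i} > \paramiUpper{i}$ the original automaton cannot be entered at all, so $\PubDurVisit{\valuate{\PTA'}{\pval}} = \emptyset$ while a private run of duration $2$ always exists; and if $\paramiLower{i} < \paramiUpper{i}$ an additional private run of duration $>2$ is enabled that has no public counterpart. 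Either way weak opacity fails, so only the diagonal valuations survive, and on those the split automaton behaves exactly like the original \PTAtext{}. This step --- enforcing parameter equality via the property being decided rather than via the source problem --- is the missing idea in your proposal; without it the reduction has no undecidable problem to start from. (The surrounding gadget you describe, with the constant guards $\clock = 0$, $\clock = 1$, is indeed L/U-compatible; that part of your construction is fine.)
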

\begin{figure}[tb]
{\centering
\resizebox{\textwidth}{!}{
	\begin{tikzpicture}[->, >=stealth', auto, node distance=2cm, scale=1.5, thin]

		\node[location] (l0) at (0, 0) {$\locinit$};
		\node[cloud, cloud puffs=15.7, cloud ignores aspect, minimum width=4cm, minimum height=2cm, align=center, draw] (cloud) at (.75, 0) {$\PTA$};
		\node[location] (lf) at (+1.5, 0) {$\locfinal$};

		\node[location, initial] (l0') at (-7, 0) {$\locinit'$};
		\node[location] (l1) at (-4.75, .7) {$\loc_1$};
		\node[location] (l2) at (-4.75, -.7) {$\loc_2$};
		\node[location] (l3) at (-2.5, 0) {$\loc_3$};
		\node[location, private] (lpriv) at (+1.8, -1.2) {$\locpriv$};
		\node[location] (l4) at (-1, -1.2) {$\loc_4$};

		\node[location, final] (lf') at (+5, 0) {$\locfinal'$};

		\path
		(lf) edge node[] {$\sclock = 2$} (lf')
		(l0') edge node[sloped,align=center] {$\styleparam{\paramiLower{1}} \leq \sclock \leq \styleparam{\paramiUpper{1}})$} (l1)
		(l1) edge node[sloped,align=center] {$\styleparam{\paramiLower{2}} \leq \sclock \leq \styleparam{\paramiUpper{2}})$} (l3)
		(l0') edge node[sloped,align=center,below] {$\styleparam{\paramiLower{2}} \leq \sclock \leq \styleparam{\paramiUpper{2}})$} (l2)
		(l2) edge node[sloped,align=center,below] {$\styleparam{\paramiLower{1}} \leq \sclock \leq \styleparam{\paramiUpper{1}})$} (l3)
		(l3) edge node[above,align=center,xshift=-1em] {$\sclock = 1$} node[below] {$\styleclock{\ClockSet} \setminus \{ \sclock \} \assign 0$} (l0)

		(l0') edge[out=300,in=180] node[below right] {$\bigvee_{i}(\styleparam{\paramiLower{i}} < \sclock \leq \styleparam{\paramiUpper{i}})$} (l4)
		(l4) edge[] node[above] {$\sclock > 2$} node[below] {$\sclock \assign 0$} (lpriv)

		(l0') edge[out=-60,in=-155] node[above] {$\sclock = 2$} node[below] {$\sclock \assign 0$} (lpriv)

		(lpriv) edge[bend right] node[] {$\sclock = 0$} (lf')

		;

	\end{tikzpicture}
}
}
\caption{Undecidability of \fullOpacityParamEmptinessProblem{} for \LUPTAstextForFig{}}
\label{figure:opacity:TOSEM:undecidabilityFTOELU}
\end{figure}
\begin{proof}
Let us recall from~\cite[Theorem~3.12]{ALM20} that the reachability-emptiness problem is undecidable over bounded time for \PTAstext{} with (at least) 3~clocks and 2~parameters.
Assume a \PTAtext{}~$\PTA$ with 3~clocks and 2~parameters, say $\parami{1}$ and $\parami{2}$, and a final location~$\locfinal$.
Take~$1$ as a time bound.
From \cite[Theorem~3.12]{ALM20}, it is undecidable whether there exists a parameter valuation for which~$\locfinal$ is reachable in~$\PTA$ in time $\leq 1$.

The idea of our proof is that, as in~\cite{JLR15,ALMS22}, we ``split'' each of the two parameters used in~$\PTA$ into a lower-bound parameter ($\paramiLower{1}$ and $\paramiLower{2}$) and an upper-bound parameter ($\paramiUpper{1}$ and $\paramiUpper{2}$).
Each constraint of the form $\clock < \parami{i}$ (resp.\ $\clock \leq \parami{i}$) is replaced with $\clock < \paramiUpper{i}$ (resp.\ $\clock \leq \paramiUpper{i}$)
while
each constraint of the form $\clock > \parami{i}$ (resp.\ $\clock \geq \parami{i}$) is replaced with $\clock > \paramiLower{i}$ (resp.\ $\clock \geq \paramiLower{i}$);
$\clock = \parami{i}$ is replaced with $\paramiLower{i} \leq \clock \leq \paramiUpper{i}$.

The idea is that the PTA~$\PTA$ is exactly equivalent to our construction with duplicated parameters only when $\paramiLower{1} = \paramiUpper{1}$ and $\paramiLower{2} = \paramiUpper{2}$.
The crux of the rest of this proof is that we will ``rule out'' any parameter valuation not satisfying these equalities, so as to use directly the undecidability result of~\cite[Theorem~3.12]{ALM20}.

Now, consider the extension of~$\PTA$ given in \cref{figure:opacity:TOSEM:undecidabilityFTOELU}, and let~$\PTA'$ be this extension.
We assume that $\clock$ is an extra clock not used in~$\PTA$.
The syntax ``$\ClockSet \setminus \{ \clock \} \assign 0$'' denotes that all clocks of the original \PTAtext{}~$\PTA$ are reset---but not the new clock~$\clock$.
The guard on the transition from $\locinit'$ to $\loc_4$ stands for 2 different transitions guarded with
$\paramiLower{1} < \clock \leq \paramiUpper{1}$,
and
$\paramiLower{2} < \clock \leq \paramiUpper{2}$,
respectively.

Let us first make the following observations:
\begin{enumerate}
	\item for any parameter valuation, one can take the transition from $\locinit'$ to~$\locpriv$ at time~$2$ and then to~$\locfinal'$ in 0-time (\ie{} at time~2), \ie{} $\locfinal'$ is always reachable in time~$2$ while visiting location~$\locpriv$; put differently, $\{ 2 \} \subseteq \PrivDurVisit{\valuate{\PTA'}{\pval}}$ for any parameter valuation~$\pval$;

	\item the original automaton~$\PTA$ can only be entered whenever $\paramiLower{1} \leq \paramiUpper{1}$ and $\paramiLower{2} \leq \paramiUpper{2}$; going from~$\locinit'$ to~$\locinit$ takes exactly $1$ time unit (due to the $\clock = 1$ guard);

	\item to reach~$\locfinal'$ without visiting~$\locpriv$, a run must go through~$\PTA$ and visit~$\locfinal$, and its duration is necessarily~$2$;
		put differently, $\PubDurVisit{\valuate{\PTA'}{\pval}} \subseteq \{ 2 \}$ for any parameter valuation~$\pval$;

	\item from~\cite[Theorem~3.12]{ALM20}, it is undecidable whether there exists a parameter valuation for which there exists a run reaching~$\locfinal$ from~$\locinit$ in time $\leq 1$, \ie{} reaching~$\locfinal$ from~$\locinit'$ in time $\leq 2$.
\end{enumerate}

Let us consider the following cases depending on the valuations:
\begin{enumerate}
	\item for valuations~$\pval$ such that $\paramiLower{1} > \paramiUpper{1}$ or $\paramiLower{2} > \paramiUpper{2}$, then thanks to the transitions from $\locinit'$ to~$\locinit$, there is no way to enter the original \PTAtext{}~$\PTA$ (and therefore to reach $\locfinal'$ without visiting~$\locpriv$);
	hence, $\PubDurVisit{\valuate{\PTA'}{\pval}} = \emptyset$, and therefore $\{ 2 \} \subseteq \PrivDurVisit{\valuate{\PTA'}{\pval}} \not\subseteq \PubDurVisit{\valuate{\PTA'}{\pval}}$, \ie{} $\PTA'$ is not \weakOpaqueText{} for any of these valuations.

	\item for valuations~$\pval$ such that $\paramiLower{1} < \paramiUpper{1}$ or $\paramiLower{2} < \paramiUpper{2}$, then the transition from~$\locinit'$ to~$\loc_4$ can be taken, and therefore there exist runs reaching $\locfinal'$ after a duration $> 2$ (for example of duration~3) and visiting~$\locpriv$.
	Since no run can reach $\locfinal'$ without visiting~$\locpriv$ for a duration $\neq 2$, then
	$\{ 3 \} \subseteq \PrivDurVisit{\valuate{\PTA'}{\pval}} \not\subseteq \PubDurVisit{\valuate{\PTA'}{\pval}} \subseteq \{ 2 \}$
	and
	again $\PTA'$ is not \weakOpaqueText{} for any of these valuations.

	\item for valuations such that $\paramiLower{1} = \paramiUpper{1}$ and $\paramiLower{2} = \paramiUpper{2}$, then the behavior of the modified~$\PTA$ (with duplicate parameters) is exactly the one of the original~$\PTA$.
	Also, note that the transition from~$\locinit'$ to~$\loc_4$ cannot be taken.
	In contrast, the transition from~$\locinit'$ to~$\locpriv$ can still be taken, and therefore there exists a run of duration~$2$ visiting~$\locpriv$ and reaching~$\locfinal'$.
	Hence, $\PrivDurVisit{\valuate{\PTA'}{\pval}} = \{ 2 \}$ for any such valuation~$\pval$.

	\begin{itemize}
		\item
		Now, assume there exists such a parameter valuation~$\pval$ for which there exists a run of~$\valuate{\PTA}{\pval}$ of duration $\leq 1$ reaching~$\locfinal$.
		And, as a consequence, there exists a run of~$\valuate{\PTA'}{\pval}$ of duration~$2$ (including the $1$ time unit to go from~$\locinit'$ to~$\locinit$) reaching $\locfinal'$ without visiting~$\locpriv$.
		Hence, $\PubDurVisit{\valuate{\PTA'}{\pval}} = \{ 2 \}$.
		Therefore $\PubDurVisit{\valuate{\PTA'}{\pval}} = \PrivDurVisit{\valuate{\PTA'}{\pval}} = \{ 2 \}$.

		As a consequence, the modified automaton $\PTA'$ is \weakOpaqueText{} (and actually \fullOpaqueText{}---which plays no role in this proof) for such a parameter valuation.

		\item Conversely, assume there exists no parameter valuation for which there exists a run of~$\PTA$ of duration $\leq 1$ reaching $\locfinal$.
            In that case, $\locfinal'$ can never be reached without visiting~$\locpriv$: $\PubDurVisit{\valuate{\PTA'}{\pval}} = \emptyset$, and therefore $\{ 2 \} \subseteq \PrivDurVisit{\valuate{\PTA'}{\pval}} \not\subseteq \PubDurVisit{\valuate{\PTA'}{\pval}}$, \ie{} $\valuate{\PTA'}{\pval}$ is not \fullOpaqueText{} for any such parameter valuation~$\pval$.

	\end{itemize}

\end{enumerate}

As a consequence, there exists a parameter valuation~$\pval'$ for which $\valuate{\PTA'}{\pval'}$ is \weakOpaqueText{} iff there exists a parameter valuation~$\pval$ for which there exists a run in~$\valuate{\PTA}{\pval}$ of duration $\leq 1$ reaching $\locfinal$---which is undecidable from~\cite[Theorem~3.12]{ALM20}.
\end{proof}
\begin{corollary}\label{prop:opacity:TOSEM:W-ET-psynthesis-LUPTA}
	The \weakOpacityParamSynthesisProblem{} is unsolvable for \LUPTAstext{}.
\end{corollary}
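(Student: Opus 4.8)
The plan is the routine observation that a synthesis procedure subsumes the corresponding emptiness test, so unsolvability of synthesis follows immediately from undecidability of emptiness. Concretely, suppose for contradiction that the \weakOpacityParamSynthesisProblem{} were solvable for \LUPTAstext{}: then there would be an algorithm that, on input any \LUPTAtext{}~$\PTA$, outputs a (finite) description of the set~$\PVal$ of parameter valuations~$\pval$ for which $\valuate{\PTA}{\pval}$ is \weakOpaqueText{}. From such a description one can in particular decide whether $\PVal = \emptyset$; but deciding exactly this is the \weakOpacityParamEmptinessProblem{}, which is undecidable for \LUPTAstext{} by \cref{prop:opacity:TOSEM:W-ET-pemptiness-LUPTA}. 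Hence no such synthesis algorithm can exist, and the \weakOpacityParamSynthesisProblem{} is unsolvable for \LUPTAstext{}.

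The only point deserving a word of care is the reading of ``solving'' the synthesis problem: the output must be expressed in some formalism for which emptiness is decidable (e.g.\ a finite union of polyhedra, as is standard for parametric timed automata); under that mild convention the reduction above is immediate. There is no genuine obstacle here — the statement is a direct consequence of the preceding undecidability theorem, exactly mirroring the way \cref{prop:opacity:TOSEM:W-ET-psynthesis-PTA} is deduced from \cref{prop:opacity:TOSEM:W-ET-pemptiness-PTA} (and \cref{thm:opacity:TOSEM:synthesisLUPTAfull} from \cref{prop:opacity:TOSEM:F-ET-pemptiness-LUPTA}). One could also note, as in \cref{prop:opacity:TOSEM:E-ET-psynthesis-LUPTA}, that even when a solution happens to be computable it may be intractable to represent, but this is not needed for the unsolvability claim itself.
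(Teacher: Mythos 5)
Your proposal is correct and matches the paper's (implicit) argument exactly: the corollary is deduced from \cref{prop:opacity:TOSEM:W-ET-pemptiness-LUPTA} by the standard observation that a solution to synthesis would decide emptiness, just as the paper does for its other synthesis corollaries. Your added remark about the output formalism needing decidable emptiness is a reasonable precision but not something the paper dwells on.
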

\section{Expiring execution-time opacity problems}\label{section:opacity:expiring}

In~\cite{AEYM21}, the authors consider a time-bounded notion of the opacity of~\cite{Cassez09}, where the attacker has to disclose the secret before an upper bound, using a partial observability.
This can be seen as a secrecy with an \emph{\expirationDateText{}}.
The rationale is that retrieving a secret ``too late'' is useless; this is understandable, \eg{} when the secret depends of the status of the memory; if the cache was overwritten since, then knowing the secret is probably useless in most situations.
In addition, the analysis in~\cite{AEYM21} is carried over a time-bounded horizon; this means there are two time bounds in~\cite{AEYM21}: one for the secret \expirationDateText{}, and one for the bounded-time execution of the system.

In this section, we review a recent work of ours~\cite{ALM23} in which we incorporate this secret \expirationDateText{} into our notion of \opacityText{}: we only consider the former notion of time bound from~\cite{AEYM21} (the secret \expirationDateText{}), and lift the assumption regarding the latter (the bounded-time execution of the system).
More precisely, we consider an \emph{expiring version of \opacityText{}}, where the secret is subject to an \expirationDateText{};
this can be seen as a combination of both concepts from~\cite{ALMS22} and~\cite{AEYM21}.
That is, we consider that an attack is successful only when the attacker can decide that the secret location was entered less than $\expiringBound$ time units before the system completion.
Conversely, if the attacker exhibits an \execTimeText{}~$\paramd$ for which it is certain that the secret location was visited, but this location was entered strictly more than~$\expiringBound$ time units prior to the system completion, then this attack is useless, and can be seen as a failed attack.
The system is therefore \emph{fully} \tempOpaqueText{} if the set of \execTimesText{} for which the private location was entered within $\expiringBound$ time units prior to system completion is exactly equal to the set of \execTimesText{} for which the private location was either not visited or entered $> \expiringBound$ time units prior to system completion.

In addition, when the former (secret) set of \execTimesText{} is \emph{included} into the latter (non-secret) set of times, we say that the system is \emph{weakly} \tempOpaqueText{}; this encodes situations when the attacker might be able to deduce that no secret location was visited, but is not able to confirm that the secret location \emph{was} indeed visited.

\begin{table}[tb]
	\centering
	\caption{Summary of the definitions for \opacityText{} and expiring \opacityText{}~\cite{ALMS22,ALM23}}
	\newcommand{\rowLenght}{.4\textwidth}
	\begin{tabular}{|l|p{\rowLenght}|p{.4\textwidth}|}
		\hline
		\cellHeader{}                       & \cellHeader{Secret runs}                                                                                                  & \cellHeader{Non-secret runs}                                                                      \\ \hline
		\opacityText{}                      & Runs visiting the private location ($=$ private runs)                                                                     & Runs not visiting the private location ($=$ public runs)                                          \\ \hline
		\multirow{2}{*}{\tempOpacityText{}} & \multirow{2}{\rowLenght}{Runs visiting the private location $\leq\expiringBound$ time units before the system completion} & (i) Runs not visiting the private location and                                                    \\
		                                    &                                                                                                                           & (ii) Runs visiting the private location $>\expiringBound$ time units before the system completion \\ \hline
	\end{tabular}

	\medskip

	\newcommand{\setsecret}{\ensuremath{\{\text{secret runs}\}}}
	\newcommand{\setnonsecret}{\ensuremath{\{\text{non-secret runs}\}}}
	\begin{tabular}{|l|l|}
		\hline
		\cellHeader{The system is}     & \cellHeader{if}                                 \\
		\rowHeader{} \scriptsize{(resp.\ expiring)} &                                                 \\ \hline
		\opaqueText{}                  & $ \setsecret \cap \setnonsecret \neq \emptyset$ \\
		\weakOpaqueText{}              & $ \setsecret \subseteq \setnonsecret$           \\
		\fullOpacityText{}             & $ \setsecret = \setnonsecret$                   \\ \hline
	\end{tabular}
	\label{tab:definitionRecall}
\end{table}

On the one hand, our attacker model is \emph{less powerful} than~\cite{AEYM21}, because our attacker has only access to the \execTimeText{} (and to the input model); in that sense, our attacker capability is identical to~\cite{ALMS22}.
On the other hand, we lift the time-bounded horizon analysis from~\cite{AEYM21}, allowing to analyze systems without any assumption on their \execTimeText{}; therefore, we only import from~\cite{AEYM21} the notion of \emph{expiring secret}.

We summarize in \cref{tab:definitionRecall} our different notions of \opacityText{} and expiring \opacityText{}; we will define formally  expiring \opacityText{} in the following.

\subsection{\TempOpacityTextForSec{}}

Let us first introduce some notions dedicated to expiring \opacityText{} (hereafter referred to as \tempOpacityText{}).
Let $\setRplusinf = \setRgeqzero \cup \{+\infty\}$.
Given a \TAtext{} $\TA$ and a finite run $\run$ in \semantics{\TA}, the \emph{duration} between two states of $\run : \concstatei{0}, (d_0, \edgei{0}), \concstatei{1}, \cdots, \concstatei{k}$ is $\statesDuration{\run}{\concstatei{i}}{\concstatei{j}} = \sum_{i \leq m \leq j-1} d_m$.
We also define the \emph{duration} between two locations $\loci{1}$ and $\loci{2}$ as the duration $\locationsDuration{\run}{\loci{1}}{\loci{2}} = \statesDuration{\run}{\concstatei{i}}{\concstatei{j}}$ with
$\run : \concstatei{0}, (d_0, \edgei{0}), \concstatei{1}, \cdots, \concstatei{i}, \cdots, \concstatei{j}, \cdots, \concstatei{k}$ where
$\concstatei{j}$ the first occurrence of a state with location $\loci{2}$
and
$\concstatei{i}$ is the last state of~$\run$ with location $\loci{1}$ before~$\concstatei{j}$.
We choose this definition to coincide with the definitions of opacity that we will define in the following \cref{def:opacity:ICECCS:temporary-timed-opacity}. Indeed, we want to make sure that revealing a secret ($\loci{1}$ in this definition) is not a failure if it is done after a given time. Thus, as soon as the system reaches its final state ($\loc_2$), we will be interested in knowing how long the secret has been present, and thus the last time it was entered~($\concstatei{i}$).

Given $\expiringBound \in \setRplusinf$, we define $\TempInfPrivVisit{\TA}{\expiringBound}$ (resp.\ $\TempSupPrivVisit{\TA}{\expiringBound}$)
as the set of runs $\run\in\PrivVisit{\TA}$
\st\ $\locationsDuration{\run}{\locpriv}{\locfinal} \leq \expiringBound$ (resp.\ $\locationsDuration{\run}{\locpriv}{\locfinal} > \expiringBound$).
We refer to the runs of $\TempInfPrivVisit{\TA}{\expiringBound}$ as \emph{secret} runs; their durations are denoted by $\TempInfPrivDurVisit{\TA}{\expiringBound}$.
Similarly, the durations of the runs of $\TempSupPrivVisit{\TA}{\expiringBound}$ are denoted by $\TempSupPrivDurVisit{\TA}{\expiringBound}$.

We define below two notions of \opacityText{} \wrt\ a time bound~$\expiringBound$.
We will compare two sets:
\begin{enumerate}%
	\item the set of \execTimesText{} for which the private location was entered at most~$\expiringBound$ time units prior to system completion; and
	\item the set of \execTimesText{} for which either the private location was not visited at all, or it was last entered more than~$\expiringBound$
	time units prior to system completion (which, in our setting, is somehow similar to \emph{not} visiting the private location, in the sense that entering it ``too early'' is considered of little interest).
\end{enumerate}%
If both sets match, the system is \fullTempOpaqueTextVal{\expiringBound}.
If the former is included into the latter, then the system is \weakTempOpaqueTextVal{\expiringBound}.

\begin{definition}[\Acl*{tempopacity}]\label{def:opacity:ICECCS:temporary-timed-opacity}
	Given a \TAtext{}~$\TA$ and
	a bound (\ie{} an \expirationDateText{} for the secret)~$\expiringBound\in\setRplusinf$
	we say that $\TA$ is \emph{\fullTempOpaqueText{}} \wrt\ the \expirationDateText{} \expiringBound, denoted by \emph{\fullTempOpaqueTextVal{\expiringBound}}, if
	$$\TempInfPrivDurVisit{\TA}{\expiringBound} = \TempSupPrivDurVisit{\TA}{\expiringBound} \cup \PubDurVisit{\TA}.$$

	Moreover,
	$\TA$ is \emph{\weakTempOpaqueText{}} \wrt\ the \expirationDateText{} \expiringBound, denoted by \emph{\weakTempOpaqueTextVal{\expiringBound}}, if
	$$\TempInfPrivDurVisit{\TA}{\expiringBound} \subseteq \TempSupPrivDurVisit{\TA}{\expiringBound} \cup \PubDurVisit{\TA}.$$
	
	Finally, 	
	$\TA$ is \emph{\existentialOpaqueText{}} \wrt\ the \expirationDateText{} \expiringBound, denoted by \emph{\existentialTempOpaqueTextVal{\expiringBound}},
	if $$\TempInfPrivDurVisit{\TA}{\expiringBound} \cap (\TempSupPrivDurVisit{\TA}{\expiringBound} \cup \PubDurVisit{\TA}) \neq \emptyset.$$

\end{definition}
\begin{example}\label{ex:opacity:ICECCS:defOpacity}
	Consider again the \PTAtext{} in \cref{figure:example-PTA}; let $\pval$ be such that $\pval(\parami{1}) = 1$ and $\pval(\parami{2}) = 2.5$.
	Fix $\expiringBound = 1$.

	We have:
	\begin{itemize}
		\item $\PubDurVisit{\valuate{\PTA}{\pval}} = [0, 3] $
		\item $\TempSupPrivDurVisit{\valuate{\PTA}{\pval}}{\expiringBound} = (2,2.5] $
		\item $\TempInfPrivDurVisit{\valuate{\PTA}{\pval}}{\expiringBound} = [1,2.5] $
	\end{itemize}

	Therefore, we say that $\valuate{\PTA}{\pval}$ is: %
	\begin{itemize}
		\item \existentialTempOpaqueTextVal{1}, as $[1,2.5] \cap \big( (2,2.5] \cup [0,3] \big) \neq \emptyset$
		\item \weakTempOpaqueTextVal{1}, as $[1,2.5] \subseteq \big( (2,2.5] \cup [0,3] \big)$
		\item not \fullTempOpaqueTextVal{1}, as $[1,2.5] \neq \big( (2,2.5] \cup [0,3] \big)$
	\end{itemize}

	As noted in \cref{remark:opacity:ICECCS:weak}, despite the \weakTempOpacityTextVal{1} of~$\TA$, the attacker can deduce some information about the visit of the private location for some \execTimesText{}.
	For example, if a run has a duration of 3 time units, it cannot be a private run, and therefore the attacker can deduce that the private location was not visited at all.
\end{example}
\subsection{\TempOpacityTextForSec{} problems in timed automata}
\subsubsection{Problem definitions}

We define seven different problems in the context of (non-parametric) \TAstext{}:

\defProblem
{\existentialTempDecisionProblem{}}
{A \TAtext{}~$\TA$ and a bound $\expiringBound\in\setRplusinf$}
{Decide whether $\TA$ is \existentialTempOpaqueTextVal{\expiringBound}.}%
\defProblem
{\WeakfullTempDecisionProblem{}}
{A \TAtext{}~$\TA$ and a bound $\expiringBound\in\setRplusinf$}
{Decide whether $\TA$ is \weakfullTempOpaqueTextVal{\expiringBound}.}%
\defProblem
{\WeakfullTempBoundEmptinessProblem{}}
{A \TAtext{}~$\TA$}
{Decide the emptiness of the set of bounds $\expiringBound$ such that $\TA$ is \weakfullTempOpaqueTextVal{\expiringBound}.}
\defProblem
{\WeakfullTempBoundComputationProblem{}}
{A \TAtext{}~$\TA$}
{Compute the maximal set $\setBound$ of bounds such that $\TA$ is \weakfullTempOpaqueTextVal{\expiringBound} for all $\expiringBound\in\setBound$.}
\begin{example}\label{ex:opacity:ICECCS:opacityProblemsTA}
	Consider again the \PTAtext{} in \cref{figure:example-PTA}; let $\pval$ be such that $\pval(\parami{1}) = 1$ and $\pval(\parami{2}) = 2.5$ (as in \cref{ex:opacity:ICECCS:defOpacity}).
	Let us exemplify some of the problems defined above.

	\begin{itemize}
		\item Given $\expiringBound = 1$, the \weakTempDecisionProblem{} asks whether $\valuate{\PTA}{\pval}$ is \weakTempOpaqueTextVal{1}---the answer is ``yes'' from \cref{ex:opacity:ICECCS:defOpacity}.

		\item The answer to the \weakTempBoundEmptinessProblem{} is therefore ``no'' because the set of bounds $\expiringBound$ such that $\valuate{\PTA}{\pval}$ is \weakTempOpaqueTextVal{\expiringBound} is not empty.

		\item Finally, the \weakTempBoundComputationProblem{} asks to compute all the corresponding bounds: in this example, the solution is ${\expiringBound\in\setRplusinf}$, \ie{} the solution is the set all possible (non-negative) values for~$\expiringBound$. %
	\end{itemize}
\end{example}
\paragraph{Relations with the \opacityTextForSec{} problems}
Note that, when considering $\expiringBound = +\infty$,
$\TempSupPrivDurVisit{\TA}{\expiringBound}=\emptyset$ and all the \execTimesText{} of runs visiting $\locpriv$ are in $\TempInfPrivDurVisit{\TA}{\expiringBound}$.
Therefore, \fullTempOpacityTextVal{\ensuremath{+\infty}} matches the \fullOpacityText{}.
We can therefore notice that answering the \fullTempDecisionProblem{} for $\expiringBound=+\infty$ is decidable (\cref{prop:opacity:TOSEM:F-ET-decision}).
However, the emptiness and computation problems cannot be reduced to \fullOpacityText{} problems from \cref{ss:ET-opacity-LU}.

Conversely, it is possible to answer the \fullOpacityDecisionProblem{} by checking the \fullTempDecisionProblem{} with $\expiringBound=+\infty$.
Moreover, the \opacityExecutionTimesComputationProblem{}
reduces to the \fullTempBoundComputationProblem{}: if $+\infty \in \setBound$,
we get the answer.

Recall that we summarize our different definitions of (expiring) \opacityText{} in \cref{tab:definitionRecall}.

\subsubsection{Results}
In general, the link between the full and weak notions of the three aforementioned problems is not obvious.
However, for a fixed value of $\expiringBound$, we establish the following theorem.

\begin{theorem}[%
{\cite[Theorem~1]{ALM23}}]\label{thm:opacity:ICECCS:reduction-weak-full}
	The \fullTempDecisionProblem{} reduces to the \weakTempDecisionProblem{}.
\end{theorem}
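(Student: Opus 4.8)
The plan is to write $\fullTempOpaqueTextVal{\expiringBound}$ as the conjunction of two set-inclusions, each decidable by one call to a $\weakTempDecisionProblem{}$ oracle. By \cref{def:opacity:ICECCS:temporary-timed-opacity}, $\TA$ is $\fullTempOpaqueTextVal{\expiringBound}$ iff
\[
\TempInfPrivDurVisit{\TA}{\expiringBound} \subseteq \TempSupPrivDurVisit{\TA}{\expiringBound} \cup \PubDurVisit{\TA}
\quad \text{and} \quad
\TempSupPrivDurVisit{\TA}{\expiringBound} \cup \PubDurVisit{\TA} \subseteq \TempInfPrivDurVisit{\TA}{\expiringBound} .
\]
The first inclusion is by definition ``$\TA$ is $\weakTempOpaqueTextVal{\expiringBound}$'', hence it is settled by one oracle call on the instance $(\TA, \expiringBound)$. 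For the second (reverse) inclusion I would build a \TAtext{} $\TA'$, with the \emph{same} bound $\expiringBound$, whose ``secret'' and ``non-secret'' execution-time sets are those of $\TA$ \emph{swapped}, so that $\TA'$ being $\weakTempOpaqueTextVal{\expiringBound}$ is exactly the reverse inclusion; a second oracle call then finishes. This gives a (two-query Turing) reduction from the $\fullTempDecisionProblem{}$ to the $\weakTempDecisionProblem{}$.

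For the gadget $\TA'$, I would take a fresh initial location from which, in zero time, a run nondeterministically enters one of three copies (or fragments) of $\TA$, all eventually reaching a common fresh final location $\locfinal'$, with a single fresh private location $\locpriv'$ used only at the very end of two of the branches. Each copy carries a fresh clock $\clocky$ reset on every edge entering $\TA$'s (now ordinary) location $\locpriv$, together with a Boolean flag recording whether $\locpriv$ has been entered (encodable by duplicating the copy, as in the proof of \cref{prop:opacity:TOSEM:E-ET-decision}). The first copy strengthens every edge into $\TA$'s final location with the guard ``$\mathit{flag} \wedge \clocky \leq \expiringBound$'' and then moves to $\locfinal'$ in zero time \emph{without} touching $\locpriv'$: its accepted durations are exactly $\TempInfPrivDurVisit{\TA}{\expiringBound}$, and they become public durations of $\TA'$. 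The second copy uses instead ``$\mathit{flag} \wedge \clocky > \expiringBound$'', and the third copy is $\TA$ with $\locpriv$ deleted; both then reach $\locfinal'$ in zero time \emph{through} $\locpriv'$, so that the $\locpriv'$-to-$\locfinal'$ delay is $0 \leq \expiringBound$ and these runs are secret in $\TA'$. Their accepted durations are $\TempSupPrivDurVisit{\TA}{\expiringBound}$ and $\PubDurVisit{\TA}$. Consequently $\PubDurVisit{\TA'} = \TempInfPrivDurVisit{\TA}{\expiringBound}$, $\TempInfPrivDurVisit{\TA'}{\expiringBound} = \TempSupPrivDurVisit{\TA}{\expiringBound} \cup \PubDurVisit{\TA}$, and $\TempSupPrivDurVisit{\TA'}{\expiringBound} = \emptyset$, so $\TA'$ is $\weakTempOpaqueTextVal{\expiringBound}$ iff $\TempSupPrivDurVisit{\TA}{\expiringBound} \cup \PubDurVisit{\TA} \subseteq \TempInfPrivDurVisit{\TA}{\expiringBound}$, as desired.

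The main obstacle is getting $\TA'$ exactly right. The delicate points are that the zero-delay connecting transitions must not perturb any measured duration; that $\clocky$ is reset on \emph{every} incoming edge of $\locpriv$, so that at $\locfinal$ it measures the time since the \emph{last} visit to $\locpriv$, matching the ``last occurrence'' convention of \cref{def:opacity:ICECCS:temporary-timed-opacity}; and that $\locpriv'$ is visited by \emph{all and only} the runs of the second and third branches. Finally, the case $\expiringBound = +\infty$ must be treated separately (``$\clocky > +\infty$'' is not a guard): there the second branch is empty, $\TempSupPrivDurVisit{\TA}{+\infty} = \emptyset$, the ``$\clocky \leq \expiringBound$'' guard is vacuous, and the construction degenerates accordingly; and one assumes $\expiringBound$ rational so the strengthened guards remain legal constraints. (A one-shot many-one reduction is also possible, at the cost of a synchronised product of two copies of $\TA$ realising $\TempInfPrivDurVisit{\TA}{\expiringBound} \cap \big(\TempSupPrivDurVisit{\TA}{\expiringBound} \cup \PubDurVisit{\TA}\big)$ as the public durations.)
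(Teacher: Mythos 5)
The paper itself gives no proof of this theorem---it is recalled from \cite{ALM23}---but your decomposition is exactly the one used there: \fullTempOpacityTextVal{\expiringBound} is the conjunction of the weak inclusion (one oracle call on $(\TA,\expiringBound)$) and its converse, and the converse is turned into a second weak-opacity instance by a gadget that swaps the secret and non-secret duration sets. Your three-branch construction and the resulting bookkeeping $\PubDurVisit{\TA'}=\TempInfPrivDurVisit{\TA}{\expiringBound}$, $\TempInfPrivDurVisit{\TA'}{\expiringBound}=\TempSupPrivDurVisit{\TA}{\expiringBound}\cup\PubDurVisit{\TA}$ and $\TempSupPrivDurVisit{\TA'}{\expiringBound}=\emptyset$ are correct, and the clock $\clocky$ reset on every edge targeting $\locpriv$ does measure $\locationsDuration{\run}{\locpriv}{\locfinal}$ under the ``last entry'' convention of \cref{def:opacity:ICECCS:temporary-timed-opacity}. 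One delicate point you did not list: since $\PrivVisit{\TA}$ and $\PubVisit{\TA}$ only count runs up to the \emph{first} visit of the final location, the copies of $\TA$ inside $\TA'$ must not be allowed to pass through (the copy of) $\locfinal$ and exit to $\locfinal'$ only at a later visit---otherwise branch~1 could contribute public durations of $\TA'$ that lie outside $\TempInfPrivDurVisit{\TA}{\expiringBound}$. This is repaired by deleting all outgoing edges of $\locfinal$ in each copy apart from the forced zero-delay exit. With that, and your stated caveats on $\expiringBound$ being rational and on the $+\infty$ case, the reduction is sound.
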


We can now study the aforementioned problems.

\begin{theorem}[Decidability of \weakfullTempDecisionProblem{}~{\cite[Theorems~2 and~5]{ALM23}}]
	\label{thm:opacity:ICECCS:TAtempop}
	The \weakfullTempDecisionProblem{} is decidable in \NEXPTIME.
\end{theorem}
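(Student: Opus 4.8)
The plan is to first dispose of the \emph{full} variant cheaply and concentrate on the weak one. By \cref{thm:opacity:ICECCS:reduction-weak-full} the \fullTempDecisionProblem{} reduces (by a polynomial construction) to the \weakTempDecisionProblem{}, so it suffices to place the latter in \NEXPTIME{}, the bound then transferring to the full case. I would also isolate the boundary value $\expiringBound=+\infty$: there $\TempSupPrivDurVisit{\TA}{\expiringBound}=\emptyset$, so \weakTempOpaqueTextVal{\ensuremath{+\infty}} coincides with \weakOpacityText{} and \fullTempOpaqueTextVal{\ensuremath{+\infty}} with \fullOpacityText{}, both already decidable by \cref{prop:opacity:TOSEM:F-ET-decision} and its weak counterpart; for $\expiringBound\in\setQgeqzero$ we treat it as an ordinary (rational) constant, rescaling as usual so that all constants, including $\expiringBound$, become integers.

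Next I would reduce \weakTempOpacityText{} to an inclusion between the \emph{execution-time sets} of two auxiliary \TAstext{} built from $\TA$. As in the proof of \cref{prop:opacity:TOSEM:E-ET-decision}, take two disjoint copies of $\TA$ to simulate a Boolean flag $\priv$ raised exactly on edges entering $\locpriv$, and add a fresh clock $\clocky$ that is reset on every edge entering $\locpriv$ and never reset otherwise, together with a never-reset clock $\clockabs$ exposing the total elapsed time; then, along any run $\run$, when $\locfinal$ is first reached $\clocky$ holds precisely $\locationsDuration{\run}{\locpriv}{\locfinal}$ whenever $\priv$ holds. Let $\mathcal{A}_{\mathrm{sec}}$ be this product where $\locfinal$ is made absorbing and accepting only under the guard $\priv \wedge \clocky \leq \expiringBound$, and $\mathcal{A}_{\mathrm{nonsec}}$ the analogous product with $\locfinal$ accepting under $\neg\priv \vee \clocky > \expiringBound$. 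By construction the execution-time set of $\mathcal{A}_{\mathrm{sec}}$ equals $\TempInfPrivDurVisit{\TA}{\expiringBound}$, and that of $\mathcal{A}_{\mathrm{nonsec}}$ equals $\TempSupPrivDurVisit{\TA}{\expiringBound} \cup \PubDurVisit{\TA}$; hence $\TA$ is \weakTempOpaqueTextVal{\expiringBound} iff the former set is included in the latter. Both automata have size polynomial in $|\TA|$.

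It then remains to decide inclusion of execution-time sets of \TAstext{}, within \NEXPTIME{}. The structural fact underlying \cref{prop:opacity:TOSEM:ET-t-computation} is that the set of durations of runs of a \TAtext{} reaching a target location is \emph{effectively semilinear} in the relevant sense: analysing the region automaton, it is a finite union of intervals with rational endpoints together with finitely many ultimately periodic discrete sets $\{\alpha+k\beta \mid k\in\setN,\ k\geq k_0\}$ coming from cycles of the region automaton (the discrete component is genuinely needed, as witnessed by the \TAtext{} of \cref{figure:opacity:TOSEM:opaqueforN}, whose set is $\setN$). This family is effectively closed under union, intersection and complement, and inclusion between two such sets is decidable. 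To land in \NEXPTIME{} rather than merely ``decidable'', I would phrase the \emph{negation} of weak opacity as a guess-and-check: nondeterministically guess an exponential-size certificate — a path/cycle skeleton in the (exponentially large) region automaton of $\mathcal{A}_{\mathrm{sec}}$ together with the duration $d$ it realises, in symbolic form — and then verify in time polynomial in the certificate that $d$ is \emph{not} realised by $\mathcal{A}_{\mathrm{nonsec}}$ (a reachability check in a polynomial-size \TAtext{} with the extra constraint $\clockabs=d$), using the semilinear description only to argue that if the inclusion fails it already fails for some $d$ admitting such a succinct witness. The full variant then inherits the bound through \cref{thm:opacity:ICECCS:reduction-weak-full}.

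The main obstacle is exactly this last ingredient: pinning down the semilinear structure of execution-time sets with enough quantitative control — bounding the number of intervals and the parameters $\alpha,\beta,k_0$ of the periodic components in terms of the region automaton, so that a witness of non-inclusion of controlled size always exists — and verifying that the Boolean operations and the inclusion test stay within the nondeterministic exponential budget. An interval-only representation does not suffice (because of discrete behaviours like the $\setN$ example), and the strict versus non-strict comparison against $\expiringBound$, as well as the separation of the $\expiringBound=+\infty$ case, must be threaded carefully through the region construction; this bookkeeping, rather than any single clever idea, is where the real work lies.
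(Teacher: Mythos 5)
Your proposal follows essentially the same route as the one the paper attributes to~[ALM23] and sketches in the proof of \cref{thm:opacity:exist:TAtempop}: dispatch the full variant through \cref{thm:opacity:ICECCS:reduction-weak-full}, construct two auxiliary automata whose duration sets are exactly the secret durations $\TempInfPrivDurVisit{\TA}{\expiringBound}$ and the non-secret durations $\TempSupPrivDurVisit{\TA}{\expiringBound}\cup\PubDurVisit{\TA}$ (your extra clock $\clocky$, reset on every entry to $\locpriv$, correctly captures the ``last entry before first reaching $\locfinal$'' convention in the definition of the duration between locations), and then decide inclusion between these two sets. Where you diverge is the implementation of that last step: the cited proof discretizes the two duration sets into \emph{unary} nondeterministic finite automata of exponential size and invokes language inclusion for those, which packages in one stroke the quantitative control you yourself flag as the ``main obstacle'' (bounding the number of intervals and the offsets and periods of the ultimately periodic components); your direct semilinear-plus-certificate argument is workable in principle but leaves exactly that bookkeeping as an unproved lemma, so as written the hard part of the proof is deferred rather than done. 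One further caveat on the complexity accounting: your guess-and-check certifies the \emph{negation} of weak opacity nondeterministically, which yields a co-\NEXPTIME{} upper bound rather than the stated \NEXPTIME{} one; to match the announced bound you would need to argue the inclusion test itself within nondeterministic exponential time (as the unary-NFA route does) rather than its complement. Your observation that an interval-only representation cannot work (witnessed by the $\setN$ example) is correct and is precisely why the encoding into unary languages is the natural vehicle here.
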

\begin{remark}\label{remark:ICECCS:expiring-extend-opacity}
	In \cref{prop:opacity:TOSEM:F-ET-decision}, we established that the \decisionProblem{\fullTempOpacityTextVal{\ensuremath{+\infty}}} is in \FiveEXPTIME.
	\cref{thm:opacity:ICECCS:TAtempop} thus extends our former results in three ways:
	\begin{enumerate}
		\item by including the parameter $\expiringBound$,
		\item by reducing the complexity and
		\item by considering as well the \emph{weak} notion of ET-opacity (considered separately in \cref{prop:opacity:TOSEM:W-ET-decision}).
	\end{enumerate}
\end{remark}

We complete these results from~\cite{ALM23} with the following result analog to \cref{prop:opacity:TOSEM:E-ET-decision}.
\begin{theorem}[Decidability of \existentialTempDecisionProblem{}]
	\label{thm:opacity:exist:TAtempop}
	The \existentialTempDecisionProblem{} is decidable in \PSPACE.
\end{theorem}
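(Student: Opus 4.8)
The plan is to reduce the \existentialTempDecisionProblem{} to a plain reachability question in a timed automaton whose size is polynomial in $|\TA|$ and in the (binary) size of $\expiringBound$, and then to invoke the classical fact that reachability in timed automata is in \PSPACE{}~\cite{AD94}. If $\expiringBound$ is a finite rational I would first rescale all constants of $\TA$ together with $\expiringBound$ by the least common multiple of their denominators, so that without loss of generality $\expiringBound \in \setN \cup \{+\infty\}$.

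First I would build, from $\TA$, a gadget that recognises secret runs and, separately, non-secret runs. Adding one fresh clock $\clocky$ that is reset on every edge whose target is $\locpriv$, a Boolean flag $\bflag$ — encodable by duplicating locations — that is set when $\locpriv$ is entered, and deleting all edges leaving $\locfinal$, one gets the following: along any run $\run$ reaching $\locfinal$, the flag $\bflag$ is true iff $\locpriv$ was visited, and in that case the value of $\clocky$ upon entering $\locfinal$ is exactly $\locationsDuration{\run}{\locpriv}{\locfinal}$. Hence a run reaching $\locfinal$ is \emph{secret} (in $\TempInfPrivVisit{\TA}{\expiringBound}$) iff it reaches $\locfinal$ with $\bflag \wedge \clocky \leq \expiringBound$, and \emph{non-secret} iff it reaches $\locfinal$ with $\neg\bflag$, or with $\bflag \wedge \clocky > \expiringBound$. (For $\expiringBound = +\infty$ these guards degenerate and $\clocky$ can be dropped, recovering the ordinary \existentialOpacityText{} situation.)

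Then the key step is to take a synchronous product $\mathcal{C}$ of two such copies, with disjoint clocks and flags, in which delays elapse simultaneously in both components while discrete edges interleave, and to force the two simulated runs to reach $\locfinal$ \emph{at the same instant}. I would do this by giving each of the two final locations a fresh clock $\clockz$ reset on entry with the time-freezing invariant $\clockz \leq 0$, and by adding a single fresh action $\actionEnd$, synchronised between the two copies, on two parallel edges from $(\locfinal, \locfinal)$ to a fresh target location $\locTarget$: both edges carry the guard $\bflag \wedge \clocky \leq \expiringBound$ on the first copy, and respectively $\neg\bflag$ and $\bflag \wedge \clocky > \expiringBound$ on the second. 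Since no delay is possible once either component sits in its final location, a run of $\mathcal{C}$ can fire $\actionEnd$ only if the two final locations were entered at the same time; conversely, a secret run and a non-secret run of $\TA$ of equal duration $d$ can be interleaved into a run of $\mathcal{C}$ reaching both final locations at time $d$ and then firing $\actionEnd$. Thus $\locTarget$ is reachable in $\mathcal{C}$ iff $\TempInfPrivDurVisit{\TA}{\expiringBound} \cap (\TempSupPrivDurVisit{\TA}{\expiringBound} \cup \PubDurVisit{\TA}) \neq \emptyset$, \ie{} iff $\TA$ is \existentialTempOpaqueTextVal{\expiringBound}.

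Finally, for the complexity: $\mathcal{C}$ has $O(|\LocSet|^2)$ locations, $O(|\ClockSet|)$ clocks and constants bounded by the maximum of those of $\TA$ and $\expiringBound$, hence has size polynomial in the input, and reachability in a timed automaton is in \PSPACE{}~\cite{AD94} (a region admits a polynomial-size encoding even with binary-encoded constants, so one guesses a path through its region automaton in nondeterministic polynomial space). This yields the claimed \PSPACE{} upper bound. I expect the only genuinely delicate points to be (i) enforcing that the two runs have exactly the same total duration — the time-freezing plus synchronised-finish construction above — and (ii) checking that the single clock $\clocky$, reset on every edge into $\locpriv$, really records the time since the \emph{last} entry into $\locpriv$ as required by the definition of $\locationsDuration{\run}{\locpriv}{\locfinal}$, including when $\locpriv$ carries self-loops; the rest is a routine product-automaton correctness argument together with the standard region-based \PSPACE{} reachability procedure.
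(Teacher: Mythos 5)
Your proof is correct, but it takes a genuinely different route from the paper's. The paper piggybacks on the machinery of~\cite{ALM23}: it reuses the two exponential-size nondeterministic finite automata (over a unary ``tick'' alphabet) whose languages encode the secret and non-secret durations, and observes that, unlike the equality/inclusion tests needed for \cref{thm:opacity:ICECCS:TAtempop} (which cost \PSPACE{} in the size of those automata, hence \NEXPTIME{} overall), intersection-emptiness is \NLOGSPACE{} in their size, hence \PSPACE{} overall. You instead stay entirely at the timed-automaton level: a fresh clock reset on every edge into $\locpriv$ to record $\locationsDuration{\run}{\locpriv}{\locfinal}$, a self-composition with time-freezing invariants on the two copies of $\locfinal$ and a synchronized finishing action to force equal total durations, and then plain reachability in a polynomial-size product \TAtext{}, which is in \PSPACE{}~\cite{AD94}. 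Both arguments are sound and give the same bound. The paper's version is a two-line corollary of an existing construction but inherits its discretization of durations; yours is self-contained, works with real-valued durations directly, and is essentially the same self-composition idea the paper already uses for its \imitator{} implementation in \cref{section:implementation} --- at the price of having to argue the two delicate points you correctly identify (equal-duration enforcement via time-freezing, and the semantics of the $\clocky$ reset under re-entries of $\locpriv$, which your ``reset on every edge whose target is $\locpriv$'' handles in accordance with \cref{def:opacity:ICECCS:temporary-timed-opacity}).
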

\begin{proof}
The \weakfullTempDecisionProblem{} was solved in~\cite{ALM23} by building two non-deterministic finite automata whose languages represented the secret and the non-secret durations of the system, respectively.
These automata being of exponential size and with a unary language, testing the equality or inclusion of languages led to the \NEXPTIME algorithm
quoted in \cref{thm:opacity:ICECCS:TAtempop}.
Similarly, the \existentialTempDecisionProblem{} can be decided by testing whether the intersection of the languages of these automata is empty.
This can be done in \NLOGSPACE in the size of the automata (classically, by first building the product between these two automata, and then by checking the reachability of a pair of final states), hence the \PSPACE algorithm.
\end{proof}
\begin{theorem}[Solvability of \weakTempBoundComputationProblem{}~{\cite[Theorems~3 and~5]{ALM23}}]
	\label{thm:opacity:ICECCS:weakcomp}
	The \weakTempBoundComputationProblem{} is solvable.
\end{theorem}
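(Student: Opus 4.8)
The plan is to first pin down the shape of $\setBound$ by a monotonicity argument, and then to compute its unique endpoint by an effective analysis of how the secret and non-secret duration sets depend on~$\expiringBound$.

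First I would observe that \weakTempOpaqueTextVal{\expiringBound} is \emph{monotone} in~$\expiringBound$. For $\expiringBound_1 < \expiringBound_2$, raising the \expirationDateText{} only moves private runs from the ``too old'' family into the ``secret'' family, so $\TempInfPrivDurVisit{\TA}{\expiringBound_1} \subseteq \TempInfPrivDurVisit{\TA}{\expiringBound_2}$ and $\TempSupPrivDurVisit{\TA}{\expiringBound_2} \subseteq \TempSupPrivDurVisit{\TA}{\expiringBound_1}$, while $\PubDurVisit{\TA}$ does not depend on~$\expiringBound$. Hence, if $\TA$ is \weakTempOpaqueTextVal{\expiringBound_2}, then $\TempInfPrivDurVisit{\TA}{\expiringBound_1} \subseteq \TempInfPrivDurVisit{\TA}{\expiringBound_2} \subseteq \TempSupPrivDurVisit{\TA}{\expiringBound_2} \cup \PubDurVisit{\TA} \subseteq \TempSupPrivDurVisit{\TA}{\expiringBound_1} \cup \PubDurVisit{\TA}$, i.e.\ $\TA$ is \weakTempOpaqueTextVal{\expiringBound_1}. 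Therefore $\setBound$ is downward closed in $\setRplusinf$: it equals $\emptyset$, or an interval $[0,\expiringBound^\star)$ or $[0,\expiringBound^\star]$ with $\expiringBound^\star\in\setRgeqzero$, or all of $\setRplusinf$. Since $+\infty$ is the top of $\setRplusinf$ and $\TempSupPrivDurVisit{\TA}{+\infty}=\emptyset$, we moreover have $+\infty\in\setBound$ iff $\TA$ is \weakOpaqueText{}, which is decidable by \cref{prop:opacity:TOSEM:W-ET-decision}; in that case $\setBound=\setRplusinf$ and we are done, so it remains to treat the case where $\expiringBound^\star$ is finite.

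For that, I would reuse the machinery behind \cref{thm:opacity:ICECCS:TAtempop} and \cref{thm:opacity:exist:TAtempop}: from~$\TA$ one builds an $\actiontick$-labelled region automaton that, besides the usual region of $\ClockSet$, records the \emph{age of the secret} --- the region of a fresh clock reset on every entry into $\locpriv$, truncated (with an ``expired'' flag) past the largest constant of~$\TA$. Reading the $\actiontick$-words reaching $\locfinal$ off this automaton yields an effective, eventually periodic (hence semilinear) description of the two-dimensional set $S=\{(d,t)\mid\text{a private run reaches }\locfinal\text{ in total time }d\text{ with secret age }t\text{ on arrival}\}$, together with $\PubDurVisit{\TA}$. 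For a given~$\expiringBound$, $\TempInfPrivDurVisit{\TA}{\expiringBound}$ (resp.\ $\TempSupPrivDurVisit{\TA}{\expiringBound}$) is exactly the projection onto the first coordinate of $S\cap\{t\le\expiringBound\}$ (resp.\ $S\cap\{t>\expiringBound\}$), so that the condition that $\TA$ be \weakTempOpaqueTextVal{\expiringBound} becomes the formula
\[
\Phi(\expiringBound)\;:\;\forall d\,\Big(\exists t\,\bigl(t\le\expiringBound\wedge(d,t)\in S\bigr)\;\Rightarrow\;d\in\PubDurVisit{\TA}\;\vee\;\exists t'\,\bigl(t'>\expiringBound\wedge(d,t')\in S\bigr)\Big),
\]
whose atoms involve only the finitely many, effectively given semilinear pieces of $S$ and $\PubDurVisit{\TA}$, and so live in a decidable extension of linear real arithmetic admitting quantifier elimination. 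Eliminating the quantifiers over $d$, $t$, $t'$ produces a quantifier-free description of $\{\expiringBound\in\setRgeqzero\mid\Phi(\expiringBound)\}$, which by the monotonicity above is an interval with rational endpoint~$\expiringBound^\star$; whether it is closed is settled by one call to the decision procedure of \cref{thm:opacity:ICECCS:TAtempop} at $\expiringBound=\expiringBound^\star$. Combining this interval with the $+\infty$ case yields the maximal set $\setBound$.

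The step I expect to be the main obstacle is showing that the joint ``total time versus secret age'' set~$S$ is \emph{effectively} semilinear, since the secret-age clock is unbounded and is reset on every re-entry into~$\locpriv$, so it cannot simply be appended to~$\TA$ as an ordinary bounded clock before taking the region construction. This is handled exactly as in the fixed-$\expiringBound$ argument of~\cite{ALM23}: the precise age is irrelevant once it exceeds the largest constant appearing in any guard or invariant, so tracking it up to that bound plus an overflow flag --- together with the two $\actiontick$ counts, namely total time and age while not yet overflowed --- gives a genuinely finite automaton whose accepted unary language is eventually periodic. Once this is in place the remainder is routine manipulation of semilinear sets followed by a single quantifier elimination. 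A variant avoiding the general quantifier elimination exploits monotonicity directly: the candidate values of~$\expiringBound$ at which $\Phi$ may flip are the rational breakpoints read off the automaton's finitely many two-dimensional regions, so one can locate $\expiringBound^\star$ by deciding \weakTempOpaqueTextVal{\expiringBound} at and between consecutive candidates via \cref{thm:opacity:ICECCS:TAtempop}, using the piecewise-linear dependence of $\TempInfPrivDurVisit{\TA}{\expiringBound}$ and $\TempSupPrivDurVisit{\TA}{\expiringBound}$ on~$\expiringBound$.
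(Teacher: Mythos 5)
The paper states this theorem only by citation to~\cite{ALM23} and reproduces no proof, so the closest available point of comparison is the machinery the paper itself attributes to that work (region-based finite automata over a unary tick alphabet encoding the secret and non-secret duration sets, cf.\ the proof of \cref{thm:opacity:exist:TAtempop}). Your skeleton is sound and consistent with that machinery: the monotonicity argument is correct ($\TempInfPrivDurVisit{\TA}{\expiringBound}$ grows and $\TempSupPrivDurVisit{\TA}{\expiringBound}$ shrinks as $\expiringBound$ increases, while $\PubDurVisit{\TA}$ is fixed), so $\setBound$ is downward closed and the problem reduces to locating a single threshold. One small gap in the case analysis: downward-closed subsets of $\setRplusinf$ also include $[0,+\infty)$ (all finite bounds valid but not $+\infty$), which your enumeration omits. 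That case cannot occur, but ruling it out needs the observation that the age of the secret never exceeds the total duration of the run, so for $\expiringBound \geq d$ no private run of duration~$d$ lies in $\TempSupPrivDurVisit{\TA}{\expiringBound}$; hence any duration witnessing failure of weak \acs*{opacity} at $+\infty$ already witnesses failure at some finite bound. You should add this line.

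The more substantial issue is that essentially all of the content of the theorem sits in the step you defer: the effective description of the joint set $S$ of (total duration, age) pairs. Truncating the age clock past the largest constant is enough for the \emph{region} construction to stay finite, but the age you must \emph{report} is an unbounded real, so a single truncated clock does not yield it. You need something along the lines of: nondeterministically guess which entry into $\locpriv$ is the last one before $\locfinal$, emit one tick letter before that point and a second tick letter after it, and apply Parikh's theorem to the resulting two-letter regular language, together with careful bookkeeping of the open/closed interval endpoints induced by the regions. Until that lemma is written out (including why the resulting description of $S$ is a finite, effectively given collection of pieces over which your formula $\Phi$ lives in a decidable mixed linear real/integer theory), the argument is a plan rather than a proof; your fallback via finitely many breakpoints inherits exactly the same burden, since the breakpoints are read off the same object. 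With that lemma in place, the remainder (quantifier elimination or breakpoint search, plus one membership test at the threshold via \cref{thm:opacity:ICECCS:TAtempop} to decide whether the interval is closed) is routine.
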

\begin{corollary}[Decidability of \weakTempBoundEmptinessProblem{}~{\cite[Corollary~1]{ALM23}}]\label{thm:opacity:ICECCS:weakEmptiness}
	The \weakTempBoundEmptinessProblem{} is decidable.
\end{corollary}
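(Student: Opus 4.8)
The plan is to obtain this corollary as an immediate consequence of \cref{thm:opacity:ICECCS:weakcomp}. That theorem already asserts that the \weakTempBoundComputationProblem{} is solvable, i.e.\ that one can effectively compute the maximal set $\setBound \subseteq \setRplusinf$ of expiration dates $\expiringBound$ for which $\TA$ is \weakTempOpaqueTextVal{\expiringBound}. Since this maximal set is by definition exactly the set whose emptiness the \weakTempBoundEmptinessProblem{} asks about, deciding emptiness then amounts to inspecting the description of $\setBound$ returned by the computation algorithm. So the entire content of the corollary reduces to the effectiveness of that representation, and first I would make this reduction explicit.

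Second, I would recall (from the proof of \cref{thm:opacity:ICECCS:weakcomp} in~\cite{ALM23}) the shape of the computed object. For each fixed $\expiringBound$, the secret duration set $\TempInfPrivDurVisit{\TA}{\expiringBound}$ and the non-secret duration set $\TempSupPrivDurVisit{\TA}{\expiringBound} \cup \PubDurVisit{\TA}$ are finite unions of intervals with rational (or infinite) endpoints obtained via a region/zone analysis, and their dependence on $\expiringBound$ is itself piecewise of a simple form; consequently $\setBound$ is produced as a finite union of intervals of $\setRplusinf$ whose endpoints are rationals or $+\infty$ and whose open/closed status is explicitly determined. Emptiness of such a finite union of intervals is then decided by a trivial syntactic test (the list is empty, or every listed interval is degenerate), which yields decidability. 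It is worth noting that $\setBound$ genuinely can be empty (for instance, if $\TA$ is not even weakly \tempOpaqueText{} for any finite bound nor for $+\infty$), so the statement is not vacuous.

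The hard part is therefore not in the corollary at all but already packaged inside \cref{thm:opacity:ICECCS:weakcomp}. If one wanted a proof avoiding that stronger result, the obstacle would be that there are uncountably many candidate bounds $\expiringBound$, so one cannot simply invoke the fixed-$\expiringBound$ decision procedure of \cref{thm:opacity:ICECCS:TAtempop} pointwise. The missing ingredient one would then have to supply is exactly a finite-partition argument: $\setRplusinf$ splits into finitely many intervals, with computable rational breakpoints, on which the answer to the weak \tempOpacityText{} question is constant. Supplying that is precisely what the computation algorithm does, which is why the cleanest route is to cite \cref{thm:opacity:ICECCS:weakcomp} and simply read emptiness off its output.
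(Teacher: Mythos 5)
Your proposal is correct and follows exactly the paper's route: the corollary is obtained as an immediate consequence of \cref{thm:opacity:ICECCS:weakcomp}, since computing the (effectively represented) maximal set of bounds $\expiringBound$ lets one simply read off whether it is empty. Your additional remarks on the interval representation of the computed set are consistent with how the computation result is established in~\cite{ALM23} and do not change the argument.
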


In contrast to the \weakTempBoundComputationProblem{}, we only show below that the \fullTempBoundEmptinessProblem{} is decidable; the computation problem remains open.

\begin{theorem}[Decidability of the \fullTempBoundEmptinessProblem{}~{\cite[Theorems~4 and~5]{ALM23}}]\label{thm:opacity:ICECCS:opacityEmptiness}
	The \fullTempBoundEmptinessProblem{} is decidable.
\end{theorem}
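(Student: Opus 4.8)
The plan is to reduce \fullTempBoundEmptinessProblem{} to finitely many instances of the \fullTempDecisionProblem{} for fixed values of~$\expiringBound$, which is decidable by \cref{thm:opacity:ICECCS:TAtempop}. The first observation is structural. Write $A(\expiringBound) = \TempInfPrivDurVisit{\TA}{\expiringBound}$ and $B(\expiringBound) = \TempSupPrivDurVisit{\TA}{\expiringBound} \cup \PubDurVisit{\TA}$, so that $\TA$ is \fullTempOpaqueTextVal{\expiringBound} iff $A(\expiringBound) = B(\expiringBound)$. As $\expiringBound$ grows the condition $\locationsDuration{\run}{\locpriv}{\locfinal} \leq \expiringBound$ only weakens, so $A$ is non-decreasing for set inclusion; dually $\TempSupPrivDurVisit{\TA}{\expiringBound}$ shrinks while $\PubDurVisit{\TA}$ is constant, so $B$ is non-increasing. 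Hence whenever $\expiringBound_1 < \expiringBound < \expiringBound_2$ and $A(\expiringBound_i) = B(\expiringBound_i)$ for $i \in \{1,2\}$, we get $A(\expiringBound_1) \subseteq A(\expiringBound) \subseteq A(\expiringBound_2) = B(\expiringBound_2) \subseteq B(\expiringBound) \subseteq B(\expiringBound_1) = A(\expiringBound_1)$, so $A(\expiringBound) = B(\expiringBound)$. Thus the solution set $\setBound$ is order-convex in $\setRplusinf$, \ie{} an interval.

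Second, I would show that, as $\expiringBound$ ranges over $\setRplusinf$, the pair $(A(\expiringBound), B(\expiringBound))$ takes only finitely many values, and that the transition thresholds are computable. One concrete way: take the region automaton of~$\TA$ extended with a fresh clock reset on every edge entering~$\locpriv$ (the ``age'' of the last visit to the private location), a fresh never-reset clock recording total elapsed time, and a Boolean flag enforcing that $\locpriv$ has been visited; the effective constructions underlying \cref{thm:opacity:ICECCS:TAtempop} then yield a finite description of the set $P$ of pairs $(t,a)$ — total duration $t$, secret age $a = \locationsDuration{\run}{\locpriv}{\locfinal}$ at completion — realised by private runs, as a finite union of polyhedra with integer-coefficient facets (and rays), and similarly $\PubDurVisit{\TA}$ as a finite union of intervals with integer endpoints. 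Since $A(\expiringBound)$ is the first-coordinate projection of $P \cap \{a \leq \expiringBound\}$ and $B(\expiringBound)$ that of $P \cap \{a > \expiringBound\}$ together with $\PubDurVisit{\TA}$, both $A$ and $B$ — hence the truth of ``$A(\expiringBound) = B(\expiringBound)$'' — are constant on each open interval between consecutive elements of a finite, effectively computable threshold set $T$, consisting of the integer $a$-levels of facets of~$P$, the endpoints of $\PubDurVisit{\TA}$, the half-integer points in between, $0$, and $+\infty$.

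The algorithm then tests \fullTempOpaqueTextVal{\expiringBound} for $\expiringBound$ ranging over $T$ and over one representative point per gap between consecutive thresholds: by the previous paragraph $\setBound$ is a union of such ``cells'', so $\setBound \neq \emptyset$ iff at least one of these finitely many tests answers yes. Each test is decidable by \cref{thm:opacity:ICECCS:TAtempop} (with $\expiringBound = +\infty$ additionally reducible to the decidable \fullOpacityText{}, \cref{prop:opacity:TOSEM:F-ET-decision}), so \fullTempBoundEmptinessProblem{} is decided; combined with order-convexity one even pinpoints the two endpoints of the interval~$\setBound$ among the candidates in~$T$.

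The hard part is the middle step: establishing that only finitely many values of~$\expiringBound$ are behaviourally distinct, and computing the thresholds. This rests on the semilinear / finite-union-of-polyhedra description of the joint ``(duration, secret age)'' set~$P$ coming out of the region automaton, together with the usual care about open versus closed interval endpoints and about unbounded durations — essentially the same bookkeeping already carried out for \cref{thm:opacity:ICECCS:TAtempop}. Note that this argument yields \emph{emptiness} only: actually computing~$\setBound$ is more delicate and remains open, unlike its weak counterpart settled in \cref{thm:opacity:ICECCS:weakcomp}.
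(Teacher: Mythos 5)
Your opening observation is correct and worth keeping: writing $A(\expiringBound)=\TempInfPrivDurVisit{\TA}{\expiringBound}$ and $B(\expiringBound)=\TempSupPrivDurVisit{\TA}{\expiringBound}\cup\PubDurVisit{\TA}$, the first set is non-decreasing and the second non-increasing in~$\expiringBound$, so the set of bounds for which the \TAtext{} is \fullTempOpaqueText{} is order-convex. The fatal gap is in your middle step. The pair $(A(\expiringBound),B(\expiringBound))$ does \emph{not} take finitely many values, and is \emph{not} constant on cells between computable thresholds: the paper's own \cref{ex:opacity:ICECCS:opacityProblemsPTA} computes $\TempInfPrivDurVisit{\valuate{\PTA}{\pval}}{\expiringBound} = [\pval(\parami{1}), \min(\expiringBound+3, \pval(\parami{2}))]$, whose right endpoint moves continuously with~$\expiringBound$ over a whole range of values. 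This is precisely the effect of a non-axis-parallel facet of your set~$P$ (say $a=t-c$): sweeping $\expiringBound$ across it changes the projection continuously, so a threshold set made of integer $a$-levels and midpoints misses uncountably many behaviourally distinct bounds, and ``test one representative per cell'' is unsound. Independently, your claim that $\PubDurVisit{\TA}$ is a finite union of intervals with integer endpoints is false: the \TAtext{} of \cref{figure:opacity:TOSEM:opaqueforN} has private duration set exactly~$\setN$, an infinite union of isolated points, which is why the constructions behind \cref{thm:opacity:ICECCS:TAtempop} go through unary automata capturing ultimately periodic sets rather than finite unions of polyhedra.

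There is also an external sanity check that something must break: if your finitely-many-cells argument were sound, it would compute $\setBound$ outright --- you even claim to pinpoint both endpoints of the interval --- and would thus solve the \fullTempBoundComputationProblem{}, which the paper explicitly leaves open (see the discussion preceding \cref{thm:opacity:ICECCS:opacityEmptiness} and \cref{tab:resultsRecallICECCS}). Any correct proof of emptiness must therefore decide non-emptiness \emph{without} computing the solution set. A more promising exploitation of the monotonicity you already established is that $\{\expiringBound \mid A(\expiringBound)\subseteq B(\expiringBound)\}$ is downward-closed and computable (\cref{thm:opacity:ICECCS:weakcomp}), while $\{\expiringBound \mid B(\expiringBound)\subseteq A(\expiringBound)\}$ is upward-closed; full \tempOpacityText{} holds on their intersection, so the work concentrates on locating the lower boundary and deciding attainment --- which is where the genuine difficulty, and the reason the computation problem stays open, lies.
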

\subsection{\TempOpacityTextForSec{} in \PTAstextForSec{}}\label{sec:opacity:ICECCS:PTAs}

We now study \tempOpacityText{} problems for \PTAstext{}:
we will be interested in the synthesis and in the emptiness of the valuations set ensuring that a system is \weakfullTempOpaqueText{}.

\subsubsection{Definitions}

We define the following problems, where we ask for parameter valuations $\pval$ and for valuations of~$\expiringBound$ \st\ $\valuate{\PTA}{\pval}$ is \weakfullTempOpaqueTextVal{\expiringBound}.

\smallskip

\defProblem
{\WeakfullTempBoundParametersEmptinessProblem{}}
{A \PTAtext{}~$\PTA$
}
{Decide whether the set of parameter valuations~$\pval$ and valuations of~$\expiringBound$ such that $\valuate{\PTA}{\pval}$ is \weakfullTempOpaqueTextVal{\expiringBound} is empty}

\smallskip

\defProblem
{\WeakfullTempBoundParametersSynthesisProblem{}}
{A \PTAtext{}~$\PTA$
}
{Synthesize the set of parameter valuations~$\pval$ and valuations of~$\expiringBound$ such that $\valuate{\PTA}{\pval}$ is \weakfullTempOpaqueTextVal{\expiringBound}}

\begin{example}\label{ex:opacity:ICECCS:opacityProblemsPTA}
	Consider again the \PTAtext{}~$\PTA$ in \cref{figure:example-PTA}.

	For this \PTAtext{}, the answer to the \weakTempBoundParametersEmptinessProblem{} is false,
	as there exists such a valuation (\eg{} the valuation given in \cref{ex:opacity:ICECCS:opacityProblemsTA}).

	Moreover, we can show that, for all $\expiringBound$ and~$\pval$:
	\begin{itemize}
		\item $\PubDurVisit{\valuate{\PTA}{\pval}} = [0,3] $
		\item if $\valuate{\parami{1}}{\pval}>3$ or $\valuate{\parami{1}}{\pval}>\valuate{\parami{2}}{\pval}$, it is not possible to reach $\locfinal$ with a run visiting $\locpriv$ and therefore $\TempSupPrivDurVisit{\valuate{\PTA}{\pval}}{\expiringBound} = \TempInfPrivDurVisit{\valuate{\PTA}{\pval}}{\expiringBound} = \emptyset$
		\item if $\valuate{\parami{1}}{\pval}\leq3$ and $\valuate{\parami{1}}{\pval}\leq\valuate{\parami{2}}{\pval}$
		\begin{itemize}
			\item $\TempSupPrivDurVisit{\valuate{\PTA}{\pval}}{\expiringBound} = (\valuate{\parami{1}}{\pval} + \expiringBound , \valuate{\parami{2}}{\pval}] $
			\item $\TempInfPrivDurVisit{\valuate{\PTA}{\pval}}{\expiringBound} = [\valuate{\parami{1}}{\pval}, \min(\expiringBound+3, \valuate{\parami{2}}{\pval})] $
		\end{itemize}
	\end{itemize}

	Recall that the \fullTempBoundParametersSynthesisProblem{} aims at synthesizing the valuations such that
	$\TempInfPrivDurVisit{\valuate{\PTA}{\pval}}{\expiringBound} = \TempSupPrivDurVisit{\valuate{\PTA}{\pval}}{\expiringBound} \cup \PubDurVisit{\valuate{\PTA}{\pval}}$.
	The answer to this problem is therefore the set of valuations of timing parameters and of~$\expiringBound$ \st:
	$$\pval(\parami{1})=0 \wedge \Big( \big(\expiringBound\leq 3 \wedge 3\leq \pval(\parami{2})\leq \expiringBound+3\big)\vee \big(\pval(\parami{2})<\expiringBound\wedge \pval(\parami{2})=3\big)\Big)\text{.}$$
\end{example}
\subsubsection{Results}
\paragraph{The subclass of \LUPTAstextForSec{}}
\begin{theorem}[Undecidability of \weakfullTempBoundParametersEmptinessProblem{}~{\cite[Theorem~6]{ALM23}}]\label{thm:opacity:ICECCS:EmptinessLU}
	The \weakfullTempBoundParametersEmptinessProblem{} is undecidable for \LUPTAstext{}. %
\end{theorem}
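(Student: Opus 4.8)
The plan is to reuse, essentially verbatim, the reduction from the reachability-emptiness problem over bounded time that underlies \cref{prop:opacity:TOSEM:W-ET-pemptiness-LUPTA,prop:opacity:TOSEM:F-ET-pemptiness-LUPTA}, observing that the L/U-PTA produced there already has the structural feature needed to make the expiration bound~$\expiringBound$ irrelevant. Concretely, recall that for an arbitrary \PTAtext{}~$\PTA$ with $3$~clocks and $2$~parameters and final location~$\locfinal$, it is undecidable (\cite[Theorem~3.12]{ALM20}) whether some valuation~$\pval$ makes $\locfinal$ reachable in $\valuate{\PTA}{\pval}$ in time~$\leq 1$; the proof of \cref{prop:opacity:TOSEM:W-ET-pemptiness-LUPTA} builds from~$\PTA$ \LUPTAtext{}~$\PTA'$ (splitting each parameter~$\parami{i}$ into a lower-bound copy~$\paramiLower{i}$ and an upper-bound copy~$\paramiUpper{i}$, and wrapping it in the gadget of \cref{figure:opacity:TOSEM:undecidabilityFTOELU}) such that $\valuate{\PTA'}{\pval}$ is \fullOpaqueText{} (equivalently, \weakOpaqueText{}) for some~$\pval$ iff such a valuation~$\pval$ of~$\PTA$ exists.

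The key observation is that, in~$\PTA'$, every transition entering the private location~$\locpriv$ resets the auxiliary clock~$\clock$, and the only transition leaving~$\locpriv$ (towards the final location~$\locfinal'$) is guarded by $\clock = 0$; hence no time elapses in~$\locpriv$, so in every run reaching~$\locfinal'$ through~$\locpriv$ we have $\locationsDuration{\run}{\locpriv}{\locfinal'} = 0$. Therefore, for \emph{every} parameter valuation~$\pval$ and \emph{every} bound $\expiringBound\in\setRplusinf$, $\TempSupPrivDurVisit{\valuate{\PTA'}{\pval}}{\expiringBound} = \emptyset$ and $\TempInfPrivDurVisit{\valuate{\PTA'}{\pval}}{\expiringBound} = \PrivDurVisit{\valuate{\PTA'}{\pval}}$. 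Plugging this into \cref{def:opacity:ICECCS:temporary-timed-opacity}, $\valuate{\PTA'}{\pval}$ is \fullTempOpaqueTextVal{\expiringBound} (resp.\ \weakTempOpaqueTextVal{\expiringBound}) iff $\PrivDurVisit{\valuate{\PTA'}{\pval}} = \PubDurVisit{\valuate{\PTA'}{\pval}}$ (resp.\ $\PrivDurVisit{\valuate{\PTA'}{\pval}} \subseteq \PubDurVisit{\valuate{\PTA'}{\pval}}$), \ie{} iff $\valuate{\PTA'}{\pval}$ is \fullOpaqueText{} (resp.\ \weakOpaqueText{}), independently of~$\expiringBound$. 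Hence the set of pairs $(\pval,\expiringBound)$ for which $\valuate{\PTA'}{\pval}$ is \weakfullTempOpaqueTextVal{\expiringBound} equals $\{\,\pval \mid \valuate{\PTA'}{\pval}\text{ is \fullOpaqueText{} (resp.\ \weakOpaqueText{})}\,\}\times\setRplusinf$, which is empty iff $\{\,\pval \mid \valuate{\PTA'}{\pval}\text{ is \fullOpaqueText{} (resp.\ \weakOpaqueText{})}\,\}$ is empty; by \cref{prop:opacity:TOSEM:F-ET-pemptiness-LUPTA} (resp.\ \cref{prop:opacity:TOSEM:W-ET-pemptiness-LUPTA}), deciding the latter is undecidable, which yields the claim.

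I expect the main obstacle to be cosmetic rather than mathematical: one has to check that the cited construction is genuinely applicable here --- namely that~$\PTA'$ is indeed an L/U-PTA (which holds because each~$\parami{i}$ has been split into a separate lower- and upper-bound parameter) and that reusing it does not disturb the treatment of the off-diagonal valuations $\paramiLower{i}\neq\paramiUpper{i}$ (these are ruled out exactly as in the proof of \cref{prop:opacity:TOSEM:W-ET-pemptiness-LUPTA}, whose case analysis manipulates only $\PrivDurVisit{\cdot}$ and $\PubDurVisit{\cdot}$, which we just showed coincide here with the secret and non-secret duration sets). A slightly longer but equally valid alternative, preferred for a self-contained argument as in \cite{ALM23}, is to re-draw the gadget of \cref{figure:opacity:TOSEM:undecidabilityFTOELU} and redo the case analysis from scratch, keeping the ``$\clock = 0$ out of~$\locpriv$'' trick so that the collapse of \tempOpacityText{} to \opacityText{} still holds.
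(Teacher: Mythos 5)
Your derivation is correct, and I cannot compare it line-by-line with the paper's own argument because this survey defers the proof entirely to \cite[Theorem~6]{ALM23}; what I can confirm is that your route is sound and self-contained given the material of \cref{sec:opacity:TOSEM:parametric-W-ET}. The key structural observation holds: in the gadget of \cref{figure:opacity:TOSEM:undecidabilityFTOELU}, both edges entering $\locpriv$ reset $\clock$ and the unique edge leaving $\locpriv$ is guarded $\clock = 0$, so every private run reaching $\locfinal'$ has $\locationsDuration{\run}{\locpriv}{\locfinal'} = 0$; hence $\TempSupPrivDurVisit{\valuate{\PTA'}{\pval}}{\expiringBound} = \emptyset$ and $\TempInfPrivDurVisit{\valuate{\PTA'}{\pval}}{\expiringBound} = \PrivDurVisit{\valuate{\PTA'}{\pval}}$ for every $\pval$ and every $\expiringBound \in \setRplusinf$, and the expiring notions collapse to the plain ones on exactly the instances produced by the reduction (where, moreover, weak and full \opacityText{} coincide). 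One caveat: your closing step ``by \cref{prop:opacity:TOSEM:F-ET-pemptiness-LUPTA,prop:opacity:TOSEM:W-ET-pemptiness-LUPTA}, deciding the latter is undecidable'' is, taken literally, insufficient---undecidability of a problem over the class of all \LUPTAstext{} does not by itself transfer to the subfamily of instances of the form $\PTA'$---but your first paragraph already repairs this by chaining through the explicit reduction from bounded-time reachability-emptiness (\cite[Theorem~3.12]{ALM20}), for which the equivalence ``some $\pval$ makes $\valuate{\PTA'}{\pval}$ weakly/fully \opaqueText{} iff some $\pval$ makes $\locfinal$ reachable in $\valuate{\PTA}{\pval}$ within time $\leq 1$'' is established on precisely these instances; the final write-up should invoke that chain rather than the propositions as black boxes. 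As a small bonus, since the set of valuations you obtain is a product $S \times \setRplusinf$, your argument yields a slightly stronger statement than the theorem: the p-emptiness problem for \weakfullTempOpacityText{} is undecidable for \LUPTAstext{} even when $\expiringBound$ is fixed in advance.
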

The synthesis problems are therefore immediately unsolvable as well.

\begin{corollary}[{\cite[Corollary~2]{ALM23}}]\label{thm:opacity:ICECCS:synthesisLU}
	The \weakfullTempBoundParametersSynthesisProblem{} is unsolvable for \LUPTAstext{}. %
\end{corollary}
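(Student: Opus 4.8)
The plan is to reduce from the reachability-emptiness problem over bounded time, which by \cite[Theorem~3.12]{ALM20} is already undecidable for \PTAstext{} with 3~clocks and 2~parameters: fixing the time bound~$1$, given such a \PTAtext{}~$\PTA$ with final location~$\locfinal$, one cannot decide whether there is a parameter valuation~$\pval$ for which $\locfinal$ is reachable in $\valuate{\PTA}{\pval}$ within time~$1$. I would reuse, essentially verbatim, the L/U-PTA construction~$\PTA'$ built in the proof of \cref{prop:opacity:TOSEM:W-ET-pemptiness-LUPTA} (\cref{figure:opacity:TOSEM:undecidabilityFTOELU}): split each of the two parameters of~$\PTA$ into a lower-bound copy~$\paramiLower{i}$ and an upper-bound copy~$\paramiUpper{i}$, rewrite the guards and invariants of~$\PTA$ accordingly (so that~$\PTA$ is faithfully reproduced exactly when $\paramiLower{i}=\paramiUpper{i}$), prepend a fresh initial location~$\locinit'$ with the two ``splitting paths'' that reach~$\locinit$ after exactly one time unit but only when $\paramiLower{i}\leq\paramiUpper{i}$, add an edge $\locinit'\to\locpriv$ enabled at time~$2$, an edge through~$\loci{4}$ to~$\locpriv$ at time~$>2$ that is enabled precisely when some $\paramiLower{i}<\paramiUpper{i}$, and an edge $\locpriv\to\locfinal'$ taken in $0$-time (guarded by $\clock=0$ after $\clock$ is reset upon entering~$\locpriv$); finally $\locfinal\to\locfinal'$ at time~$2$. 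By construction this is an L/U-PTA.

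The key observation that makes the reduction carry over to the expiring setting essentially for free is that in~$\PTA'$ every private run enters~$\locpriv$ with $\clock=0$ and leaves it immediately, so $\locationsDuration{\run}{\locpriv}{\locfinal'}=0$ for every such run. Since $0\leq\expiringBound$ for every $\expiringBound\in\setRplusinf$ (including $\expiringBound=+\infty$), we get $\TempSupPrivDurVisit{\valuate{\PTA'}{\pval}}{\expiringBound}=\emptyset$ and $\TempInfPrivDurVisit{\valuate{\PTA'}{\pval}}{\expiringBound}=\PrivDurVisit{\valuate{\PTA'}{\pval}}$ for every~$\pval$ and every~$\expiringBound$. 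Hence the \weakTempOpacityTextVal{\expiringBound} (resp.\ \fullTempOpacityTextVal{\expiringBound}) condition of \cref{def:opacity:ICECCS:temporary-timed-opacity} collapses to $\PrivDurVisit{\valuate{\PTA'}{\pval}}\subseteq\PubDurVisit{\valuate{\PTA'}{\pval}}$ (resp.\ to equality), i.e.\ to \weakOpacityText{} (resp.\ \fullOpacityText{}) of $\valuate{\PTA'}{\pval}$, uniformly in~$\expiringBound$.

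It then remains to invoke the three-way case analysis already performed for \cref{prop:opacity:TOSEM:W-ET-pemptiness-LUPTA}: if $\paramiLower{i}>\paramiUpper{i}$ for some~$i$ then~$\PTA$ cannot be entered, so $\PubDurVisit{\valuate{\PTA'}{\pval}}=\emptyset$ while $\{2\}\subseteq\PrivDurVisit{\valuate{\PTA'}{\pval}}$; if $\paramiLower{i}<\paramiUpper{i}$ for some~$i$ then $\{3\}\subseteq\PrivDurVisit{\valuate{\PTA'}{\pval}}$ but $\PubDurVisit{\valuate{\PTA'}{\pval}}\subseteq\{2\}$; and if $\paramiLower{i}=\paramiUpper{i}$ for $i=1,2$ then the central block behaves exactly like~$\PTA$, so $\PrivDurVisit{\valuate{\PTA'}{\pval}}=\{2\}$ while $\PubDurVisit{\valuate{\PTA'}{\pval}}=\{2\}$ if $\locfinal$ is reachable in $\valuate{\PTA}{\pval}$ within time~$1$ and $\PubDurVisit{\valuate{\PTA'}{\pval}}=\emptyset$ otherwise. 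In the first two cases $\valuate{\PTA'}{\pval}$ is neither weakly nor fully ET-opaque; in the last case it is weakly (equivalently fully) ET-opaque iff the reachability holds. Combining with the collapse above, the set of pairs $(\pval,\expiringBound)$ for which $\valuate{\PTA'}{\pval}$ is \weakfullTempOpaqueTextVal{\expiringBound} is non-empty iff there is a~$\pval$ reaching~$\locfinal$ in~$\PTA$ within time~$1$---which is undecidable; the unsolvability of synthesis (\cref{thm:opacity:ICECCS:synthesisLU}) is then immediate.

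The hard part is not the expiring bookkeeping---which, by the degeneracy just described, is vacuous here---but the design and correctness of the splitting gadget: one must check that it genuinely respects the L/U discipline (the $\paramiLower{i}$ occurring only as lower bounds on clocks and the $\paramiUpper{i}$ only as upper bounds), that \emph{every} valuation with $\paramiLower{i}\neq\paramiUpper{i}$ is \emph{provably} non-opaque (both weakly and fully), and that the residual equalities-only valuations reproduce the behaviour of~$\PTA$ faithfully, including the one-time-unit offset. This is exactly the delicate part already established in the proofs of \cref{prop:opacity:TOSEM:W-ET-pemptiness-LUPTA,prop:opacity:TOSEM:F-ET-pemptiness-LUPTA} (and, before them, in \cite[Theorem~7.4]{ALMS22} and \cite{JLR15}), which is why re-using that construction is the most economical route.
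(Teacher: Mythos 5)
Your argument is correct, but it is doing far more work than the paper does for this corollary. In the paper, \cref{thm:opacity:ICECCS:synthesisLU} is a one-line consequence of \cref{thm:opacity:ICECCS:EmptinessLU} (itself imported from \cite[Theorem~6]{ALM23}): if one could synthesize the set of pairs $(\pval,\expiringBound)$ ensuring \weakfullTempOpacityTextVal{\expiringBound} in a form on which emptiness is testable, one could decide the \weakfullTempBoundParametersEmptinessProblem{}, which is undecidable --- end of proof. What you have written is, in substance, a self-contained (re-)proof of that emptiness theorem, obtained by recycling the L/U gadget of \cref{prop:opacity:TOSEM:W-ET-pemptiness-LUPTA}, followed by the same ``emptiness undecidable hence synthesis unsolvable'' step. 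Your reduction is sound: the observation that every private run of the gadget enters $\locpriv$ with $\clock=0$ and must leave with $\clock=0$, so that $\locationsDuration{\run}{\locpriv}{\locfinal'}=0\leq\expiringBound$ for every $\expiringBound\in\setRplusinf$, does make $\TempSupPrivDurVisit{\valuate{\PTA'}{\pval}}{\expiringBound}$ empty and collapses \cref{def:opacity:ICECCS:temporary-timed-opacity} to plain \weakOpacityText{}/\fullOpacityText{} uniformly in $\expiringBound$, after which the three-way case analysis on $\paramiLower{i}$ versus $\paramiUpper{i}$ goes through unchanged and the L/U discipline is respected. What your route buys is self-containedness and a neat degeneracy argument showing the non-expiring L/U reduction transfers verbatim to the expiring setting; what it costs is that you are proving a theorem the paper already has in hand, when the corollary only needs the standard synthesis-implies-decidable-emptiness observation.
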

\paragraph{The full class of \PTAstextForSec{}}

The undecidability of the emptiness problems for \LUPTAstext{} proved above (\cref{thm:opacity:ICECCS:EmptinessLU}) immediately implies undecidability for the larger class of \PTAstext{}.
However, as in \cref{remark:redundancy}, the full proof (given in~\cite{ALM23}) of the result stated below uses less clocks and parameters than for \LUPTAstext{} (\cref{thm:opacity:ICECCS:EmptinessLU}).

\begin{theorem}[Undecidability of \weakfullTempBoundParametersEmptinessProblem{}~{\cite[Theorem~7]{ALM23}}]\label{thm:opacity:ICECCS:EmptinessPTA}
	The \weakfullTempBoundParametersEmptinessProblem{} is undecidable for general \PTAstext{}. %
\end{theorem}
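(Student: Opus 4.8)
The plan is to prove the statement in two ways, mirroring the discussion that precedes it. The structural route is immediate: \LUPTAstext{} form a subclass of \PTAstext{}, so any procedure deciding the \weakfullTempBoundParametersEmptinessProblem{} for general \PTAstext{} would in particular decide it for \LUPTAstext{}, contradicting \cref{thm:opacity:ICECCS:EmptinessLU}. This already settles the theorem; it is the argument one would give if only the decidability status mattered.

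To obtain the sharper statement actually proved in \cite{ALM23}---the same undecidability, but established with fewer clocks and parameters, for the reason explained in \cref{remark:redundancy}---I would instead give a dedicated reduction from reachability-emptiness in bounded time for \PTAstext{}, which is undecidable by \cite[Theorem~3.12]{ALM20}. This reuses the light construction of \cref{figure:opacity:TOSEM:undecidabilityFullOpacityPTA} (used there to prove \cref{prop:opacity:TOSEM:W-ET-pemptiness-PTA}) rather than the heavier construction of \cref{figure:opacity:TOSEM:undecidabilityFTOELU}, whose parameter-splitting gadget is precisely what costs the extra clock and parameters and is superfluous for general \PTAstext{}. Given an arbitrary \PTAtext{}~$\PTA$ with final location~$\locfinal$, I would form~$\PTA'$ by adding a fresh clock~$\clock$, a fresh initial location~$\locinit'$ with a $0$-time edge (guarded $\clock=0$) into the old~$\locinit$ and an edge guarded by $\clock=1$ into a private location~$\locpriv$, an edge $\locpriv\to\locfinal'$ also guarded by $\clock=1$ (so that no time passes between~$\locpriv$ and the fresh final location~$\locfinal'$, making every private run ``fresh''), and the edges $\locfinal\to\locpub\to\locfinal'$ both guarded by $\clock=1$ exactly as in \cref{figure:opacity:TOSEM:undecidabilityFullOpacityPTA}. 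Then, for \emph{every} valuation~$\pval$ and \emph{every} bound~$\expiringBound$, the time between the last visit of~$\locpriv$ and~$\locfinal'$ is~$0\leq\expiringBound$, so $\TempInfPrivDurVisit{\valuate{\PTA'}{\pval}}{\expiringBound}=\set{1}$ and $\TempSupPrivDurVisit{\valuate{\PTA'}{\pval}}{\expiringBound}=\emptyset$; moreover $\PubDurVisit{\valuate{\PTA'}{\pval}}$ equals~$\set{1}$ if~$\locfinal$ is reachable in~$\valuate{\PTA}{\pval}$ in time~$\leq 1$, and is~$\emptyset$ otherwise.

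Granting this, the verification is routine and uniform in~$\expiringBound$: $\PTA'$ is \fullTempOpaqueTextVal{\expiringBound} iff $\set{1}=\emptyset\cup\PubDurVisit{\valuate{\PTA'}{\pval}}$, and \weakTempOpaqueTextVal{\expiringBound} iff $\set{1}\subseteq\emptyset\cup\PubDurVisit{\valuate{\PTA'}{\pval}}$; since $\PubDurVisit{\valuate{\PTA'}{\pval}}$ is either~$\emptyset$ or~$\set{1}$, both conditions are equivalent to $\PubDurVisit{\valuate{\PTA'}{\pval}}=\set{1}$, that is, to~$\locfinal$ being reachable in~$\valuate{\PTA}{\pval}$ in time~$\leq 1$---independently of the value of~$\expiringBound$. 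Hence a pair $(\pval,\expiringBound)$ with $\valuate{\PTA'}{\pval}$ \weakfullTempOpaqueText{} exists iff some valuation makes~$\locfinal$ reachable in~$\PTA$ within the time bound, which is undecidable; a single construction thus disposes of both the ``full'' and the ``weak'' variant at once. The one delicate point is the ``fresh private branch'': it is what neutralises the free parameter~$\expiringBound$ (guaranteeing correctness uniformly in~$\expiringBound$, and in particular that a ``no'' instance cannot become vacuously \weakTempOpaqueText{} for small~$\expiringBound$ by having an empty secret set), and it is what keeps the budget minimal---one extra clock and no extra parameter beyond those of~$\PTA$---which is the whole reason for giving this dedicated proof rather than merely invoking \cref{thm:opacity:ICECCS:EmptinessLU}.
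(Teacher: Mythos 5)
Your first argument is exactly the paper's own justification for this statement: since \LUPTAstext{} are a subclass of \PTAstext{}, undecidability for general \PTAstext{} is inherited immediately from \cref{thm:opacity:ICECCS:EmptinessLU}, and the paper itself only states this and defers the resource-optimised proof to~\cite{ALM23}. Your additional dedicated reduction is a sound bonus and is in the spirit of what \cref{remark:redundancy} alludes to: by moving the guard on the edge $\locinit'\to\locpriv$ from $\clock=0$ to $\clock=1$, every private run enters $\locpriv$ exactly when it leaves it, so its ``secret age'' is $0\leq\expiringBound$ for every $\expiringBound$, giving $\TempInfPrivDurVisit{\valuate{\PTA'}{\pval}}{\expiringBound}=\set{1}$ and $\TempSupPrivDurVisit{\valuate{\PTA'}{\pval}}{\expiringBound}=\emptyset$ uniformly in~$\expiringBound$, and both the weak and the full variants collapse to $\PubDurVisit{\valuate{\PTA'}{\pval}}=\set{1}$, i.e.\ to bounded-time reachability, undecidable by~\cite[Theorem~3.12]{ALM20}. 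This buys a self-contained proof with only one extra clock and no extra parameter, and neutralises $\expiringBound$ so that one construction covers both variants at once. The only point to flag is the same implicit assumption already present in the paper's proof of \cref{prop:opacity:TOSEM:W-ET-pemptiness-PTA}, on which your gadget is modelled: a run reaching $\locfinal$ at a time $\leq 1$ must be able to wait there until $\clock=1$ to take the edge to $\locpub$, so one should assume (without loss of generality, e.g.\ by adding a fresh unconstrained final location to~$\PTA$) that $\locfinal$ carries no blocking invariant.
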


Again, the synthesis problems are therefore immediately unsolvable as well.

\begin{corollary}[{\cite[Corollary~3]{ALM23}}]\label{thm:opacity:ICECCS:synthesisPTA}
	The \weakfullTempBoundParametersSynthesisProblem{} is unsolvable for \PTAstext{}. %
\end{corollary}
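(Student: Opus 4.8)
The plan is to obtain this as an immediate consequence of \cref{thm:opacity:ICECCS:EmptinessPTA}, exactly as the analogous corollaries elsewhere in the paper (\eg{} \cref{thm:opacity:TOSEM:synthesisPTAfull}) follow from their companion emptiness results. First I would recall the general principle linking synthesis and emptiness: any procedure solving the \weakfullTempBoundParametersSynthesisProblem{} must, on input a \PTAtext{}~$\PTA$, return some effective representation of the set~$\PVal$ of pairs $(\pval, \expiringBound)$ such that $\valuate{\PTA}{\pval}$ is \weakfullTempOpaqueTextVal{\expiringBound}; in particular, from that representation one must be able to tell whether $\PVal = \emptyset$. Hence a solver for the synthesis problem would immediately yield a decision procedure for the \weakfullTempBoundParametersEmptinessProblem{}.

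Second, I would invoke \cref{thm:opacity:ICECCS:EmptinessPTA}, which states precisely that this emptiness problem is undecidable for general \PTAstext{}. Combining the two observations, no such synthesis procedure can exist, which establishes the claim for the full and the weak variants simultaneously (the statement being phrased uniformly via \weakfullTempBoundParametersSynthesisProblem{}). As an alternative route, one could instead note that \LUPTAstext{} are a subclass of \PTAstext{} and appeal directly to \cref{thm:opacity:ICECCS:synthesisLU}; I would still prefer the self-contained argument via \cref{thm:opacity:ICECCS:EmptinessPTA}, since it requires re-deriving nothing.

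There is essentially no technical obstacle left at this stage: the whole difficulty resides in \cref{thm:opacity:ICECCS:EmptinessPTA} (the reduction from bounded-time reachability-emptiness), which we are entitled to assume. The only point deserving a word of care is the intended meaning of \emph{unsolvable}: as used throughout this paper, it means that no algorithm can output a representation of the solution set from which emptiness is decidable. With that reading the implication ``emptiness undecidable $\Rightarrow$ synthesis unsolvable'' is immediate, so the proof amounts to a single sentence citing \cref{thm:opacity:ICECCS:EmptinessPTA}.
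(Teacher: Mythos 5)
Your proposal is correct and matches the paper's own argument: the corollary is obtained immediately from the undecidability of the \weakfullTempBoundParametersEmptinessProblem{} (\cref{thm:opacity:ICECCS:EmptinessPTA}), since a solver for the synthesis problem would yield a decision procedure for emptiness. Nothing further is needed.
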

\section{Implementation and application to Java programs}\label{section:implementation}

A motivation for the works on \opacityText{} (described in \cref{section:opacity:TA,section:opacity:PTA}) is the analysis of programs.
More precisely, we are interested in deciding whether a program, \eg{} written in Java, is \opaqueText{}, \ie{} whether an attacker is incapable of deducing internal behavior by only looking at its \execTimeText{}.
A second motivation is the \emph{configuration} of internal timing values from a program, \eg{} changing some internal delays, or tuning some \stylecode{Thread.sleep()} statements in the program, so that the program becomes \opaqueText{}---justifying notably the results in \cref{section:opacity:PTA}.

\paragraph{Semi-algorithm and implementation}
Despite the negative theoretical results (notably \cref{prop:opacity:TOSEM:E-ET-pemptiness-PTA}), we addressed in~\cite{ALMS22} the \existentialOpacityParamSynthesisProblem{} for the full class of \PTAstext{}.
Our method may not terminate (due to the undecidability) but, if it does, its result is correct.
Our workflow~\cite{ALMS22} can be summarized as follows.

\begin{enumerate}
	\item We slightly modify the original \PTAtext{} (by adding a Boolean flag~$\bflag$ and a final synchronization action);
	\item We perform \emph{self-composition} (\ie{} parallel composition with a copy of itself) of this modified \PTAtext{}, a method commonly used in security analyses~\cite{TA05,BDR11};
	\item We perform reachability-synthesis (\ie{} the synthesis of parameter valuations for which a given location is reachable) on~$\locfinal$ with contradictory values of~$\bflag$.
\end{enumerate}

Reachability-synthesis is implemented in \imitator{}~\cite{Andre21}, a parametric timed model checker taking as inputs networks of (extensions of) parametric timed automata, and synthesizing parameter valuations for which a number of properties (including reachability) hold.

\paragraph{Analysis of Java programs}
In addition, we are interested in analyzing programs too.
In order to apply our method to the analysis of programs, we need a systematic way of translating a program (\eg{} a Java program) into a \PTAtext{}.
In general, precisely modeling the \execTimeText{}  of a program using models like \TAtext{} is highly non-trivial due to complication of hardware pipelining, caching, OS scheduling, etc.
The readers are referred to the rich literature in, \eg{}~\cite{LYGY10,CB13}.
In~\cite{ALMS22}, we instead make the following simplistic assumption on \execTimeText{} of a program statement and focus on solving the parameter synthesis problem.
We assume that the \execTimeText{}  of a program statement other than \stylecode{Thread.sleep(n)} is within a range $[0,\epsilon]$ where $\epsilon$ is a small integer constant (in milliseconds), whereas the \execTimeText{}  of statement \stylecode{Thread.sleep(n)} is within a range $[n , n+\epsilon]$.
In fact, we choose to keep $\epsilon$ \emph{parametric} to be as general as possible, and to not depend on particular architectures.

Our test subject is a set of benchmark programs from the DARPA Space/Time Analysis for Cybersecurity (STAC) program.\footnote{\url{https://github.com/Apogee-Research/STAC/}}
These programs are being released publicly to facilitate researchers to develop methods and tools for identifying STAC vulnerabilities in the programs.
These programs are simple yet non-trivial, and were built on purpose to highlight vulnerabilities that can be easily missed by existing security analysis tools.
We \emph{manually} translated these programs to \PTAstext{}, following the method described above, and using a number of assumptions (such as collapsing loops with predefined duration).

In addition, we applied our method to a set of \PTAstext{} examples from the literature, notably from~\cite{HRSV02,GMR07,BCLR15,VNN18}.

Experiments reported in~\cite{ALMS22} show that we can decide whether these benchmarks (including the programs) are \fullOpaqueText{} or \existentialOpaqueText{}.
When adding timing parameters, we additionally answer the \existentialOpacityParamSynthesisProblem{}, \ie{} we synthesize the parameter valuations~$\pval$ and the associated \execTimesText{} \execTimes{} such that $\valuate{\PTA}{\pval}$ is \opaqueText{}.
Our method allows to exhibit cases when the system can never be made \opaqueText{}, including by tuning internal delays, or is always \opaqueText{}, or is \opaqueText{} only for some \execTimesText{} and internal timing parameters.

To summarize, the following problems can be answered using our framework:
\begin{itemize}
	\item \existentialOpacityDecisionProblem{}
	\item \fullOpacityDecisionProblem{}
	\item \weakOpacityDecisionProblem{} (not considered in our experiments in~\cite{ALMS22}, but can be easily adapted)
	\item \existentialOpacityParamSynthesisProblem{},  but without guarantee of termination, due to the undecidability of \cref{prop:opacity:TOSEM:E-ET-pemptiness-PTA}.
\end{itemize}

However, our procedure cannot in its current form answer neither the \fullOpacityParamSynthesisProblem{} nor the \weakOpacityParamSynthesisProblem{}.
The expiring opacity problems in \cref{section:opacity:expiring} were not addressed either.

\section{Conclusion and perspectives}\label{section:conclusion}

In this paper, we recalled (and proved a few original) results related to the \opacityText{} in \TAstext{}.
Our notion of \opacityText{} consists in considering an attacker model that can only observe the \execTimeText{} of the system, \ie{} the time from the initial location to a final location.
The secret consists in deciding whether a special private location was visited or not.
In contrast to another notion of opacity with a more powerful attacker able to observe some actions together with their timestamps, which led to the undecidability of the decision problem for \TAstext{}~\cite{Cassez09}, our notion of \opacityText{} yields decidability results for \TAstext{}.
Parameterizing the problems using timing parameters brings undecidability for \PTAstext{}, but the subclass of \LUPTAstext{} gives mildly positive results.

When in addition we consider that the secret has an expiration date, similarly to the concepts introduced in~\cite{AEYM21}, we are able to not only \emph{decide} problems for \TAstext{}, but also to \emph{synthesize} valuations for the expiration date such that the \TAtext{} is \weakTempOpaqueText{}.
However, problems extended with timing parameters all become undecidable.

Recall that we summarized in \cref{tab:resultsRecallTOSEM,tab:resultsRecallICECCS} the decidability results recalled in this paper, with a bold emphasis on the original results of this paper.

We also reported here on an implementation using \imitator{}, which is able to answer non-parametric problems (\existentialOpacityDecisionProblem{}, \fullOpacityDecisionProblem{}, \weakOpacityDecisionProblem{}), and also answering a parameter synthesis problem (\existentialOpacityParamSynthesisProblem) without guarantee of termination for the latter problem.

\paragraph{Perspectives}
The main theoretical future work is the open problems in \cref{tab:resultsRecallICECCS} (mainly the \fullTempBoundComputationProblem{}):
it is unclear whether we can \emph{compute} the exact set of \expirationDatesText{}~$\expiringBound$ for which a \TAtext{} is \fullTempOpaqueTextVal{\expiringBound}.

In terms of synthesis, we have so far no procedure able (whenever it terminates) to answer the \fullOpacityParamSynthesisProblem{} or the \weakOpacityParamSynthesisProblem{}.
Synthesis procedures to answer expiring opacity problems (defined in \cref{section:opacity:expiring}) for \PTAstext{} remain to be designed too.
These procedures cannot be both exact and guaranteed to terminate due to the aforementioned undecidability results.

Exact analysis of opacity for programs, including a more precise modeling of the cache, is also on our agenda, following works such as~\cite{CB13,CJM16}.

A different direction is that of \emph{control}:
can we turn a non-opaque system into an opaque system, by restraining its possible behaviors?
A first step with our notion of \opacityText{} was presented in~\cite{ABLM22}, with only an \emph{untimed} controller.
In addition, in~\cite{GMR07}, Gardey \etal\ propose several definitions of non-interference, related to various notions of simulation:
they consider not only the \emph{verification} problem (``is the system non-interferent?'')\ but also the (timed) \emph{control} problem (``synthesize a controller that will restrict the system in order to enforce non-interference'').
Extending our current line works on \opacityText{} to \emph{timed} controllers remains to be done.

\medskip

\paragraph{Acknowledgments}
We are grateful to Clemens Dubslaff and Maurice~ter~Beek for the opportunity to give an invited talk at TiCSA~2023, and for useful suggestions on this manuscript.

	\newcommand{\CCIS}{Communications in Computer and Information Science}
	\newcommand{\ENTCS}{Electronic Notes in Theoretical Computer Science}
	\newcommand{\FAC}{Formal Aspects of Computing}
	\newcommand{\FundInf}{Fundamenta Informaticae}
	\newcommand{\FMSD}{Formal Methods in System Design}
	\newcommand{\IJFCS}{International Journal of Foundations of Computer Science}
	\newcommand{\IJSSE}{International Journal of Secure Software Engineering}
	\newcommand{\IPL}{Information Processing Letters}
	\newcommand{\JAIR}{Journal of Artificial Intelligence Research}
	\newcommand{\JLAP}{Journal of Logic and Algebraic Programming}
	\newcommand{\JLAMP}{Journal of Logical and Algebraic Methods in Programming} %
	\newcommand{\JLC}{Journal of Logic and Computation}
	\newcommand{\LMCS}{Logical Methods in Computer Science}
	\newcommand{\LNCS}{Lecture Notes in Computer Science}
	\newcommand{\RESS}{Reliability Engineering \& System Safety}
	\newcommand{\RTS}{Real-Time Systems}
	\newcommand{\SCP}{Science of Computer Programming}
	\newcommand{\SOSYM}{Software and Systems Modeling ({SoSyM})}
	\newcommand{\STTT}{International Journal on Software Tools for Technology Transfer}
	\newcommand{\TCS}{Theoretical Computer Science}
	\newcommand{\TOPLAS}{{ACM} Transactions on Programming Languages and Systems ({ToPLAS})}
	\newcommand{\ToPNoC}{Transactions on {P}etri Nets and Other Models of Concurrency}
	\newcommand{\TOSEM}{{ACM} Transactions on Software Engineering and Methodology ({ToSEM})}
	\newcommand{\TSE}{{IEEE} Transactions on Software Engineering}

\bibliographystyle{eptcs}
\bibliography{PTA}

\end{document}